\def\arxiv{arxiv}
\def\elsevier{elsevier}
\def\version{arxiv} % when using this line, this paper is for arxiv
\renewcommand*\nompreamble{\begin{multicols}{2}}
\renewcommand*\nompostamble{\end{multicols}}
\theoremstyle{plain}
\newtheorem{thm}{Theorem}
\newtheorem{prop}{Proposition}
\newtheorem{assumption}{Assumption}
\theoremstyle{definition}
\newtheorem{defn}{Definition}
\theoremstyle{remark}
\newtheorem{rem}{Remark}
\tikzset{%
  >={Latex[width=2mm,length=2mm]},
  % Specifications for style of nodes:
            base/.style = {rectangle, rounded corners, draw=black,
                           minimum width=4cm, minimum height=1cm,
                           text centered, font=\sffamily},
        scenario/.style = {base, fill=blue!30},
          robust/.style = {base, fill=red!30},
         compute/.style = {base, fill=orange!15},
         other/.style = {base, fill=black!15},
         % process/.style = {base, minimum width=2.5cm, fill=orange!15,
         %                   font=\ttfamily},
}
\newcommand{\XG}[1]{{\textcolor{red}{{\textsf{ {\footnotesize (XB's comment: #1)}}}}} }
\newcommand{\FirstRound}[1]{{\textcolor{blue}{#1}}}
\begin{document}

% \ifx\version\arxiv
% % \input{ieee-front-matter}
% \input{elsevier-front-matter}
% \else
% \input{elsevier-front-matter}
% \fi
\begin{frontmatter}

%% Title, authors and addresses

%% use the tnoteref command within \title for footnotes;
%% use the tnotetext command for theassociated footnote;
%% use the fnref command within \author or \address for footnotes;
%% use the fntext command for theassociated footnote;
%% use the corref command within \author for corresponding author footnotes;
%% use the cortext command for theassociated footnote;
%% use the ead command for the email address,
%% and the form \ead[url] for the home page:
%% \title{Title\tnoteref{label1}}
%% \tnotetext[label1]{}
%% \author{Name\corref{cor1}\fnref{label2}}
%% \ead{email address}
%% \ead[url]{home page}
%% \fntext[label2]{}
%% \cortext[cor1]{}
%% \affiliation{organization={},
%%             addressline={},
%%             city={},
%%             postcode={},
%%             state={},
%%             country={}}
%% \fntext[label3]{}

\title{Two-stage Robust Energy Storage Planning with Probabilistic Guarantees: A Data-driven Approach}

\author[XJTU]{Chao Yan\fnref{XJTU-Funding}\fnref{Contribution}}
\ead{yanchao19911224@stu.xjtu.edu.cn}
\author[TAMU]{Xinbo Geng\fnref{TAMU-Funding}\fnref{Contribution}}
\ead{xbgeng@tamu.edu}
\author[XJTU]{Zhaohong Bie\fnref{XJTU-Funding}}
\ead{zhbie@xjtu.edu.cn}
\author[TAMU]{Le Xie\fnref{TAMU-Funding}\corref{Cor}}
\ead{le.xie@tamu.edu.}

% see intructions below about the usage of commands below 
% https://www.elsevier.com/__data/assets/pdf_file/0008/56843/elsdoc-1.pdf
\fntext[Contribution]{C. Yan and X. Geng contributed equally to this paper.}
\fntext[XJTU-Funding]{C. Yan and Z. Bie's work is supported by China NSF U1965103.}
\fntext[TAMU-Funding]{X. Geng and L. Xie's work is supported by US NSF CCF-1934904.}
\cortext[Cor]{Corresponding author.}
\address[XJTU]{Department of Electrical  Engineering, Xi'an Jiaotong University, Xi'an, China, 710049.}
\address[TAMU]{Department of Electrical and Computer Engineering, Texas A\&M University, College Station, TX, United States, 77843.}

% \affiliation{organization={},%Department and Organization
%             addressline={}, 
%             city={},
%             postcode={}, 
%             state={},
%             country={}}

% \input{abstract}

\begin{abstract}
In conventional planning decision making, shorter-term (e.g., hourly) variations are not explicitly accounted for. However, given the deepening penetration of variable resources, it is becoming imperative to consider such shorter-term variation in the longer-term planning exercise. This paper addresses a central challenge of jointly considering such shorter-term and longer-term uncertainties in power system planning with increasing penetration of renewable and storage resources.  
By leveraging the abundant operational observation data, we propose a  scenario-based robust planning framework that provides rigorous guarantees on the future operation risk of planning decisions considering a broad range of operational conditions, such as  renewable generation fluctuations and  load variations.
By connecting two-stage robust optimization with the scenario approach theory, we show that with a carefully chosen number of  scenarios, the operational risk level of the robust solution can be adaptive to the risk preference set by planners. The theoretical guarantees hold true for any distributions, and the proposed approach is scalable towards real-world power grids. Furthermore, the column-and-constraint generation algorithm is used to solve the two-stage robust planning problem and tighten theoretical guarantees. We substantiate this framework through a planning problem of energy storage in a power grid with deep renewable penetration.  
Case studies are performed on large-scale test systems (modified IEEE 118-bus system) to illustrate the theoretical bounds as well as the scalability of the proposed algorithm. 
\end{abstract}

% %%Research highlights
% \begin{highlights}
% \item A data-driven robust power system planning framework.
% \item We connect robust optimization with the scenario approach to provide distribution-free guarantees on future operational costs and risks.
% \item We connect the column-and-constraint algorithm with the scenario approach to compute theoretical guarantees while finding optimal solutions to robust planning problem.
% \item A thorough case study based on real-world datasets demonstrates the scalabiliy of the proposed framework.
% \end{highlights}

\begin{keyword}
Power system planning \sep energy storage \sep robust optimization \sep the scenario approach \sep column-and-constraint generation \sep short-term uncertainty \sep operation risk.
%% keywords here, in the form: keyword \sep keyword

%% PACS codes here, in the form: \PACS code \sep code

%% MSC codes here, in the form: \MSC code \sep code
%% or \MSC[2008] code \sep code (2000 is the default)

\end{keyword}

\end{frontmatter}

\section{Introduction}
\FirstRound{Power system planning refers to a decision making process typically involves a time span of multiple years, and is uncertain by nature \cite{planning_book}. Conventionally, power system planning mainly incorporates \emph{longer-term} uncertainties such as load growth and fuel prices.
With the rapid growth of renewable resources, \emph{shorter-term} uncertainties such as renewable fluctuations and load variations become imperative in the longer-term planning exercise \cite{planning_review_2017}.
If not accounted for, the short-term uncertainties will likely render an overly expensive or unreliable planning outcome in the long-term.}

% On the other hand, the flexible operation of storage systems are ideal solutions to mitigate the shorter-term uncertainties from renewables. 
\FirstRound{To facilitate the integration of rapidly growing renewable resources, energy storage is being deployed at an accelerated pace in  power systems.
From 2014 to 2019, the installed capacity of energy storage increased by 35.7\% from 24.6 GW to 33.4 GW in the United States \cite{DOE-storage-2013,us_department_of_energy_doe_nodate}. 
According to China Energy Storage Alliance (CNESA), 32.3 GW of energy storage has been installed in China as of 2019.
Furthermore, Wood Mackenzie recently predicted that energy storage is poised for a decade-defining boom, with capacity set to grow by almost 33\% worldwide every year in the 2020s to reach around 741 GWh by 2030 \cite{WM-2020}.
Consequently, the optimal siting and sizing of energy storage, i.e., storage planning, becomes pivotal to build carbon-neutral and reliable  power systems.}

To consider shorter-term uncertainties in longer-term power system planning, stochastic optimization (SO) and robust optimization (RO) are the most commonly adopted approaches.
SO is built on probabilistic modeling of uncertainties. Since accurate probabilistic models are often unavailable or expensive to obtain, a massive number of samples are needed \cite{fully_renewable_energy}. This, however, will significantly increase computational burden \cite{trans_sequential_planning}.
RO, on the other hand, relies on a set-based (often deterministic) modeling of uncertainties. \emph{Robust} planning models uncertainties by a pre-defined uncertainty set, which describes a range. Therefore, reducing the computation burden of planning problem. Common choices of uncertainty sets include intervals \cite{interval_minimum_load} and polyhedral uncertainty sets \cite{robust_transmission_planning} . However, this choice of these uncertainties is often scrutinized for returning overly conservative solutions. How to construct the reasonable uncertainty set of these short-term uncertainties such as load variation and renewable fluctuation in the long-term planning decision is the core difficulty of current robust power system planning (RSP) problems.  

\FirstRound{Some recent studies have been proposed to tackle this challenge in RSP, most of them rely on historical data to construct the uncertainty set.
For instances, \cite{stochastic_robust_planning} extracted representative scenarios from historical operation data (HOD) to model short-term operation uncertainties; \cite{adaptive_robust_2020} constructed adaptive uncertainty sets to consider the risk from wind uncertainties in RSP. In most of the proposed approaches (e.g., \cite{adaptive_robust_2020}), short-term uncertainties are represented by a single operating point, thus temporal correlations among uncertainties are neglected.
Since the principal value of energy storage devices come from offering temporal flexibilities, it is pivotal to model temporal correlations in the context of storage planning, 
Reference \cite{robust_storage_in} constructed uncertainty sets as the convex hull of historical wind and load (time-varying) profiles. Similarly, \cite{operational_robust_planning} built polyhedral uncertainty sets for representative days by clustering HOD thus captures the temporal dynamics of uncertainties.}

\FirstRound{In this paper, we demonstrate that a simple construction of uncertainty sets, i.e., the collection of a carefully chosen number of i.i.d. scenarios, possesses many desirable features and advantages. By tuning a risk parameter (violation probability $\overline{\epsilon}$) and the associated number of scenarios ($K$), planning decisions can be adapted to risk preferences. Meanwhile, the scenarios can be daily operating conditions considering the temporal couplings. This simple construction of uncertainty set is closely related with the scenario approach theory \cite{campi_exact_2008,calafiore_random_2010,campi_wait-and-judge_2016,campi_general_2018}. The scenario approach has been successfully applied to many power system problems, e.g., resource adequacy and security assessment \cite{reserve_security_framework}, economic dispatch \cite{scenario_dispatch}, demand response scheduling \cite{scenario_demand_response}, and unit commitment \cite{scenario_unit,geng_computing_2019}.}

The scenario approach theory is a classical data-driven mathematical program theory with rigorous probabilistic guarantees \cite{campi_exact_2008,calafiore_random_2010,campi_wait-and-judge_2016,campi_general_2018} and also has been applied to power system field \cite{reserve_security_framework,scenario_dispatch,scenario_unit}. In this paper, it is developed with the two-stage robust program to address the above short-term uncertainty modeling issues in the power system planning. It is the first paper applying the scenario approach to two-stage robust power system planning. Specifically, we show that a simple construction of uncertainty sets, i.e., the collection of a carefully chosen number of i.i.d. operational scenarios, could resolve the issue of conservative solutions of the robust optimization . By tuning a risk parameter (violation probability $\epsilon$) and the associated number of scenarios ($K$), planning decisions can be adapted to risk preferences. Meanwhile, the scenarios can be daily operational conditions considering temporal operational conditions. Finally, we addressed two very important research problems that have never been successfully resolved in the existing research: (i) Why and how can we  leverage the historical operational data to represent the operational risk in the long-term power system planning problem based on a rigorous mathematical theory? (ii) How can we quantitatively control the operational risk considered in the robust power system planning decision through simply adjusting the data-based uncertainty set of short-term uncertainties, and make a trade-off between the robustness, investment costs and operational risks for the planning of storage system needing the consideration of temporal operational conditions?   After resolving these problems, we makes the following contributions:
 
(1) We propose a novel two-stage robust storage planning framework to facilitate the integration of renewables based on the scenario approach. Distribution-free theoretical guarantees are provided on the operational risks of the planning solutions of convex and non-convex robust storage planning models for future short-term uncertainties. These operational risks include  (a) future operation cost risk; and (b) the load loss risk.

(2) Although the scenario approach is commonly used for single-stage decision making, we establish the connection between two-stage RO and the scenario approach theory; the theoretical guarantees on operational risk are obtained via this connection by examining the cardinality of invariant sets.  We further proposes to use C\&CG algorithm to find invariant set numerically thus improve the theoretical risk guarantees.

(3) We leverage the theoretical risk guarantees and randomized property from data to make a balance between the robustness, investment costs and operational risk in the robust storage planning problem.

(4) The proposed storage planning approach is tested on the IEEE 118-bus system with realistic wind and load data obtained from Electric Reliability Council of Texas (ERCOT). Numerical results show that the cardinality of invariant sets of convex or non-convex two-stage robust storage planning (RSP)  problems is always small, regardless of problem size or power system scale. Consequently, we can achieve the same guarantee on future risk using a moderate number of scenarios, making the proposed approach computationally scalable.

\FirstRound{The proposed RSP framework in this paper lies at the intersection among four areas (Figure \ref{fig:overview}) and makes unique contributions in both methodology and domain application perspectives. The proposed framework is computationally efficient, exploits the value from pervasive operation data, and provides generic and rigorous guarantees on planning solutions.}

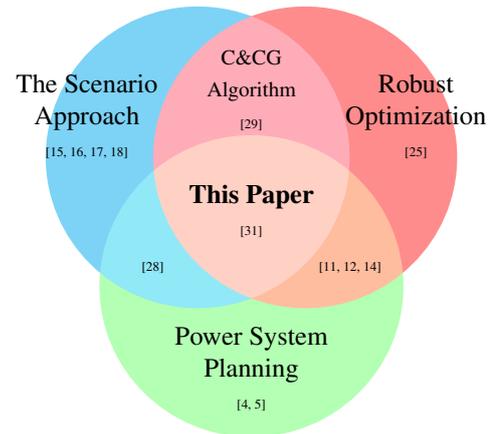
\begin{figure}[htbp]
\centering
\begin{tikzpicture}
  \begin{scope}[blend group = soft light]
    \fill[red!45!white]   (45:1.0) circle (2);
    \fill[cyan!45!white] (135:1.0) circle (2);
    \fill[green!30!white]  (270:1.0) circle (2);
  \end{scope}
  \node at (30:2.5) [align=center]  {Robust\\Optimization\\\tiny{\cite{ben-tal_robust_2009}}};
  \node at (150:2.5) [align=center] {The Scenario\\Approach\\\tiny{\cite{campi_exact_2008,calafiore_random_2010,campi_wait-and-judge_2016,campi_general_2018}}};
  \node at (270:2.1) [align=center]  {Power System\\Planning\\\tiny{\cite{planning_book,planning_review_2017}}};
  \node [align=center] {\textbf{This Paper}\\\tiny{\cite{yan_two-stage_2021}}};
  \node at (210:1.5) [font=\tiny, align=center]  {\cite{yan_scenario-based_2019}};
  \node at (330:1.5) [font=\tiny, align=center]  {\cite{robust_transmission_planning,stochastic_robust_planning,adaptive_robust_2020}};
  \node at (90:1.6) [font=, align=center]  { {\footnotesize C\&CG}\\{\footnotesize Algorithm} \\{\tiny \cite{zeng2013solving}} };
\end{tikzpicture}
\caption{This paper connects power system planning, the scenario approach, robust optimization, and C\&CG algorithm. A more detailed review is in \cite{geng_data-driven_2019-2}.}
\label{fig:overview}
\end{figure}

The notations in this paper are standard. All matrices and vectors are in the real field $\mathbb{R}$.
% We use $\bm{1}$ to represent an all-one vector of appropriate size, and $\bm{0}$ to denote an all-zero vector or the origin.
Matrices and vectors are in bold fonts, e.g., $\bm{A}$ and $\bm{b}$. The transpose of a vector $\bm{a}$ is $\bm{a}^\intercal$, and the $j$th entry of vector $\bm{a}$ is $a_j$. 
Sets are in calligraphy fonts, e.g. the set of transmission lines $\mathcal{L}$.
\FirstRound{A set consisting of $K$ elements $\bm{\delta}^{(k)},\cdots,\bm{\delta}^{(K)}$ is denoted by $\{\bm{\delta}^{(k)},\cdots,\bm{\delta}^{(K)}\}$ or $\{\bm{\delta}^{(k)}\}_{k=1}^{|\mathcal{K}|}$ in short.}
The cardinality of a set $\mathcal{S}$ is $|\mathcal{S}|$.
Related variables are represented by the same letter but distinct superscripts. For instance, $p_{l,t}^{\mathcal{L}}$ denotes the line flows, and $p_{i,n,t}^{\mathcal{G}}$ is the generation output.
The upper and lower bounds on variable $\bm{a}$ are denoted by $\overline{\bm{a}}$ and $\underline{\bm{a}}$, respectively.

\FirstRound{The remainder of this paper is organized as follows. 
% Section \ref{sec:background} first introduces two-stage robust optimization and C\&CG algorithm, then connects them with the scenario approach theory and provides key theoretical results.
Section \ref{sec:storage_planning} provides complete details on the deterministic energy storage planning problem.
Section \ref{sec:robust_storage_planning} introduces two-stage robust optimization and four robust storage planning formulations being studied in this paper. 
Main theoretical results are presented in Section \ref{sec:theoretical_analysis}. A-priori and a-posteriori probabilistic guarantees are provided for optimal solutions to RSP problems in Sections \ref{sub:prior_guarantees_RSP}-\ref{sub:posterior_guarantees_RSP}. Section \ref{sub:column_and_constraint_generation_algorithm} shows that the C\&CG algorithm is able to finding invariant sets while solving RSP problems.
Section \ref{sec:case_study} presents numerical results. Concluding remarks and future work are in Section \ref{sec:concluding_remarks}.
\ifx\version\arxiv
All proofs and detailed algorithms are provided in the appendices.
\else
All proofs and detailed algorithms are provided in the full-length version of this paper on arXiv \cite{yan_two-stage_2021}.
\fi
}

% Additional discussions and concluding remarks are in Section \ref{sec:discussions} and Section \ref{sec:concluding_remarks}.
\section{Storage Planning} % (fold)
\label{sec:storage_planning}
% \ifx\version\elsevier % for arxiv version, single column
Energy storage is a key resource to facilitate the integration of renewable energy resources by providing operational flexibility and ancillary services \cite{xu_scalable_2017}. As of 2019, PJM has deployed approximately 300 MW of energy storage \cite{hsia_energy_2019}; about 20 MW grid-scale battery-storage projects have been online in ISO New England since 2015, and nearly 2300 MW of grid-scale stand-alone energy-storage projects are in the queue to be interconnected \cite{iso_new_england_2020_2020}.
Appropriate placements of energy storage systems could maximize the value of storage on the secure and economic operation of power systems. 
Storage planning, decising the optimal siting and sizing of energy storage\cite{storage_size_site,storage_technology}, becomes a topic of increasing importance in power system planning. Similar with other planning problems, various approaches have been proposed to deal with uncertainties, e.g., stochastic storage planning \cite{stochastic_storage_planning}, robust optimization planning \cite{robust_storage_in}, and chance-constrained planning \cite{chance_storage_planning}. As storage is a special type of resources providing temporal flexibility, the consideration of temporal operational dynamics of short-term uncertainties is crucial in the planning exercise to effectively reduce future risk.

\subsection{Nomenclature} % (fold)
\label{sub:nomenclature}
For notation simplicity, we use $(\cdots)\text{s}$ to denote the collection of variables, e.g., $\{(E_{n}, P_{n}, z_n)\}_{n \in \mathcal{N}}$ is denoted by $(E_{n}, P_{n}, z_n)\text{s}$. Unless specified, all decision variables are continuous. For simplicity, we assume that storage systems can be installed at every bus, i.e., $\mathcal{S} = \mathcal{N}$, thus all variables related with storage systems is indexed by the bus index $n$, e.g., $e_{n,t}^{\text{SOC}}$.

\ifx\version\arxiv
\begin{table*}[!t]
   \begin{framed}
     \printnomenclature
   \end{framed}
\end{table*}
% Indices and sets
\nomenclature{$i,j,k$}{indices for general purposes;}
\nomenclature{$s,t,l,n$}{ indices for storages, time, lines and nodes}
\nomenclature{$\mathcal{G}(n)$}{ the set of generators at bus $n$;}
\nomenclature{$\mathcal{T}$}{ the set of time;}
\nomenclature{$\mathcal{N}$}{ the set of buses;}
\nomenclature{$\mathcal{L}$}{ the set of transmission lines;}
\nomenclature{$\mathcal{S}$}{ potential locations of storage systems;}
\nomenclature{$o(l), r(l)$}{ sending and receiving nodes of a line, i.e., $l = (o(l),r(l))$;}
\nomenclature{$\mathcal{W}$}{ the set of wind farms.}
% First-stage variables
\nomenclature{$E_{n}$}{ the energy capacity (MWh) of the storage system;}
\nomenclature{$P_{n}$}{ the power capacity (MW) of the storage system;}
\nomenclature{$z_{n}$}{ integer variables, number of storage units.}
% Second-stage Variables
\nomenclature{$p_{n,t}^{\text{ch}}, p_{n,t}^{\text{dis}}$}{ the charging and discharging power (MW) of the storage system at node $n$;}
\nomenclature{$p_{i,n,t}^{\mathcal{G}}$}{ generator output;}
\nomenclature{$p_{l,t}^{\mathcal{L}}$}{ line power flow;}
\nomenclature{$\theta_{n,t}$}{ nodal voltage angle;}
\nomenclature{$e_{n,t}^{\text{SOC}}$}{ state of charge (SOC) of a storage;}
\nomenclature{$p_{n,t}^{\text{shed}}$}{ nodal load curtailment/shedding.}
% Parameters
\nomenclature{$C_{n}^{\text{E}}$}{ annualized storage energy investment cost;}
\nomenclature{$C_{n}^{\text{P}}$}{ annualized storage power investment cost;}
\nomenclature{$c_i^{\mathcal{G}}$}{ hourly incremental generation cost;}
\nomenclature{$c_{n}^{\text{dis}}, c_{n}^{\text{ch}}$}{ hourly incremental discharging/charging cost, related with the degradation cost of storage units;}
\nomenclature{$c_n^{\text{shed}}$}{ cost of load curtailment;}
\nomenclature{$\underline{p_i^{\mathcal{G}}}, \overline{p_i^{\mathcal{G}}}$}{ generator output lower/upper limit;}
\nomenclature{$\underline{p_l^{\mathcal{L}}}, \overline{p_l^{\mathcal{L}}}$}{ transmission power flow lower/upper limit;}
\nomenclature{$\eta_{n}^{\text{ch}}, \eta_{n}^{\text{dis}}$}{ storage charge/discharge efficiency;}
\nomenclature{$C^{\text{Budget}}$}{ storage investment budget;}
\nomenclature{$x_l$}{ transmission line reactance;}
\nomenclature{$\overline{\rho}, \underline{\rho}$}{ maximum/minimum power/energy ratio;}
\nomenclature{$\overline{R}_{i}, \underline{R}_{i}$}{ generator ramp up/down limit;}
\nomenclature{$K_d$}{ the weight of operational cost;}
\nomenclature{$v_{n,t}$}{ binary variable, indicating charging/discharging status of storage devices;}
\nomenclature{$q^{E}$}{ quantized energy capacity;}
\nomenclature{$q^{P}$}{ quantized power capacity;}
\nomenclature{$\overline{w_{n}}$}{ maximum wind capacity at node $n$;}
\nomenclature{$\overline{d_{n}}$}{  peak load at node $n$.}
% Uncertainties
\nomenclature{$\bm{\delta}_{s}$}{ uncertainties in storage planning, $\bm{\delta_{s}} := (\alpha^w_{n,t},\alpha^d_{n,t})\text{s}$;}
\nomenclature{$\alpha^w_{n,t}$}{ wind capacity factor at bus $n$ at time $t$;}
\nomenclature{$\alpha^d_{n,t}$}{ load factor at bus $n$ at time $t$.}

\else
A complete list of variables is in the full-length version of this paper \cite{yan_two-stage_2021}.
\fi
% section nomenclature (end)

\subsection{Deterministic Storage Planning} % (fold)
\label{sub:deterministic_storage_planning}
We first introduce the deterministic storage planning problem \eqref{opt:det-SP}, which is a two-stage optimization problem.
\begin{align}
\min_{ \bm{x} \in \Xi_{1}}~ C(\bm{x}) + K_d \min_{ \bm{y} \in \Xi_{2}(\bm{x}; \bm{\delta})} c(\bm{y})  \label{opt:det-SP} 
\end{align}
\subsubsection{Objective} % (fold)
\label{ssub:objective_DSP}
There are two common storage planning formulations: cost-minimizing formulation and curtailment-minimizing formulation. They only differ in objectives.
\paragraph{Cost-minimizing Storage Planning}
The objective of the cost-minimizing formulation is \eqref{opt-det:cost}. The \emph{first stage} determines the optimal energy and power ratings $\bm{x} = (E_{n}, P_{n}, z_{n})$ of the storage units at bus $n$ by minimizing total investment cost \eqref{opt-det:planning-obj-cost}. 
The \emph{second stage} optimizes the daily operation cost \eqref{opt-det:operation-obj-2} of the planned storages $(p_{n,t}^{\text{ch}},p_{n,t}^{\text{dis}})$, generation $p_{i,n,t}^{\mathcal{G}}$, and load curtailment $p_{n,t}^\text{shed}$, i.e., $\bm{y} = (p_{n,t}^{\text{ch}},p_{n,t}^{\text{dis}},p_{i,n,t}^{\mathcal{G}},p_{n,t}^\text{shed},v_{n,t})$. 

% \ifx\version\arxiv 
% \begin{subequations}
% \label{opt-det:cost}
% \begin{align}
% & C(\bm{x}) := \sum_{n\in \mathcal{N}} (C_{n}^{\text{E}} E_{n} +  C_{n}^{\text{P}} P_{n}) \label{opt-det:planning-obj-cost} \\
% & c(\bm{y}) := \sum_{t\in \mathcal{T}} \sum_{n \in \mathcal{N}}  \sum_{i\in \mathcal{G}(n)} c_{i,n}^{\mathcal{G}} p_{i,n,t}^{\mathcal{G}} + \sum_{t\in \mathcal{T}}  \sum_{n\in \mathcal{N}} (c_{n}^\text{ch} p_{n,t}^{\text{ch}}  +c_{n}^\text{dis} p_{n,t}^{\text{dis}}) +\sum_{t\in \mathcal{T}}  \sum_{n\in \mathcal{N}} c_{n}^\text{shed} p_{n,t}^\text{shed} \label{opt-det:operation-obj-2}
% \end{align}
% \end{subequations}
% \else
\begin{subequations}
\label{opt-det:cost}
\begin{align}
& C(\bm{x}) := \sum_{n\in \mathcal{N}} (C_{n}^{\text{E}} E_{n} +  C_{n}^{\text{P}} P_{n}) \label{opt-det:planning-obj-cost}, \\
& c(\bm{y}) := \sum_{t\in \mathcal{T}} \sum_{n \in \mathcal{N}}  \sum_{i\in \mathcal{G}(n)} c_{i,n}^{\mathcal{G}} p_{i,n,t}^{\mathcal{G}} + \sum_{t\in \mathcal{T}} \sum_{n\in \mathcal{N}} (c_{n}^\text{ch} p_{n,t}^{\text{ch}} \nonumber \\
&\hspace{3cm}  +c_{n}^\text{dis} p_{n,t}^{\text{dis}}) +\sum_{t\in \mathcal{T}}  \sum_{n\in \mathcal{N}} c_{n}^\text{shed} p_{n,t}^\text{shed}. \label{opt-det:operation-obj-2}
\end{align}
\end{subequations}
% \fi

The first term $C(\bm{x})$ in the objective represents the annualized investment of storage systems. Cost coefficients for the energy and power rating investments are denoted by $C_{n}^{\text{E}}$ (unit:\$ / MWh) and $C_{n}^{\text{P}}$ (unit:\$ / MW), which are converted to present values\footnote{Let $\sigma$ be the annual interest and $\Gamma$ be the investment period, e.g. 20 years. $C_{n}^{\text{E}}$ and $C_{n}^{\text{P}}$ are computed as \eqref{eqn:cost-calculation}, where $C_n^{E,\text{total}}$ and $C_n^{P,\text{total}}$ represent the total storage energy and power annual investment cost over $\Gamma$ years.
\begin{eqnarray}
C_{n}^{\text{E}} = C_{n}^{\text{E},\text{total}} \frac{\sigma (1+\sigma)^\Gamma}{(1+\sigma)^\Gamma-1},~C_{n}^{\text{P}} = C_{n}^{\text{P},\text{total}} \frac{\sigma (1+\sigma)^\Gamma}{(1+\sigma)^\Gamma-1}.  \label{eqn:cost-calculation}
\end{eqnarray}
}.
For conceptual simplicity, the second stage uses a typical day to represent the operation of a whole year, i.e. $K_d = 365$. It is straightforward to extend towards employing $D$ typical days to represent the annual operation cost $\sum_{i=1}^{D} K_i c_i(\bm{y})$.

\FirstRound{This paper studies large-scale energy storage investment at the transmission level, and assumes that storage investment cost (including the land and construction cost), scales linearly with storage power and energy ratings.
The marginal production cost coefficients of energy storage (e.g., from battery efficiency loss or degradation) $c_{n}^{\text{ch}}$ and $c_{n}^{\text{dis}}$ are assumed to be constants.
Note that different storage technologies usually have different values for those cost parameters.
Compressed air energy storage (CAES) has high investment costs for power ratings but low investment cost for capacity, the marginal production cost is negligible.
For lithium batteries (LiBES) batteries, their degradation largely depends on daily operations, the aging of LiBES can be modeled as marginal production cost.
Without loss of generality, this paper assumes that battery degradation leads to constant marginal costs in each charging and discharging cycle.}

\paragraph{Curtailment-minimizing Storage Planning}
Equation \eqref{opt-det:load-curtailment} is the objective function of the curtailment-minimizing storage planning formulation, which minimizes total load curtailment in the worst-case scenario by investing on storages in the first stage. More specifically, there will be no first-stage cost, i.e., $C(\bm{x})=0$ in  \eqref{opt:det-SP}. The storage planning formulation using \eqref{opt-det:load-curtailment} is termed curtailment-minimizing storage planning.
\begin{align}
c(\bm{y}) := \sum_{t\in \mathcal{T}}  \sum_{n\in \mathcal{N}}  p_{n,t}^\text{shed}.
\label{opt-det:load-curtailment}
\end{align}

\subsubsection{First-stage Constraints} % (fold)
Constraints for the \emph{first stage} $\bm{x} \in \Xi_1$ is defined below:
\begin{equation*}
\Xi_1 := \Big \{ (E_{n}, P_{n}, z_n)\text{s}~\text{that satisfy constraints}~\eqref{discrete-investment:variable}\eqref{opt-det:constr-first-stage}\eqref{opt-det:constr-budget} \Big \}
% \label{opt-det:constraints-stage1}
\end{equation*}
We consider the case where the capacities of storage units are quantized (instead of being continuous), parameters $q^{\text{E}}$ and $q^{\text{P}}$ denote the smallest quantized energy and power ratings for one energy storage unit.
\begin{equation}
 z_{n} \in \{0,1,2,\cdots,\overline{z}_{n}\},~n\in \mathcal{N}.  \label{discrete-investment:variable}
\end{equation}
Integer variable $z_n$ is the number of storage units to be installed at bus $n$, which is constrained by the total quantity limit of storage investment \eqref{q_limit}. 
As computed in \eqref{energy-discrete} and \eqref{power-discrete}, $E_n$ and $P_n$ represent the energy and power capacities of the storage system at node $n$. For some energy storage,  the size of power and energy rating of storage unit is fixed due to physical limitations.  Their investments need to be modeled by \eqref{opt-det:constr-first-stage}. For other energy storage  without physical limitations on size, their investments can be directly represented by $E_{n}$, $P_{n}$.   
% Constraint \eqref{opt-det:constr-ratio} is the power-capacity ratio constraints on storage units.
\begin{subequations}
\label{opt-det:constr-first-stage}
\begin{align}
& \sum_{n \in \mathcal{N}} z_{n} \le \overline{z} \label{q_limit},\\
& q^{\text{E}} z_{n} = E_{n},~n\in\mathcal{N} \label{energy-discrete},\\
& q^{\text{P}} z_{n} = P_{n},~n\in\mathcal{N} \label{power-discrete}. 
\end{align}
\end{subequations}
Constraint \eqref{opt-det:constr-budget} limits total investment within budget.
\begin{equation}
\sum_{n\in \mathcal{N}} (C_{n}^{\text{E}} E_{n} + C_{n}^{\text{P}} P_{n}) \le C^{\text{budget}}. \label{opt-det:constr-budget} 
\end{equation}

\subsubsection{Second-stage Constraints} % (fold)
\label{ssub:second_stage_constraints}
Constraints for the \emph{second stage} $\bm{y} \in \Xi_2(\bm{x}; \bm{\delta})$ are the set of daily operational constraints. Note the feasible region $\Xi_2$ depends on the first-stage decision $\bm{x} = (E_{n}, P_{n})$ and uncertainties $\bm{\delta} = (\alpha^w_{n,t},\alpha^d_{n,t})$.
% \ifx\version\arxiv % for arxiv version, single column
% \begin{equation}
% \Xi_2(E_{n}, P_{n}; \bm{\delta}) := \Big \{(p_{n,t}^{\text{ch}},p_{n,t}^{\text{dis}},p_{i,n,t}^{\mathcal{G}},p_{n,t}^\text{shed}, v_{n,t})\text{s}~
% \text{that satisfy constraints}~\eqref{opt-det:operation-constraints}  \eqref{opt-det:storage-constraints}\eqref{eqn:not_the_same_time_storage} \Big \}.
% % \label{opt-det:constraints-stage2}
% \end{equation}
% \else
\begin{multline}
\Xi_2(E_{n}, P_{n}; \bm{\delta}) := \Big \{(p_{n,t}^{\text{ch}},p_{n,t}^{\text{dis}},p_{i,n,t}^{\mathcal{G}},p_{n,t}^\text{shed}, v_{n,t})\text{s}~\\
\text{that satisfy constraints}~\eqref{opt-det:operation-constraints}  \eqref{opt-det:storage-constraints}\eqref{eqn:not_the_same_time_storage} \Big \}.
% \label{opt-det:constraints-stage2}
\end{multline}

\FirstRound{The inter-temporal operations of energy storage are \eqref{opt-det:storage-constraints} and \eqref{eqn:not_the_same_time_storage}, e.g., the state of charge (SOC) update from $t-1$ to $t$ in \eqref{opt-det:constr-soc-evolving}. Since \eqref{opt-det:storage-constraints} focuses on short-term operations, the degradation of storage systems in each charging/discharging cycle is neglected.
The charging and discharging power are limited by \eqref{opt-det:constr-charging}-\eqref{opt-det:constr-discharging}.  We assume that the power limits $P_{n}$ for charging and discharging are the same.} SOC is limited to be within energy capacity \eqref{opt-det:constr-soc}.
% The load curtailment variable $p_{n,t}^{d}$ ensures the feasibility of the lower-level operation problem pertaining to any uncertainty realization in \eqref{opt-det:uncertainty set}). It also ensures the relatively complete recourse assumption.
\begin{subequations}
\label{opt-det:storage-constraints}
\begin{align}
& e_{n,t}^{\text{SOC}} - e_{n,t-1}^{\text{SOC}} = p_{n,t}^{\text{ch}}\eta_{n}^{\text{ch}} - p_{n,t}^{\text{dis}}/\eta_{n}^{\text{dis}}, \label{opt-det:constr-soc-evolving} \\
& 0 \le p_{n,t}^{\text{ch}} \le P_{n} \cdot v_{n,t},\label{opt-det:constr-charging} \\
& 0 \le p_{n,t}^{\text{dis}} \le P_{n}\cdot (1-v_{n,t}), \label{opt-det:constr-discharging}  \\
& 0 \le e_{n,t}^{\text{SOC}} \le E_{n},\label{opt-det:constr-soc} \\
& \hspace{3cm} ~n \in \mathcal{N},~t \in \mathcal{T}. \nonumber
\end{align}
\end{subequations}

\FirstRound{Binary variables $v_{n,t}$ are introduced to avoid charging and discharging at the same time. $v_{n,t}=1$ indicates the storage system at node $n$ is charging at $t$; $v_{n,t}=0$ when discharging: }
\begin{align}
\label{eqn:not_the_same_time_storage}
v_{n,t} \in \{0,1\},~n \in \mathcal{N},~t \in \mathcal{T}.
\end{align}

\FirstRound{For the storage system like CAES, it can charge and discharge at the same time, the binary variable $v_{n,t}$ can be relaxed to be continuous.}  

\FirstRound{This paper studies the storage investment in transmission system.
Constraint \eqref{opt-det:operation-constraints} models the secure operation of the transmission power system.
\eqref{opt-det:constr-balance} is the power balance at every node;
\eqref{opt-det:constr-generation}-\eqref{opt-det:constr-ramp} are the capacity and ramping limits of generators.
Because this paper studies storage planning at the transmission level, DC power flow equations \eqref{opt-det:DC-power-flow} are used. All transmission lines, transformers and phase shifters are modeled with a common branch model, consisting of a standard (AC) $\Pi$ transmission line model. DC power flow equations are obtained via linearizing the AC transmission line model.
Line flow limits are in \eqref{opt-det:constr-line}. Constraints on load shedding are \eqref{opt-det:constr-cut}.
The actual wind generation in \eqref{opt-det:operation-constraints} is the product of wind capacity $\overline{w_{n}}$ and the wind capacity factor $\alpha^w_{n,t}$. The actual load is the product of the peak load $\overline{d_{n}}$ and load factor $\alpha^d_{n,t}$.} 
\begin{subequations}
\label{opt-det:operation-constraints}
\begin{align}
& \sum_{i \in \mathcal{G}(n)} p_{i,n,t}^{\mathcal{G}} + p_{n,t}^{\text{dis}} - p_{n,t}^{\text{ch}} + \alpha^w_{n,t}\overline{w_{n}} + \sum_{l|n\in r(l)} p_{l,t}^{f} - \nonumber \\
& \hspace{3cm} \sum_{l|n \in o(l)} p_{l,t}^{f}  + p_{n,t}^{\text{shed}} =\alpha^d_{n,t} \overline{d_{n}},\label{opt-det:constr-balance}  \\
& \underline{p_i^{\mathcal{G}}} \le p_{i,n,t}^{\mathcal{G}} \le \overline{p_i^{\mathcal{G}}}, \label{opt-det:constr-generation}\\
& -\underline{R}_{i} \le p_{i,n,t}^{\mathcal{G}} - p_{i,n,t-1}^{\mathcal{G}} \le \overline{R}_{i}, \label{opt-det:constr-ramp}\\
& \underline{p_l^{\mathcal{L}}} \le p_{l,t}^{\mathcal{L}} \le \overline{p_l^{\mathcal{L}}}, \label{opt-det:constr-line} \\
& 0 \le p_{n,t}^{\text{shed}} \le \alpha^d_{n,t}\overline{d_{n}}, \label{opt-det:constr-cut}  \\
& p_{l,t}^{\mathcal{L}} = \frac{1}{x_l} (\theta_{o(l),t}-\theta_{r(l),t}), \label{opt-det:DC-power-flow} \\
& \hspace{2cm} i \in \mathcal{G}(n),~n \in \mathcal{N},~ l \in \mathcal{L},~ t \in \mathcal{T}. \nonumber
\end{align}
\end{subequations}
% \fi

% subsection deterministic_{n}torage_planning (end)

\section{Background} % (fold)
\label{sec:background}
% Section \ref{sub:two_stage_robust_optimization} and \ref{sub:column_and_constraint_generation_algorithm} introc
% This section first introduces two-stage robust optimization and how to solve it using the column-and-constraint generation (C\&CG) algorithm. Section \ref{sub:probabilistic_feasibility_guarantees_via_the_scenario_approach} demonstrates the feasibility guarantees on the optimal solution, and shows that how the C\&CG algorithm could assist to obtain the theoretical guarantees.

\subsection{Two-stage Robust Optimization} % (fold)
\label{sub:two_stage_robust_optimization}
A standard two-stage robust optimization problem is \eqref{opt:two-stage-robust}.
\begin{equation}
\label{opt:two-stage-robust}
\min _{\bm{x} \in \mathcal{X} } \left(\bm{c}^{\intercal} \bm{x}+ \max _{\bm{\delta} \in \bm{\Delta}} \min _{\bm{y} \in \mathcal{Y}(\bm{x}; \bm{\delta})} \bm{d}^{\intercal} \bm{y}\right)
\end{equation}
It seeks the objective-minimizing solution $(\bm{x}^*, \bm{y}^*)$ for the worst scenario $\bm{\delta} \in \Delta$ in a pre-defined \emph{uncertainty set} $\Delta$. Without loss of generality
\ifx\version\arxiv
(see Appendix \ref{ssub:epigraph_formulation})
\else
\cite{yan_two-stage_2021}
\fi
, we assume deterministic objectives, i.e., no randomness associated with $\bm{c}$ and $\bm{d}$. Sets $\mathcal{X} \subseteq \mathbb{R}^{n_x}$ and $\mathcal{Y} \subseteq \mathbb{R}^{n_y}$ denote the constraints for the first and second stages, respectively.
Note that the second stage constraint $\mathcal{Y}(\bm{x}; \bm{\delta})$ is determined by the first stage decision $\bm{x}$ and uncertainty $\bm{\delta}$. We follow the convention that $\bm{d}^{\intercal} \bm{y} = +\infty$ if the second-stage problem $ \min_{\bm{y} \in \mathcal{Y}(\bm{x}; \bm{\delta})} \bm{d}^{\intercal} \bm{y}$ is infeasible. 
\ifx\version\arxiv % for arxiv version, single column
\begin{defn}[Feasible Solution]
A tuple $(\bm{x},\bm{y})$ is a \emph{feasible solution} to \eqref{opt:two-stage-robust} if (i) $\bm{x} \in \mathcal{X}$; and (ii) $\bm{y} \in \mathcal{Y}(\bm{x}; \bm{\delta})$ for all $\bm{\delta} \in \Delta$. Equivalently, $(\bm{x},\bm{y})$ is feasible if it has a finite objective value.
\end{defn}
% Let $(\bm{x}^*,\bm{y}^*)$ denote the \emph{optimal solution}  to \eqref{opt:two-stage-robust}. Similar with standard optimization problems,
\begin{rem}[Optimal and Infeasible Solutions]
\label{rem:infeasible_solution}
We say that $(\bm{x}^*, \bm{y}^*)$ is an (globally) \emph{optimal solution} to \eqref{opt:two-stage-robust}, if $\bm{c}^\intercal \bm{x}^* + \bm{d}^\intercal \bm{y}^* \le \bm{c}^\intercal \bm{x} + \bm{d}^\intercal \bm{y}$ for all feasible solutions $(\bm{x}, \bm{y})$.
For any potentially better solution $(\bm{x}^\diamond, \bm{y}^\diamond)$ with $\bm{c}^\intercal \bm{x}^\diamond + \bm{d}^\intercal \bm{y}^\diamond < \bm{c}^\intercal \bm{x}^* + \bm{d}^\intercal \bm{y}^*$, there always exists $\bm{\delta}^\diamond \in \Delta$ such that $(\bm{x}^\diamond, \bm{y}^\diamond)$ is infeasible, i.e., $\bm{y}^\diamond \notin \mathcal{Y}(\bm{x}^\diamond; \bm{\delta}^\diamond)$.
\end{rem}
\fi
% \begin{prop}
% \label{prop:worst_case_solution_is_optimal}
% Let $(\bm{x}^*, \bm{y}^*)$ and $\bm{\delta}^*$ denote the optimal solution and worst scenario to \eqref{opt:two-stage-robust}. Suppose $(\bm{x}^\star,\bm{y}^\star)$ is the optimal solution to \eqref{opt:two-stage-given-scenario} with $\bm{\delta}^\diamond = \bm{\delta}^*$, then $\bm{c}^\intercal \bm{x}^\star + \bm{d}^\intercal \bm{y}^\star = \bm{c}^\intercal \bm{x}^* + \bm{d}^\intercal \bm{y}^*$.
% \end{prop}
% This claim is problematic

% \begin{proof}[Proof of Proposition \ref{prop:worst_case_solution_is_optimal}]
% (i) If $\bm{c}^\intercal \bm{x}^\star + \bm{d}^\intercal \bm{y}^\star > \bm{c}^\intercal \bm{x}^* + \bm{d}^\intercal \bm{y}^*$, then $(\bm{x}^*, \bm{y}^*)$ is feasible to \eqref{opt:two-stage-given-scenario} with smaller objective than $(\bm{x}^\star,\bm{y}^\star)$, which clearly is a contradiction. (ii) If $\bm{c}^\intercal \bm{x}^\star + \bm{d}^\intercal \bm{y}^\star < \bm{c}^\intercal \bm{x}^* + \bm{d}^\intercal \bm{y}^*$, Remark \ref{rem:infeasible_solution} tells us there exists 
% \end{proof}
Throughout this paper, we only consider the case in which all constraints are linear. Matrix $\bm{Q}(\bm{\delta})$ is the \emph{recourse} matrix.
\begin{subequations}
\begin{align}
\mathcal{X} &:=\{\bm{x} \in \mathbb{R}^{n_x}: {\bm{A}\bm{x} \le \bm{b}} \}\\
\mathcal{Y}(\bm{x}; \bm{\delta}) &:= \left\{\bm{y} \in \mathbb{R}^{n_y}: \bm{G} \bm{y} \le \bm{h},  \bm{T}(\bm{\delta}) \bm{x} + \bm{Q}(\bm{\delta}) \bm{y} \le  \bm{r}(\delta) \right\}
\end{align}
\end{subequations}
In addition, we focus on the cases where the second-stage problem is feasible, which is formally defined as \emph{relatively complete recourse}. This is mainly for the purpose of simplifying theorems and algorithms. Most results in Sections \ref{sub:probabilistic_feasibility_guarantees_via_the_scenario_approach} and \ref{sec:theoretical_analysis} can be easily generalized towards situations \emph{without} relatively complete recourse. Detailed discussions in different contexts are provided accordingly.
\begin{defn}[Relatively Complete Recourse \cite{zeng2013solving}]
\label{def:relative}
% \eqref{opt:two-stage-robust} is said to have \emph{fixed recourse} if $\bm{Q}(\bm{\delta})$ is not subject to uncertainty, i.e., $\bm{Q}(\bm{\delta}) = \bm{Q}$.
Two-stage RO problem \eqref{opt:two-stage-robust} is said to have \emph{relatively complete recourse}, if for any $\bm{x} \in \mathcal{X}$ and $\bm{\delta} \in \Delta$, the second stage problem $\min _{\bm{y} \in \mathcal{Y}(\bm{x}; \bm{\delta})} \bm{d}^{\intercal} \bm{y}$ is feasible, i.e., $\mathcal{Y}(\bm{x}; \bm{\delta}) \ne \emptyset$.
\end{defn}
One critical observation is that \eqref{opt:two-stage-robust} can be formulated as a \emph{single-stage} robust problem\footnote{This observation may not be true for adjustable robust optimization problems, see Chapter 14 of \cite{ben-tal_robust_2009} and Section 6 of \cite{gorissen_practical_2015} for in-depth discussions.}. The following proposition lays the foundation of the main theoretical results (Theorems \ref{thm:apriori_guarantee_convex}, \ref{thm:aposteriori_guarantee_convex} and \ref{thm:aposteriori_guarantee_nonconvex}) of this paper.
\begin{prop}
\label{prop:two-stage-RO-as-one-stage}
% Suppose $\Delta$ is compact, and $\mathcal{Y}(\bm{x}; \bm{\delta})$ is compact for every $\bm{x} \in \mathcal{X}$ and $\bm{\delta} \in \Delta$.
The two-stage robust optimization problem \eqref{opt:two-stage-robust} is \emph{equivalent} to the single-stage problem \eqref{opt:two-stage-robust-single}:
\begin{subequations}
\label{opt:two-stage-robust-single}
\begin{align}
\min_{\bm{x} \in \mathcal{X},\gamma}~&\bm{c}^{\intercal} \bm{x} + \gamma \\
\text{s.t.}~& (\bm{x},\gamma) \in \mathcal{Z}(\bm{\delta}) ,~\forall \bm{\delta} \in \Delta.
\end{align}
\end{subequations}
in which $\mathcal{Z}(\bm{\delta}) := \{(\bm{x},\gamma): \exists \bm{y} \in \mathcal{Y}(\bm{x}; \bm{\delta})~\text{and}~\bm{d}^{\intercal} \bm{y} \le \gamma\}$.
We say \eqref{opt:two-stage-robust} is equivalent with \eqref{opt:two-stage-robust-single} in the sense that they share the same optimal first-stage solutions and optimal objective values.
\end{prop}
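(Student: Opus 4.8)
The plan is to establish the claimed equivalence by a standard epigraph argument applied to the inner $\min$ over $\bm{y}$, while being careful to argue at the level of feasible sets (not just pointwise objective values) so that the quantifier structure $\max_{\bm{\delta}}\min_{\bm{y}}$ is handled correctly. First I would observe that for a fixed first-stage $\bm{x}$ and fixed $\bm{\delta}$, the epigraph identity
\begin{equation*}
\min_{\bm{y} \in \mathcal{Y}(\bm{x}; \bm{\delta})} \bm{d}^{\intercal} \bm{y} \;=\; \min\{\gamma : (\bm{x},\gamma) \in \mathcal{Z}(\bm{\delta})\}
\end{equation*}
holds, with the convention that both sides equal $+\infty$ when $\mathcal{Y}(\bm{x};\bm{\delta}) = \emptyset$ (so $\{\gamma : (\bm{x},\gamma)\in\mathcal{Z}(\bm{\delta})\} = \emptyset$). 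This is immediate from the definition $\mathcal{Z}(\bm{\delta}) = \{(\bm{x},\gamma): \exists \bm{y} \in \mathcal{Y}(\bm{x};\bm{\delta}), \bm{d}^\intercal \bm{y} \le \gamma\}$.

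Next I would lift this to the worst-case level. Define $Q(\bm{x}) := \max_{\bm{\delta}\in\Delta}\min_{\bm{y}\in\mathcal{Y}(\bm{x};\bm{\delta})}\bm{d}^\intercal\bm{y}$, the second-stage value appearing in \eqref{opt:two-stage-robust}. The key step is to show that for every $\bm{x}\in\mathcal{X}$,
\begin{equation*}
Q(\bm{x}) \;=\; \min\{\gamma : (\bm{x},\gamma)\in\mathcal{Z}(\bm{\delta})~\text{for all}~\bm{\delta}\in\Delta\}.
\end{equation*}
For ``$\le$'': if $\gamma$ is feasible for the right-hand side, then for each $\bm{\delta}$ the pointwise identity above gives $\min_{\bm{y}\in\mathcal{Y}(\bm{x};\bm{\delta})}\bm{d}^\intercal\bm{y} \le \gamma$, and taking the max over $\bm{\delta}$ yields $Q(\bm{x})\le\gamma$. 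For ``$\ge$'': the value $\gamma = Q(\bm{x})$ dominates $\min_{\bm{y}\in\mathcal{Y}(\bm{x};\bm{\delta})}\bm{d}^\intercal\bm{y}$ for every $\bm{\delta}$ simultaneously, hence by the pointwise identity $(\bm{x},Q(\bm{x}))\in\mathcal{Z}(\bm{\delta})$ for all $\bm{\delta}$, so $Q(\bm{x})$ is feasible for the right-hand side and the min there is $\le Q(\bm{x})$. (Here I should note the degenerate case: if $Q(\bm{x}) = +\infty$ the right-hand feasible set is empty and both sides are $+\infty$; this is the only place the convention from Remark \ref{rem:infeasible_solution} / the paragraph defining $\bm{d}^\intercal\bm{y}=+\infty$ is used.)

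Finally, substituting this identity into \eqref{opt:two-stage-robust} gives
\begin{equation*}
\min_{\bm{x}\in\mathcal{X}}\bigl(\bm{c}^\intercal\bm{x} + Q(\bm{x})\bigr) = \min_{\bm{x}\in\mathcal{X}}\Bigl(\bm{c}^\intercal\bm{x} + \min_{\gamma:\,(\bm{x},\gamma)\in\mathcal{Z}(\bm{\delta})\,\forall\bm{\delta}}\gamma\Bigr) = \min_{\substack{\bm{x}\in\mathcal{X},\gamma\\ (\bm{x},\gamma)\in\mathcal{Z}(\bm{\delta})\,\forall\bm{\delta}}}\bigl(\bm{c}^\intercal\bm{x}+\gamma\bigr),
\end{equation*}
which is exactly \eqref{opt:two-stage-robust-single}; the last equality merges the nested minimization over $(\bm{x},\gamma)$. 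This shows the optimal objective values coincide. For the claim about optimal first-stage solutions: any optimal $\bm{x}^*$ of \eqref{opt:two-stage-robust} paired with $\gamma^* = Q(\bm{x}^*)$ is feasible and optimal for \eqref{opt:two-stage-robust-single}, and conversely if $(\bm{x}^*,\gamma^*)$ is optimal for \eqref{opt:two-stage-robust-single} then $\gamma^* \ge Q(\bm{x}^*)$ by feasibility, and optimality forces $\gamma^* = Q(\bm{x}^*)$ (else lowering $\gamma^*$ to $Q(\bm{x}^*)$ strictly improves while staying feasible), so $\bm{x}^*$ attains the optimum of \eqref{opt:two-stage-robust}.

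I expect the only real subtlety — the ``hard part'' in an otherwise routine argument — to be bookkeeping around the $+\infty$ convention and the possibility that $\Delta$ is infinite so the outer $\max$ over $\bm{\delta}$ may not be attained; using $\sup$ in place of $\max$ throughout and checking that the pointwise epigraph identity is preserved under the supremum (which it is, since it holds for each $\bm{\delta}$ and the constraint $(\bm{x},\gamma)\in\mathcal{Z}(\bm{\delta})\,\forall\bm{\delta}$ encodes the supremum exactly) resolves it. Under relatively complete recourse (Definition \ref{def:relative}), $Q(\bm{x})$ is finite for all $\bm{x}\in\mathcal{X}$ and these degeneracies disappear entirely, so one may also present the clean version first and remark on the general case.
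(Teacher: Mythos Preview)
Your argument is correct and self-contained. The paper takes a different route: it first invokes an external theorem (the equivalence of the maximin and robust-constraint formulations of robust LP, citing \cite{beck_duality_2009,gorissen_practical_2015}) to assert that, for fixed $\bm{x}$, the inner problem $\max_{\bm{\delta}\in\Delta}\min_{\bm{y}\in\mathcal{Y}(\bm{x};\bm{\delta})}\bm{d}^\intercal\bm{y}$ coincides with $\min_{\bm{y}}\{\bm{d}^\intercal\bm{y}:\bm{y}\in\mathcal{Y}(\bm{x};\bm{\delta})~\forall\bm{\delta}\in\Delta\}$ with a \emph{single} recourse $\bm{y}$ feasible for every $\bm{\delta}$; it then applies the epigraph reformulation to that single-$\bm{y}$ problem and finally projects out $\bm{y}$ to arrive at the sets $\mathcal{Z}(\bm{\delta})$. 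Your route bypasses that intermediate entirely: you take the epigraph pointwise in $\bm{\delta}$ and argue directly that the outer $\sup_{\bm{\delta}}$ on the value function is equivalent to the conjunction $\forall\bm{\delta}$ on the constraint $(\bm{x},\gamma)\in\mathcal{Z}(\bm{\delta})$. This buys you two things. First, you never need the hypotheses (constraint-wise uncertainty) of the cited robust-LP duality result. Second, you never pass through the static single-$\bm{y}$ formulation, which in general is \emph{strictly} more restrictive than the adjustable problem---a hazard the paper itself flags in a footnote---so your argument is the one that actually matches the definition of $\mathcal{Z}(\bm{\delta})$ (which has an existential $\bm{y}$ per $\bm{\delta}$). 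Your explicit verification of the optimal-solution correspondence and the handling of the $+\infty$/$\sup$-versus-$\max$ edge cases are also absent from the paper's proof.
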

%More specifically, (i) if $(\bm{x}^*,\bm{y}^*, \bm{\delta}^*)$ is an optimal solution \eqref{opt:two-stage-robust}, then there exists a $\gamma^*$ such that $(\bm{x}^*,\bm{\gamma}^*)$ is optimal to \eqref{opt:two-stage-robust-single}; (ii) if $(\bm{x}^\star,\bm{\gamma}^\star)$ is an optimal solution to \eqref{opt:two-stage-robust-single}, then there exists a $(\bm{y}^\star, \bm{\delta}^\star)$ such that $(\bm{x}^\star,\bm{y}^\star, \bm{\delta}^\star)$ is optimal to \eqref{opt:two-stage-robust}.

The seemingly simple formulation \eqref{opt:two-stage-robust-single} may not be solved directly, the main reason is that set $\mathcal{Z}(\bm{\delta})$ could be complicated (e.g., the intersection of an exponential number of half-spaces), sometimes an analytical form of $\mathcal{Z}(\bm{\delta})$ may not even exist.
The only known property of $\mathcal{Z}(\bm{\delta})$ is its convexity if the original problem \eqref{opt:two-stage-robust} is convex\footnote{Essentially $\mathcal{Z}(\bm{\delta})$ is obtained by (1) lifting the original feasible region in $\mathbb{R}^{n_x} \times \mathbb{R}^{n_y}$ by introducing $\gamma$ then (2) projecting the feasible region onto $\mathbb{R}^{n_x}$. If the original feasible region is convex, then the affine projection of a convex set remains convex.}.
Proposition \ref{prop:two-stage-RO-as-one-stage} only aims at connecting two-stage RO problems with the single-stage scenario approach theory in Section \ref{sub:probabilistic_feasibility_guarantees_via_the_scenario_approach}.
% For computational studies, we solve the two-stage formulation \eqref{opt:two-stage-robust} using the column-and-constraint generation algorithm.

\FirstRound{In the remainder of this paper, we make the choice that the uncertainty set $\Delta$ is a collection of $K$ i.i.d. realizations of random variables $\bm{\delta}$, i.e., $\Delta = \mathcal{K} := \{\bm{\delta}^{(1)},\bm{\delta}^{(2)},\cdots,\bm{\delta}^{(K)}\}$.} In Section \ref{sub:probabilistic_feasibility_guarantees_via_the_scenario_approach}, we show that this simple construction of uncertainty sets possesses rigorous theoretical guarantees. \ifx\version\arxiv  In addition, uncertainty set $\Delta = \mathcal{K}$ does not introduce any additional computational complexity (unlike ellipsoidal or conic uncertainty sets in \cite{bertsimas_tractable_2006}, which sometimes render solving RO problems intractable), and the resulting two-stage RO problem can be efficiently solved by C\&CG algorithm. \fi
% subsection two_{n}tage_robust_optimization (end)
\subsection{Column-and-Constraint Generation (C\&CG) Algorithm} % (fold)
\label{sub:column_and_constraint_generation_algorithm}
% \XG{[A brief overview of available algorithms]}.
\FirstRound{One popular choice to solve the two-stage RO problem is the column-and-constraint generation (C\&CG) algorithm \cite{zeng2013solving}. When the uncertainty set is a collection of $K$ discrete scenarios $\mathcal{K} := \{\bm{\delta}^{(1)},\bm{\delta}^{(2)},\cdots,\bm{\delta}^{(K)}\}$,
\begin{equation}
\label{opt:two-stage-robust-scenarios}
\min _{\bm{x} \in \mathcal{X} } \left(\bm{c}^{\intercal} \bm{x}+ \max _{\bm{\delta} \in \mathcal{K} } \min _{\bm{y} \in \mathcal{Y}(\bm{x}; \bm{\delta})} \bm{d}^{\intercal} \bm{y}\right)
\end{equation}
Proposition \ref{prop:two-stage-RO-as-one-stage} shows that \eqref{opt:two-stage-robust-scenarios} is equivalent with 
\begin{subequations}
\label{opt:two-stage-robust-single-scenario}
\begin{align}
\min_{\bm{x},\gamma}~&\bm{c}^{\intercal} \bm{x} + \gamma \\
\text{s.t.}~& \bm{x} \in \mathcal{X}~\text{and}~(\bm{x},\gamma) \in \cap_{i=1}^{K} \mathcal{Z}(\bm{\delta}^{(i)}).
\end{align}
\end{subequations}
Constraint $(\bm{x},\gamma) \in \cap_{i=1}^{K} \mathcal{Z}(\bm{\delta}^{(i)})$ can be explicitly written as \eqref{opt:two-stage-robust-two-scenario} by introducing additional variables $\{\bm{y}^{(1)},\cdots, {\bm{y}^{(K)}}\}$. Variable $\bm{y}^{(k)}$ is the recourse decision variable for the $k$th scenario $\bm{\delta}^{(k)}$.
\begin{subequations}
\label{opt:two-stage-robust-two-scenario}
\begin{align}
% \min _{\bm{x},\gamma,\bm{y}^{(1)},\cdots,\bm{y}^{(K)} }~ &\bm{c}^{\intercal} \bm{x}+ \gamma\\
% \text{s.t.}~&\bm{x} \in \mathcal{X}\\ 
&\bm{d}^{\intercal} \bm{y}^{(k)} \le \gamma,~k=1,2,\cdots,K. \\
& \bm{y}^{(k)}\in \mathcal{Y}(\bm{x}, \bm{\delta}^{(k)}),~k=1,2,\cdots,K.
\end{align}
\end{subequations}
}

\FirstRound{Therefore, the two-stage RO problem \eqref{opt:two-stage-robust} is equivalent to a single-stage optimization problem. It is worth pointing out that constraint \eqref{opt:two-stage-robust-two-scenario} is simply enumerating \emph{all} $K$ scenarios. When seeking risk-averse solutions, which is common for power system applications, the number of scenarios $K$ could be colossal. Thus \eqref{opt:two-stage-robust-two-scenario} might consist of a gigantic number of decision variables and constraints, which is extremely inefficient or even impossible to solve. A \emph{partial} enumeration such as the C\&CG algorithm could significantly outperform the approach of solving \eqref{opt:two-stage-robust-two-scenario} in one shot.}
 % (solving the problem much faster and uses less memory).

\FirstRound{The intuition behind the C\&CG algorithm is quite simple: only a small portion $\mathcal{O}$ of the uncertainty set $\Delta$ matters, e.g., extreme points of $\Delta$ along the optimization direction. C\&CG algorithm is essentially an iterative procedure to identify critical scenarios $\mathcal{O}$. The C\&CG algorithm iteratively adds constraints $(\bm{x},\gamma) \in \cap_{i \in \mathcal{O}} \mathcal{Z}(\bm{\delta}^{(l)})$ to the problem (\emph{constraint generation}). Since constraint $(\bm{x},\gamma) \in \mathcal{Z}(\bm{\delta}^{(l)})$ guarantees the existence of feasible recourse variable $\bm{y}^{(l)} \in \mathcal{Y}(\bm{x}; \bm{\delta})$, additional second-stage variables are introduced (\emph{column generation}).
\ifx\version\arxiv % for arxiv version, single column
Algorithm \ref{alg:CCG} in \ref{sub:algorithms} formally defines the C\&CG algorithm.
More details and theoretical analysis on C\&CG can be found in \ref{sub:algorithms}.
\else
More details and theoretical analysis on C\&CG can be found in \cite{yan_two-stage_2021,zeng2013solving}.
\fi
}

\subsection{Probabilistic Guarantees via the Scenario Approach} % (fold)
\label{sub:probabilistic_feasibility_guarantees_via_the_scenario_approach}
Throughout this paper, we construct the uncertainty set $\Delta$ using $K$ i.i.d. scenarios $\Delta = \mathcal{K} := \{\bm{\delta}^{(1)},\bm{\delta}^{(2)},\cdots,\bm{\delta}^{(K)}\}$. The resulting optimization problems are presented in \eqref{opt:two-stage-robust-single-scenario}  and \eqref{opt:two-stage-robust-two-scenario}.
Let $(\bm{x}_{\mathcal{K}}^*,\gamma_{\mathcal{K}}^*)$ denote the optimal solution to \eqref{opt:two-stage-robust-single-scenario}, e.g., returned by the C\&CG algorithm. The main results of the scenario approach theory connects the number of scenarios $K$ with the violation probability of a candidate solution.
\begin{defn}[Violation Probability]
\label{def:violation probability}
The violation probability of a candidate solution $(\bm{x}^\diamond,\gamma^\diamond)$ to \eqref{opt:two-stage-robust-single-scenario} is defined as $\mathbb{V}(\bm{x}^\diamond, \gamma^\diamond) := \mathbb{P}_{\bm{\delta}}\big( (\bm{x}^\diamond, \gamma^\diamond) \notin \mathcal{Z}(\bm{\delta}) \big)$.
% \begin{equation}
% \mathbb{V}(\bm{x}^\diamond, \gamma^\diamond) := \mathbb{P}_{\bm{\delta}}\big( (\bm{x}^\diamond, \gamma^\diamond) \notin \mathcal{Z}(\bm{\delta}) \big)
% \end{equation}
\end{defn}
\begin{rem}
\label{rem:violation_probability_interpretation_general}
Mathmatically speaking, the violation probability $\mathbb{V}(\bm{x}_{\mathcal{K}}^*,\gamma_{\mathcal{K}}^*)$ quantifies the quality of the robust solution $(\bm{x}_{\mathcal{K}}^*,\gamma_{\mathcal{K}}^*)$. Specifically, $\mathbb{V}(\bm{x}_{\mathcal{K}}^*,\gamma_{\mathcal{K}}^*)$ is the probability of the following two events happening:
\begin{enumerate}
\item $\mathcal{Y}(\bm{x}_{\mathcal{K}}^*; \bm{\delta})$ is empty (infeasible second-stage problem);
\item there exists a feasible $\bm{y} \in \mathcal{Y}(\bm{x}_{\mathcal{K}}^*; \bm{\delta})$ but $\bm{d}^{\intercal} \bm{y} > \gamma_{\mathcal{K}}^*$.
\end{enumerate}
In the context of power system planning, the two events above represent two potential risks in operation, and violation probability $\mathbb{V}(\bm{x}_{\mathcal{K}}^*,\gamma_{\mathcal{K}}^*)$ has clear physical interpretations, see Section \ref{ssub:on_violation_probability} for more discussions.
 % (i) $\mathcal{Y}(\bm{x}_{\mathcal{K}}^*; \bm{\delta})$ is empty (infeasible second-stage problem); or (ii) there exists a feasible $\bm{y} \in \mathcal{Y}(\bm{x}_{\mathcal{K}}^*; \bm{\delta})$ but $\bm{d}^{\intercal} \bm{y} > \gamma_{\mathcal{K}}^*$.
\end{rem}
The main theorems of the scenario approach theory are based on the key definitions of invariant set and essential set\footnote{For the convex case in Theorem \ref{thm:apriori_guarantee_convex}, \cite{geng_computing_2019} shows that the essential set is the set of support scenarios as in \cite{campi_exact_2008}. For non-convex case, an essential set is the minimal support-subsample in \cite{campi_general_2018}.}. 

\begin{defn}[Invariant Set $\mathcal{I}$ and Essential Set $\mathcal{E}$ \cite{geng_computing_2019}]
Let $(\bm{x}_{\mathcal{I}}^*,\gamma_{\mathcal{I}}^*)$ denote the optimal solution to \eqref{opt:two-stage-robust-single-scenario} using a subset of scenarios $\mathcal{I} \subseteq \mathcal{K}$. Set $\mathcal{I}$ is an Invariant Set if $\bm{c}^{\intercal} \bm{x}_{\mathcal{I}}^* + \gamma_{\mathcal{I}}^* = \bm{c}^{\intercal} \bm{x}_{\mathcal{K}}^* + \gamma_{\mathcal{K}}^*$. An \emph{essential set} $\mathcal{E}$ is an invariant set with minimum cardinality.
\end{defn}
The original scenario approach theory \cite{campi_exact_2008,campi_wait-and-judge_2016,campi_general_2018,geng_computing_2019} only applies to single-stage optimization problems. There is a lack of known results  of multi-stage scenario approach. 
Proposition \ref{prop:two-stage-RO-as-one-stage} first shows that the two-stage robust optimization problem \eqref{opt:two-stage-robust} can be converted to an equivalent single-stage formulation \eqref{opt:two-stage-robust-single}, which enables us to extend the classical scenario approach theory towards two-stage decision making problems.
Theorems \ref{thm:apriori_guarantee_convex}, \ref{thm:aposteriori_guarantee_convex} and \ref{thm:aposteriori_guarantee_nonconvex}, which provide guarantees on the risk of the robust solution $(\bm{x}_{\mathcal{K}}^*,\gamma_{\mathcal{K}}^*)$, are essentially applying the key theorems of the classical scenario approach to the converted single-stage scenario problem\footnote{Theorem \ref{thm:apriori_guarantee_convex} is essentially Theorem 1 of \cite{campi_exact_2008}, Theorem \ref{thm:aposteriori_guarantee_convex} is from Theorem 2 of \cite{campi_wait-and-judge_2016}, and Theorem \ref{thm:aposteriori_guarantee_nonconvex} is a direct corollary of Theorem 1 in \cite{campi_general_2018}.}.
The relatively completely recourse assumption is necessary to meet the feasibility assumption in \cite{campi_exact_2008,campi_wait-and-judge_2016,campi_general_2018,geng_computing_2019}
\ifx\version\arxiv % for arxiv version, single column
(see Assumption \ref{ass:feasibility} in Subsection \ref{sub:single_stage_decision_making}).
\else
(see the feasibility assumption in the appendix of \cite{yan_two-stage_2021}).
\fi

\begin{thm}[Prior Guarantees \cite{campi_exact_2008}]
\label{thm:apriori_guarantee_convex}
Suppose \eqref{opt:two-stage-robust-scenarios} is convex and has relatively complete recourse.
Given an acceptable risk level $\epsilon \in (0,1)$, a confidence parameter $\beta \in (0,1)$, and the number of first-stage decision variables $d$ of \eqref{opt:two-stage-robust-single-scenario}. Let $K$ be the smallest integer such that
\begin{equation}
\label{eqn:K_choice_convex}
\sum_{i=0}^{d-1} \binom{K}{i} \epsilon^i (1- \epsilon)^{K-i} \le \beta,
\end{equation}
then $\mathbb{P}^K\big( \mathbb{V}(\bm{x}_{\mathcal{K}}^*, \gamma^*) > \epsilon \big) \le \beta$, where $(\bm{x}^*, \gamma^*)$ is the optimal solution to \eqref{opt:two-stage-robust-single-scenario} with $K$ i.i.d. scenarios.
\end{thm}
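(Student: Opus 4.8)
The plan is to reduce the statement to the classical a-priori bound for convex scenario programs (Theorem~1 of \cite{campi_exact_2008}) by passing through the single-stage reformulation of Proposition~\ref{prop:two-stage-RO-as-one-stage}. First I would invoke Proposition~\ref{prop:two-stage-RO-as-one-stage} to replace the two-stage problem \eqref{opt:two-stage-robust-scenarios} by the equivalent single-stage scenario program \eqref{opt:two-stage-robust-single-scenario}, whose decision variable is the pair $(\bm{x},\gamma)$, whose objective $\bm{c}^{\intercal}\bm{x}+\gamma$ is linear, and whose $k$-th sampled constraint is $(\bm{x},\gamma)\in\mathcal{Z}(\bm{\delta}^{(k)})$. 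Since the original problem is convex, each $\mathcal{Z}(\bm{\delta})$ is a convex subset of the $(\bm{x},\gamma)$-space: as noted in the footnote following Proposition~\ref{prop:two-stage-RO-as-one-stage}, it is the affine projection onto the $(\bm{x},\gamma)$ coordinates of the convex second-stage feasible region lifted by the epigraph variable $\gamma$. Hence \eqref{opt:two-stage-robust-single-scenario} is a genuine convex scenario program in $d$ decision variables, where $d$ is the dimension of $(\bm{x},\gamma)$ (i.e. $d=n_x+1$), subject to $K$ i.i.d. constraints $\mathcal{Z}(\bm{\delta}^{(1)}),\dots,\mathcal{Z}(\bm{\delta}^{(K)})$.

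Next I would verify the hypotheses required by \cite{campi_exact_2008}. Relatively complete recourse (Definition~\ref{def:relative}) guarantees $\mathcal{Y}(\bm{x};\bm{\delta})\neq\emptyset$ for every $\bm{x}\in\mathcal{X}$ and every $\bm{\delta}$, so for any $\bm{x}\in\mathcal{X}$ there is a (large enough) $\gamma$ with $(\bm{x},\gamma)\in\mathcal{Z}(\bm{\delta})$ for all $\bm{\delta}$; consequently the scenario program \eqref{opt:two-stage-robust-single-scenario} is feasible for every realization of $\mathcal{K}=\{\bm{\delta}^{(1)},\dots,\bm{\delta}^{(K)}\}$, which, together with the standing boundedness/attainment of the linear objective, supplies the feasibility assumption of \cite{campi_exact_2008}. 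If the minimizer is not unique, the usual convex tie-breaking rule (e.g. lexicographic selection) makes $(\bm{x}_{\mathcal{K}}^*,\gamma_{\mathcal{K}}^*)$ well defined without affecting the bound. Observe also that the violation event $(\bm{x}_{\mathcal{K}}^*,\gamma_{\mathcal{K}}^*)\notin\mathcal{Z}(\bm{\delta})$ is precisely the event whose probability is $\mathbb{V}(\bm{x}_{\mathcal{K}}^*,\gamma_{\mathcal{K}}^*)$ in Definition~\ref{def:violation probability}, so no translation between notions of violation is needed.

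With the structural checks done, I would simply quote Theorem~1 of \cite{campi_exact_2008}: for a convex scenario program with $d$ decision variables and $K$ i.i.d. constraints, the scenario optimizer satisfies $\mathbb{P}^K\big(\mathbb{V}(\bm{x}_{\mathcal{K}}^*,\gamma_{\mathcal{K}}^*)>\epsilon\big)\le\sum_{i=0}^{d-1}\binom{K}{i}\epsilon^i(1-\epsilon)^{K-i}$. Choosing $K$ as the smallest integer for which the right-hand side does not exceed $\beta$ is exactly condition \eqref{eqn:K_choice_convex}, and therefore yields $\mathbb{P}^K\big(\mathbb{V}(\bm{x}_{\mathcal{K}}^*,\gamma_{\mathcal{K}}^*)>\epsilon\big)\le\beta$. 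Finally, since Proposition~\ref{prop:two-stage-RO-as-one-stage} certifies that \eqref{opt:two-stage-robust-scenarios} and \eqref{opt:two-stage-robust-single-scenario} share the same optimal first-stage solution and objective value, this guarantee is inherited verbatim by the two-stage robust storage planning problem.

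I expect the only delicate point to be the structural verification in the first two paragraphs — confirming that $\mathcal{Z}(\bm{\delta})$ is convex jointly in $(\bm{x},\gamma)$ (not merely in $\bm{x}$ for fixed $\gamma$), that the introduction of $\gamma$ does not destroy the sampled-constraint structure, and that relatively complete recourse indeed delivers the feasibility assumption of \cite{campi_exact_2008} uniformly over samples. Once these are settled, the probabilistic core of the statement is entirely imported from the cited theorem and requires no further estimation.
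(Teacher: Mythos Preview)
Your proposal is correct and follows essentially the same route as the paper: the paper explicitly states (in the paragraph preceding Theorem~\ref{thm:apriori_guarantee_convex} and in the accompanying footnote) that the result is obtained by converting the two-stage problem to the single-stage form \eqref{opt:two-stage-robust-single} via Proposition~\ref{prop:two-stage-RO-as-one-stage}, noting the convexity of $\mathcal{Z}(\bm{\delta})$, using relatively complete recourse to meet the feasibility assumption, and then invoking Theorem~1 of \cite{campi_exact_2008}. Your write-up is in fact more explicit than the paper about checking the hypotheses (joint convexity in $(\bm{x},\gamma)$, feasibility uniformly over samples, tie-breaking), but the underlying argument is identical.
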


\begin{thm}[Posterior Guarantees for Convex Problems \cite{campi_wait-and-judge_2016}]
\label{thm:aposteriori_guarantee_convex}
Suppose \eqref{opt:two-stage-robust-scenarios} is convex and has relatively complete recourse. 
Given a confidence parameter $\beta \in (0,1)$ and $(\bm{x}_{\mathcal{K}}^*, \gamma^*)$ is the optimal solution to \eqref{opt:two-stage-robust-single-scenario}. Let $|\mathcal{I}|$ be the cardinality of an invariant set, then the following probabilistic guarantee holds:
\begin{equation}
  \mathbb{P}^K \Big(\mathbb{V}(\bm{x}_{\mathcal{K}}^{*}, \gamma^*) \ge \epsilon(|\mathcal{I}|) \Big) \le \beta,
\end{equation}
where $0 < \epsilon(k) < 1$ is the (unique) solution to the polynomial equation given an integer $k=0,1,\cdots,K$,
\begin{equation}
\label{eqn:poster_convex_epsilon_function}
  \frac{\beta}{K+1} \sum_{i=k}^K \binom{i}{k} (1- \epsilon)^{i-k} - \binom{K}{k}(1- \epsilon)^{K-k} = 0
\end{equation}
\end{thm}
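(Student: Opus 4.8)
The plan is to reduce Theorem \ref{thm:aposteriori_guarantee_convex} to the single-stage wait-and-judge theorem of \cite{campi_wait-and-judge_2016}. First I would invoke Proposition \ref{prop:two-stage-RO-as-one-stage} to rewrite the two-stage problem \eqref{opt:two-stage-robust-scenarios} as the single-stage scenario program \eqref{opt:two-stage-robust-single-scenario} with decision variable $(\bm{x},\gamma)\in\mathbb{R}^{n_x+1}$ and constraint sets $\mathcal{Z}(\bm{\delta}^{(i)})$. Convexity of \eqref{opt:two-stage-robust-scenarios} makes each $\mathcal{Z}(\bm{\delta}^{(i)})$ convex (it is an affine projection of the convex lifted feasible region, as noted after Proposition \ref{prop:two-stage-RO-as-one-stage}), so \eqref{opt:two-stage-robust-single-scenario} is a bona fide convex scenario program. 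Relatively complete recourse guarantees $\mathcal{Y}(\bm{x};\bm{\delta})\ne\emptyset$ for every $\bm{x}\in\mathcal{X}$ and every $\bm{\delta}$, hence \eqref{opt:two-stage-robust-single-scenario} is feasible for every subsample $\mathcal{I}\subseteq\mathcal{K}$; this is exactly the feasibility hypothesis required by \cite{campi_wait-and-judge_2016}. Where the optimum is non-unique I would fix a tie-breaking rule (e.g., lexicographic), as in \cite{campi_wait-and-judge_2016}.

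Next I would apply the wait-and-judge bound to \eqref{opt:two-stage-robust-single-scenario}: if $s_K^\star$ denotes the number of support scenarios of the optimal solution $(\bm{x}_{\mathcal{K}}^*,\gamma^*)$, then $\mathbb{P}^K\big(\mathbb{V}(\bm{x}_{\mathcal{K}}^*,\gamma^*)\ge\epsilon(s_K^\star)\big)\le\beta$, where $\epsilon(\cdot)$ is precisely the function defined implicitly by \eqref{eqn:poster_convex_epsilon_function}. Two elementary facts then upgrade this to the stated guarantee. The first is monotonicity: $\epsilon(k)$ is non-decreasing in $k$ over $\{0,1,\dots,K\}$, which follows directly from \eqref{eqn:poster_convex_epsilon_function} (established in \cite{campi_wait-and-judge_2016}). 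The second, and the crux of the argument, is that $s_K^\star\le|\mathcal{I}|$ for every invariant set $\mathcal{I}$: in the convex case the essential set $\mathcal{E}$ coincides with the set of support scenarios (\cite{geng_computing_2019}, see also \cite{campi_exact_2008}), and $\mathcal{E}$ is by definition an invariant set of minimum cardinality, so $s_K^\star=|\mathcal{E}|\le|\mathcal{I}|$.

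Combining the two facts gives $\epsilon(|\mathcal{I}|)\ge\epsilon(s_K^\star)$, hence the inclusion of events $\{\mathbb{V}(\bm{x}_{\mathcal{K}}^*,\gamma^*)\ge\epsilon(|\mathcal{I}|)\}\subseteq\{\mathbb{V}(\bm{x}_{\mathcal{K}}^*,\gamma^*)\ge\epsilon(s_K^\star)\}$. Taking $\mathbb{P}^K$ and using the wait-and-judge bound yields $\mathbb{P}^K\big(\mathbb{V}(\bm{x}_{\mathcal{K}}^*,\gamma^*)\ge\epsilon(|\mathcal{I}|)\big)\le\mathbb{P}^K\big(\mathbb{V}(\bm{x}_{\mathcal{K}}^*,\gamma^*)\ge\epsilon(s_K^\star)\big)\le\beta$, which is the claim.

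I expect the main obstacle to be the bookkeeping around support scenarios under the lifting/projection of Proposition \ref{prop:two-stage-RO-as-one-stage}: I must verify that a support scenario of \eqref{opt:two-stage-robust-single-scenario} is exactly a scenario whose removal strictly lowers $\bm{c}^\intercal\bm{x}+\gamma$ --- equivalently, a scenario that every invariant set must contain --- so that the identity $s_K^\star=|\mathcal{E}|$ of \cite{geng_computing_2019} survives the reformulation. Care is also needed with degeneracy: if more than $n_x+1$ scenarios are simultaneously active, the non-degeneracy/non-accumulation conditions behind \cite{campi_wait-and-judge_2016} must be checked (or its degeneracy-tolerant version invoked), and the tie-breaking rule must be applied consistently when defining both the optimal solution and its support set.
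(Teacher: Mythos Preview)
Your proposal is correct and follows essentially the same approach as the paper: the paper treats Theorem~\ref{thm:aposteriori_guarantee_convex} as a direct application of Theorem~2 in \cite{campi_wait-and-judge_2016} to the single-stage reformulation \eqref{opt:two-stage-robust-single-scenario} obtained via Proposition~\ref{prop:two-stage-RO-as-one-stage}, with relatively complete recourse supplying feasibility. You have additionally spelled out the bridging step from support scenarios to arbitrary invariant sets (via monotonicity of $\epsilon(\cdot)$ and $|\mathcal{E}|\le|\mathcal{I}|$), which the paper only handles implicitly when it remarks that the tightest guarantee is achieved with the essential set.
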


\begin{thm}[Probabilistic Guarantees for Non-convex Problems  \cite{campi_general_2018}]
\label{thm:aposteriori_guarantee_nonconvex}
Suppose \eqref{opt:two-stage-robust-scenarios} has relatively complete recourse, and $(\bm{x}_{\mathcal{K}}^*, \gamma^*)$ is the optimal solution to \eqref{opt:two-stage-robust-single-scenario}, let $\mathcal{I}_{\mathcal{K}}$ be an invariant set of \eqref{opt:two-stage-robust-single-scenario} using $K$ scenarios, then the following probabilistic guarantee holds:
% \ifx\version\arxiv
% \begin{equation}
% \label{eqn:sample_complexity_nonconvex_epsilon}
% \mathbb{P}^K\big( \mathbb{V}(\bm{x}_{\mathcal{K}}^*, \gamma^*) > \epsilon(|\mathcal{I}_{\mathcal{K}}|, \beta) \big) \le \beta,~\text{where}~\epsilon(k,\beta) := 
%   \begin{cases}
%   1&~\text{if}~k=K; \\
%   1 - \Big( \frac{\beta}{K \binom{K}{k}} \Big)^{\frac{1}{K-k}} &~\text{otherwise}.
%   \end{cases}
% \end{equation}
% \else
\begin{equation}
\label{eqn:sample_complexity_nonconvex_epsilon}
\mathbb{P}^K\big( \mathbb{V}(\bm{x}_{\mathcal{K}}^*, \gamma^*) > \epsilon(|\mathcal{I}_{\mathcal{K}}|, \beta, K) \big) \le \beta,~\text{where}
\end{equation}
\begin{equation}
\epsilon(k,\beta, K) := 
  \begin{cases}
  1&~\text{if}~k=K; \\
  1 - \Big( \frac{\beta}{K \binom{K}{k}} \Big)^{\frac{1}{K-k}} &~\text{otherwise}.
  \end{cases}
\end{equation}
\end{thm}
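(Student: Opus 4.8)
The plan is to reduce the scenario‑based two‑stage problem to a single‑stage scenario program and then quote the general (non‑convex) scenario theory of \cite{campi_general_2018} verbatim. First, by Proposition~\ref{prop:two-stage-RO-as-one-stage}, the two‑stage problem \eqref{opt:two-stage-robust-scenarios} is equivalent to the single‑stage program \eqref{opt:two-stage-robust-single-scenario}, whose decision variable is $(\bm{x},\gamma)$ and whose $k$‑th constraint is $(\bm{x},\gamma)\in\mathcal{Z}(\bm{\delta}^{(k)})$. Relatively complete recourse guarantees $\mathcal{Z}(\bm{\delta})\neq\emptyset$ for every $\bm{\delta}$, so \eqref{opt:two-stage-robust-single-scenario} is feasible for every multisample — exactly the feasibility precondition assumed in \cite{campi_general_2018}. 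When the optimum is not unique we fix a deterministic tie‑breaking rule (e.g.\ a Tychonoff/lexicographic regularization), as is standard in the scenario literature, so that the map $\mathcal{K}\mapsto(\bm{x}_{\mathcal{K}}^*,\gamma_{\mathcal{K}}^*)$ is single‑valued and hence a legitimate ``algorithm'' in the sense of \cite{campi_general_2018}.

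Second, I would identify the invariant/essential set of \cite{geng_computing_2019} with the support‑subsample notion of \cite{campi_general_2018}. By definition, an invariant set $\mathcal{I}\subseteq\mathcal{K}$ satisfies $\bm{c}^{\intercal}\bm{x}_{\mathcal{I}}^*+\gamma_{\mathcal{I}}^*=\bm{c}^{\intercal}\bm{x}_{\mathcal{K}}^*+\gamma_{\mathcal{K}}^*$; combined with the tie‑breaking rule this forces the algorithm to return the \emph{same} decision on $\mathcal{I}$ as on $\mathcal{K}$, which is precisely the defining property of a support subsample. Consequently an essential set $\mathcal{E}_{\mathcal{K}}$ (a minimum‑cardinality invariant set) is a minimum‑cardinality support subsample, and $|\mathcal{E}_{\mathcal{K}}|$ plays the role of $s^*_N$ in \cite{campi_general_2018}.

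Third, apply Theorem~1 of \cite{campi_general_2018} to \eqref{opt:two-stage-robust-single-scenario} with $N=K$ scenarios and $s^*_K=|\mathcal{E}_{\mathcal{K}}|$: it yields $\mathbb{P}^K\big(\mathbb{V}(\bm{x}_{\mathcal{K}}^*,\gamma^*)>\epsilon(|\mathcal{E}_{\mathcal{K}}|,\beta,K)\big)\le\beta$ with exactly the stated function $\epsilon(k,\beta,K)$. To obtain the claim for an \emph{arbitrary} invariant set $\mathcal{I}_{\mathcal{K}}$, note that $|\mathcal{I}_{\mathcal{K}}|\ge|\mathcal{E}_{\mathcal{K}}|$ and that $k\mapsto\epsilon(k,\beta,K)$ is nondecreasing, so $\{\mathbb{V}(\bm{x}_{\mathcal{K}}^*,\gamma^*)>\epsilon(|\mathcal{I}_{\mathcal{K}}|,\beta,K)\}\subseteq\{\mathbb{V}(\bm{x}_{\mathcal{K}}^*,\gamma^*)>\epsilon(|\mathcal{E}_{\mathcal{K}}|,\beta,K)\}$ and the probability bound is inherited; the boundary case $|\mathcal{I}_{\mathcal{K}}|=K$ is trivial since $\epsilon=1$ there. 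Finally, translating back through Proposition~\ref{prop:two-stage-RO-as-one-stage}, $\mathbb{V}(\bm{x}_{\mathcal{K}}^*,\gamma^*)$ coincides with the operational violation probability of the two‑stage first‑stage solution (Definition~\ref{def:violation probability}), which gives the theorem.

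The main obstacle I anticipate is the bookkeeping needed to make the equivalence above airtight: verifying that the C\&CG‑computed scenario optimum together with the chosen tie‑break satisfies the \emph{consistency}/\emph{non‑degeneracy} hypotheses under which the support‑subsample machinery of \cite{campi_general_2018} applies, and checking that dropping non‑invariant scenarios cannot enlarge $\cap_{i\in\mathcal{I}}\mathcal{Z}(\bm{\delta}^{(i)})$ in a way that moves the selected optimum. Measurability of $(\bm{x}_{\mathcal{K}}^*,\gamma^*)$ and of $|\mathcal{I}_{\mathcal{K}}|$ as functions of the multisample is a further routine point to record. Beyond these, the statement is a direct transcription of \cite{campi_general_2018} through the single‑stage reformulation.
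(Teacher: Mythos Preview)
Your proposal is correct and matches the paper's approach: the paper also treats Theorem~\ref{thm:aposteriori_guarantee_nonconvex} as a direct corollary of Theorem~1 in \cite{campi_general_2018}, obtained by converting the two-stage problem to the single-stage form \eqref{opt:two-stage-robust-single-scenario} via Proposition~\ref{prop:two-stage-RO-as-one-stage} and invoking relatively complete recourse for feasibility (the paper explicitly identifies an essential set with a minimal support subsample in a footnote). One small remark: your listed obstacle about ``non-degeneracy hypotheses'' is largely moot, since the point of \cite{campi_general_2018} is precisely to dispense with the non-degeneracy assumption; the genuine bookkeeping you need is only the consistency of the algorithm/tie-break and the identification of invariant sets (equal objective) with support subsamples (equal solution), which the paper handles informally.
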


Although Theorems \ref{thm:apriori_guarantee_convex}, \ref{thm:aposteriori_guarantee_convex} and \ref{thm:aposteriori_guarantee_nonconvex} hold for \emph{any} invariant set, the tightest guarantee is achieved with the essential set, i.e., the invariant set with minimal cardinality \cite{geng_computing_2019}.
In general, finding essential sets is a combinatorial problem, which could be computationally intractable for non-convex scenario problems \eqref{opt:two-stage-robust-scenarios}.
Proposition \ref{prop:CCG_return_invariant_set} shows that the C\&CG algorithm can effectively narrow down the range of searching.
\begin{prop}
\label{prop:CCG_return_invariant_set}
\ifx\version\arxiv % for arxiv version, single column
The set $\mathcal{O}$ returned by the C\&CG algorithm (Algorithm \ref{alg:CCG}) is an invariant set.
\else
The set $\mathcal{O}$ returned by the C\&CG algorithm is an invariant set.
\fi
\end{prop}
Proposition \ref{prop:CCG_return_invariant_set} is almost self-evident. The last step of C\&CG is to solve a two-stage RO problem with all scenarios in $\mathcal{O}$. \cite{zeng2013solving} shows that C\&CG algorithm converges to an optimal solution to \eqref{opt:two-stage-robust-two-scenario}. By definition, $\mathcal{O}$ is an invariant set (not necessarily an essential one). In practice, C\&CG algorithm usually converges after only a few iterations so that the cardinality of $\mathcal{O}$ is small,
\ifx\version\arxiv % for arxiv version, single column
then we can use Algorithm \ifx\version\arxiv \ref{alg:irr} \else of finding an essential set (Algorithm 3 in the appendix of \cite{yan_two-stage_2021}) \fi to identify the essential set from $\mathcal{O}$.
\else
then we can use a simple greedy algorithm (see Algorithm 3 in \cite{yan_two-stage_2021}) to identify the essential set from $\mathcal{O}$.
\fi

\section{Robust Storage Planning} % (fold)

\subsection{Compact Formulation} % (fold)
\label{sub:robust_storage_planning}
To determine the best location and size of energy storage systems, storage planning must account for short-term operational uncertainties, as the main benefits of energy storage are smoothing out the fluctuations of renewable generation and facilitating the integration of renewables. 
In the deterministic storage planning model \eqref{opt:det-SP}, wind generation $w_{n,t}=\alpha^w_{n,t}\overline{w_{n}}$ and load $d_{n,t}=\alpha^d_{n,t}\overline{d_{n}}$ are considered as deterministic trajectories. The deterministic approach \eqref{opt:det-SP} fails to take the short-term operational risk into consideration.
To account for the significant benefits of energy storage in reducing operation risk, we propose a two-stage robust storage planning model. Through constructing a scenario-based uncertainty set using wind and load data, we show that the operation risk is guaranteed to be within acceptable ranges.

The proposed two-stage robust storage planning framework is \eqref{opt-robust-compact-nonconvex}:
\begin{align}
\text{(nc-RSP)}:~\min_{ \bm{x} \in \Xi_{1}}~ C(\bm{x}) + K_d \max_{\bm{\delta} \in \Delta} \min_{ \bm{y} \in \Xi_{2}(\bm{x}; \bm{\delta})} c(\bm{y}).  \label{opt-robust-compact-nonconvex} 
\end{align}
The uncertainties are modeled by a pre-defined uncertainty set $\Delta$. There is only one difference, i.e., $\max_{\bm{\delta} \in \Delta}$, between the deterministic formulation \eqref{opt:det-SP} and its robust counterpart \eqref{opt-robust-compact-nonconvex}. By optimizing the decision for the worst-scenario, the solution to \eqref{opt-robust-compact-nonconvex} is immune against all possible realizations of uncertainties in the uncertainty set $\Delta$.

\FirstRound{The choice of the uncertainty set $\Delta$ lies at the heart of robust optimization. Throughout this paper, we construct the scenario-based uncertainty set $\mathcal{K}$ to model short-term uncertainties from renewables and loads in RSP.
\begin{equation}
\Delta = \mathcal{K} := \{\bm{\delta}^{(k)}\}_{k=1}^{|\mathcal{K}|} = \Big\{\{\alpha_{n,t}^{d,(k)}\}_{n \in \mathcal{N}, t \in \mathcal{T}},\{\alpha_{n,t}^{w,(k)}\}_{n \in \mathcal{N}, t \in \mathcal{T}} \Big\}_{k=1}^{|\mathcal{K}|}
\end{equation}
More specifically, the scenario-based uncertainty set $\mathcal{K}$ is the set of $K$ i.i.d. scenarios.
It consists of $K$ daily profiles of load factors $\{\alpha_{n,t}^{d,(k)}\}_{n \in \mathcal{N}, t \in \mathcal{T}}$ and renewable capacity factors $\{\alpha_{n,t}^{w,(k)}\}_{n \in \mathcal{N}, t \in \mathcal{T}}$.}
These scenarios could come from historical data \cite{robust_storage_in} or scenario generating algorithms \cite{data_scenario}. One direct benefit of using scenarios is to capture spatial and temporal correlations of uncertainties. More discussions on the benefits of constructing the scenario-based uncertainty set for storage planning are in Section \ref{sec:theoretical_analysis}.

In addition, these operational scenarios could come from historical data \cite{ robust_storage_in} and also can come from data-driven scenario generation method \cite{data_scenario}.  One direct benefit of using scenarios is to capture spatial and temporal correlations of uncertainties. More discussions on the benefits of constructing the scenario-based uncertainty set for storage planning are in Section \ref{sec:theoretical_analysis}.

We mainly focus on the impacts of short-term uncertainties in this paper. The major long-term uncertainties considered are load growth and increasing penetrations of renewables, which are modeled by predicted peak loads $\overline{d_{n}}$ and renewable capacities $\overline{w_{n}}$. The values of peak loads and renewable capacities could come from human experts or state-of-the-art prediction algorithms. Furthermore, the proposed framework can be easily extended towards the joint planning of energy storage, (renewable) generation, transmission lines, and other critical infrastructures.

\FirstRound{The proposed RSP model  constructs the uncertainty set from HOD to represent the future short-term uncertainties. However,  the uncertainty set only includes a small part of all possible renewable outputs and loads. It is uncertain that whether employing some scenarios from HOD is adequate to generate a robust planning result which can deal with all possible short-term uncertainties? The next section will provide a theoretical analysis for this problem.  } 
% subsection robust_{n}torage_planning (end)

\subsection{Convex and Non-convex Formulations} % (fold)
\label{sub:convex_and_non_convex_formulations_of_storage_planning}
Due to discrete variables $z_{n,t}$ and $v_{n,t}$, the robust storage planning problem \eqref{opt-robust-compact-nonconvex} is non-convex. To reduce computational burden and obtain better theoretical results, we introduce the convexified version of \eqref{opt-robust-compact-nonconvex} in \eqref{opt-robust-compact-convex}.
We first remove the discrete planning variable $z_{n}$, thus delete constraint \eqref{opt-det:constr-first-stage} and make $P_n$ and $E_n$ planning decision variables.
\begin{align}
\underline{\rho} E_{n} \le P_{n} \le \overline{\rho} E_{n},~ n\in\mathcal{N}. \label{opt-det:constr-first-stage-relaxed}
\end{align}
\noindent Next we relax the binary variable $v_{n,t}$ (charging/discharging) to continuous variable in \eqref{eqn:relaxed_charging_discharging_no}.
\begin{equation}
\label{eqn:relaxed_charging_discharging_no}
0 \le v_{n,t} \le 1,~n \in \mathcal{N},~t \in \mathcal{T}.
\end{equation}
This relaxation is commonly adopted when studying storage system operations. Several sufficient conditions were derived to guarantee the exactness of this relaxation, and many simulation results reported that this relaxation was usually exact in practice, e.g., \cite{storage_sufficient_condition,xu_scalable_2017}.

After the modifications above, we denote the convexified feasible region as $\breve{\Xi}_1$.
\begin{equation*}
\breve{\Xi}_1 := \Big \{ (E_{n}, P_{n}, z_n)\text{s}~\text{that satisfy constraints}~\eqref{opt-det:constr-first-stage}\eqref{opt-det:constr-budget}\eqref{opt-det:constr-first-stage-relaxed} \Big \}
% \label{opt-det:constraints-stage1-convex}
\end{equation*}
The feasible region of the second stage $\breve{\Xi}_2(E_{n}, P_{n}; \bm{\delta})$ is
% \ifx\version\arxiv % for arxiv version, single column
% \begin{equation*}
% \breve{\Xi}_2(E_{n}, P_{n}; \bm{\delta}) := \Big \{(p_{n,t}^{\text{ch}},p_{n,t}^{\text{dis}},p_{i,n,t}^{\mathcal{G}},p_{n,t}^\text{shed})\text{s} 
% ~\text{that satisfy constraints}~\eqref{opt-det:operation-constraints}  \eqref{opt-det:storage-constraints}\eqref{eqn:relaxed_charging_discharging_no} \Big \}.
% % \label{opt-det:constraints-stage2-convex}
% \end{equation*}
% \else
\begin{multline*}
\breve{\Xi}_2(E_{n}, P_{n}; \bm{\delta}) := \Big \{(p_{n,t}^{\text{ch}},p_{n,t}^{\text{dis}},p_{i,n,t}^{\mathcal{G}},p_{n,t}^\text{shed})\text{s} \\
~\text{that satisfy constraints}~\eqref{opt-det:operation-constraints}  \eqref{opt-det:storage-constraints}\eqref{eqn:relaxed_charging_discharging_no} \Big \}.
% \label{opt-det:constraints-stage2-convex}
\end{multline*}
% \fi

The convexified robust storage planning problem is in \eqref{opt-robust-compact-convex}. Notice that both $\breve{\Xi}_1$ and $\breve{\Xi}_2(E_{n}, P_{n}; \bm{\delta})$ are convex.
\begin{align}
\text{(c-RSP)}:~\min_{ \bm{x} \in \breve{\Xi}_{1}}~ C(\bm{x}) + K_d \max_{\bm{\delta} \in \Delta} \min_{ \bm{y} \in \breve{\Xi}_{2}(\bm{x}; \bm{\delta})} c(\bm{y})  \label{opt-robust-compact-convex} 
\end{align}
In fact, the convex formulation \eqref{opt-robust-compact-convex} is also widely adopted in storage planning studies, e.g., \cite{storage_sizing_2015,storage_deg_2020}. The convex storage planning model also has two formulations: cost-minimizing with objective \eqref{opt-det:cost} and curtailment-minimizing with objective \eqref{opt-det:load-curtailment}.
% \ifx\version\elsevier
In the remainder of this paper, we refer to \eqref{opt-robust-compact-nonconvex} as non-convex robust storage planning (nc-RSP), and \eqref{opt-robust-compact-convex} as convex robust storage planning (c-RSP). 
Both (c-RSP) and (nc-RSP) will be solved using the C\&CG algorithm introduced in Section \ref{sub:column_and_constraint_generation_algorithm}. 
% \begin{align}
% \breve{\Xi}_1 := &\Big \{ (E_{n}, P_{n})\text{s}~\text{that satisfy constraints}~\eqref{opt-det:constr-budget}\eqref{opt-det:constr-ratio} \Big \} \label{opt-det:constraints-stage1-convex}
% \end{align}
% \begin{multline}
% \breve{\Xi}_2(E_{n}, P_{n}; \bm{\delta}) := \Big \{(p_{n,t}^{\text{ch}},p_{n,t}^{\text{dis}},p_{i,n,t}^{\mathcal{G}},p_{n,t}^\text{shed})\text{s} \\
% \text{that satisfy constraints}~\eqref{opt-det:operation-constraints}  \eqref{opt-det:storage-constraints} \Big \}. \label{opt-det:constraints-stage2}
% \end{multline}
There are four different formulations being studied in this paper. Table \ref{tab:four_formulations} provides a detailed comparison of different formulations.
\begin{table}[htbp]
  \caption{Four Different Energy Storage Formulations}
  \label{tab:four_formulations}
  \centering
\begin{footnotesize}
  \begin{tabular}{l|c|c|c|c}
  \hline

  \hline
   & \multicolumn{2}{c|}{\textbf{cost-minimizing} } & \multicolumn{2}{c}{\textbf{curtailment-minimizing} } \\
  \cline{2-5} 
   & (c-RSP) & (nc-RSP) &  (c-RSP) & (nc-RSP) \\
  \hline
  \textbf{Objective}  & \eqref{opt-det:cost} & \eqref{opt-det:cost} & \eqref{opt-det:load-curtailment} & \eqref{opt-det:load-curtailment} \\
  \hline
  \textbf{1st stage} & \eqref{opt-det:constr-first-stage}\eqref{opt-det:constr-budget}\eqref{opt-det:constr-first-stage-relaxed} & \eqref{discrete-investment:variable}\eqref{opt-det:constr-first-stage}\eqref{opt-det:constr-budget} & \eqref{opt-det:constr-first-stage}\eqref{opt-det:constr-budget}\eqref{opt-det:constr-first-stage-relaxed} & \eqref{discrete-investment:variable}\eqref{opt-det:constr-first-stage}\eqref{opt-det:constr-budget} \\
  \textbf{2nd stage} & \eqref{opt-det:operation-constraints}\eqref{opt-det:storage-constraints}\eqref{eqn:relaxed_charging_discharging_no} & \eqref{opt-det:operation-constraints}\eqref{opt-det:storage-constraints}\eqref{eqn:not_the_same_time_storage} & \eqref{opt-det:operation-constraints}\eqref{opt-det:storage-constraints}\eqref{eqn:relaxed_charging_discharging_no} & \eqref{opt-det:operation-constraints}\eqref{opt-det:storage-constraints}\eqref{eqn:not_the_same_time_storage} \\
  \hline

  \hline
  \end{tabular}
\end{footnotesize}
\end{table}

\section{Theoretical Analysis} % (fold)
\label{sec:theoretical_analysis}
% \XG{Plot a figure showing the framework?}
\emph{A careful choice of uncertainty set $\Delta$ is critical to get meaningful storage planning results.} Throughout this paper, we adopt a data-driven approach to constructing uncertainty set $\Delta = \mathcal{K} = \{\bm{\delta}^{(k)}\}_{k=1}^{|\mathcal{K}|}$ using $K$ i.i.d. scenarios. 
This simple yet powerful choice of uncertainty set originates from the scenario approach \cite{campi_exact_2008,campi_wait-and-judge_2016,campi_general_2018}. \FirstRound{All theoretical results in this paper are based on two critical definitions: violation probability and invariant set; their mathematical definitions and interpretations in the context of RSP are discussed in Section \ref{sub:basic_definitions}.
Sections \ref{sub:prior_guarantees_RSP}-\ref{sub:posterior_guarantees_RSP} present the main theorems of this paper.}

\subsection{Basic Definitions} % (fold)
\label{sub:basic_definitions}
\subsubsection{Violation Probability $\mathbb{V}(\bm{x}^\diamond, \gamma^\diamond)$} % (fold)
\label{ssub:on_violation_probability}
\begin{defn}[Violation Probability]
\label{def:violation probability}
The violation probability of a candidate solution $(\bm{x}^\diamond,\gamma^\diamond)$ to \eqref{opt:two-stage-robust-single} is defined as $\mathbb{V}(\bm{x}^\diamond, \gamma^\diamond) := \mathbb{P}_{\bm{\delta}}\big( (\bm{x}^\diamond, \gamma^\diamond) \notin \mathcal{Z}(\bm{\delta}) \big)$.
% \begin{equation}
% \mathbb{V}(\bm{x}^\diamond, \gamma^\diamond) := \mathbb{P}_{\bm{\delta}}\big( (\bm{x}^\diamond, \gamma^\diamond) \notin \mathcal{Z}(\bm{\delta}) \big)
% \end{equation}
\end{defn}

Similar with Proposition \ref{prop:two-stage-RO-as-one-stage}, we introduce an auxiliary variable $\gamma$ to denote the worst-case cost. Let $(\bm{x}^*, \gamma^*)$ be the optimal (first-stage) solution to (c-RSP) or (nc-RSP). 
In the context of RSP, $\mathbb{V}(\bm{x}^*,\gamma^*)$ depicts the probability of the following two events happening in the future:
\begin{enumerate}
\item infeasible second-stage problem (infeasible DCOPF);
\item there exists a feasible operation, but its operation cost $\gamma$ is greater than the planning solution $\gamma^*$.
\end{enumerate}
Detailed interpretations of $\mathbb{V}(\bm{x}^*,\gamma^*)$ differ in the cost-minimizing and curtailment-minimizing formulations.

\begin{rem}[Operation Cost Risk]
\label{rem:interpretation_of_in_cost_oriented_formulation}
The cost-minimizing formulation minimizes total investment and operation costs \eqref{opt-det:cost}. The solution $\bm{\gamma}^*$ is our estimation of worst-case operation costs.
As stated in Remark \ref{rem:load_shedding_give_complete_recourse}, when allowing load curtailment, the second stage is always feasible, thus $\mathbb{V}(\bm{x}^*,\gamma^*)$ only depicts event (2), i.e., the actual worst-case operation cost $\bm{d}^\intercal y(\bm{\delta}; \bm{x}^*)$ in the future is greater than $\gamma^*$:
\begin{equation}
\label{eqn:violation_prob_in_cost_RSP}
\mathbb{V}(\bm{x}^*, \gamma^*) = \mathbb{P}_{\bm{\delta}}\big( \bm{d}^\intercal y(\bm{\delta}; \bm{x}^*) > \gamma^* \big)
\end{equation}
$\mathbb{V}(\bm{x}^*, \gamma^*)$ in the cost-minimizing formulation is referred as \emph{operation cost risk}.
\end{rem}

\begin{rem}[Load Curtailment Risk]
\label{rem:interpretation_of_in_curtailment_oriented_formulation}
The curtailment-minimizing formulation minimizes total curtailment \eqref{opt-det:load-curtailment}. The solution $\gamma^*$ is our estimate on the worst-case load curtailment. $\mathbb{V}(\bm{x}^*, \gamma^*)$ is the probability that load curtailment is greater than our worst-case estimate $\gamma^*$.
\begin{equation}
\label{eqn:violation_prob_in_curtail_RSP}
\mathbb{V}(\bm{x}^*, \gamma^*) = \mathbb{P}_{\bm{\delta}}\big( \sum_{t\in \mathcal{T}}  \sum_{n\in \mathcal{N}}  p_{n,t}^\text{shed} > \gamma^* \big)
\end{equation}
$\mathbb{V}(\bm{x}^*, \gamma^*)$ in the curtailment-minimizing formulation is referred as \emph{load curtailment risk}.	
\end{rem}
We would like to point out the close relationship between $\mathbb{V}(\bm{x}^*, \gamma^*)$ and power system reliability. When $\gamma^*=0$, $\mathbb{V}(\bm{x}^*, \gamma^*)$ is the (daily) loss of load probability (LOLP). If the sole objective of storage planning is to guarantee the $\text{LOLP} \le \overline{\epsilon}$, then we can add the following constraint to (c-RSP) or (nc-RSP).
\begin{equation}
\sum_{t\in \mathcal{T}}  \sum_{n\in \mathcal{N}}  p_{n,t}^\text{shed} = 0.
\end{equation}
% subsubsection on_the_violation_probability (end)'

\subsubsection{Invariant and Essential Sets} % (fold)
\label{ssub:invariant_and_essential_sets}
\begin{defn}[Invariant Set $\mathcal{I}$ and Essential Set $\mathcal{E}$ \cite{geng_computing_2019}]
Let $(\bm{x}_{\mathcal{I}}^*,\gamma_{\mathcal{I}}^*)$ denote the optimal solution to \eqref{opt:two-stage-robust-single} using a subset of scenarios $\mathcal{I} \subseteq \mathcal{K}$. Set $\mathcal{I}$ is an \emph{invariant set} if $\bm{c}^{\intercal} \bm{x}_{\mathcal{I}}^* + \gamma_{\mathcal{I}}^* = \bm{c}^{\intercal} \bm{x}_{\mathcal{K}}^* + \gamma_{\mathcal{K}}^*$. An \emph{essential set} $\mathcal{E}$ is an invariant set with minimum cardinality.
\end{defn}
\FirstRound{An invariant set is basically the set of important scenarios that determine the optimal solution.
An essential set is essentially the minimal representation of those important scenarios.
The number of those important scenarios play a critical role in the theoretical analysis.}
\begin{prop}
\label{prop:invariant_set_convex}
The cardinality of any essential set of (c-RSP) is no greater than $2|\mathcal{S}|+1$, where $\mathcal{S}$ denotes the set of candidate locations for storage planning.
\end{prop}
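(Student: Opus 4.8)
The plan is to bound the cardinality of an essential set of (c-RSP) by counting the effective number of first-stage decision variables and invoking the classical scenario-approach bound via Proposition~\ref{prop:two-stage-RO-as-one-stage}. Recall that Proposition~\ref{prop:two-stage-RO-as-one-stage} shows (c-RSP) is equivalent to the single-stage problem \eqref{opt:two-stage-robust-single} with first-stage variables $(\bm{x},\gamma)$. For convex scenario programs, the footnote after the invariant/essential set definition (citing \cite{campi_exact_2008,geng_computing_2019}) tells us that an essential set coincides with the set of support scenarios, and the classical result (Theorem~1 of \cite{campi_exact_2008}) guarantees that the number of support scenarios is at most the dimension $d$ of the decision vector of \eqref{opt:two-stage-robust-single-scenario}. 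So it suffices to show that the \emph{relevant} first-stage dimension of (c-RSP) is $2|\mathcal{S}|+1$.

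First I would identify the genuine first-stage decision variables of (c-RSP). In the convexified first stage $\breve{\Xi}_1$, the integer variables $z_n$ have been removed; constraints \eqref{energy-discrete}--\eqref{power-discrete} no longer bind $E_n$ and $P_n$ to a common integer, and by \eqref{opt-det:constr-first-stage-relaxed} the only free investment variables are $(E_n,P_n)_{n\in\mathcal{S}}$ together with the epigraph variable $\gamma$. Since $\mathcal{S}=\mathcal{N}$ in this paper and there are two investment variables per candidate location (energy rating $E_n$ and power rating $P_n$), the first-stage vector of \eqref{opt:two-stage-robust-single-scenario} for (c-RSP) lives in $\mathbb{R}^{2|\mathcal{S}|+1}$. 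Thus $d = 2|\mathcal{S}|+1$. I would also note that the second-stage variables $\bm{y}$ (dispatch, storage charge/discharge, load shedding) do not enter this count: they are exactly the recourse variables absorbed into $\mathcal{Z}(\bm{\delta})$ in Proposition~\ref{prop:two-stage-RO-as-one-stage}, and the scenario-approach dimension is only that of the variables held fixed across all scenarios.

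Next I would apply the support-scenario bound: since (c-RSP) is convex and has relatively complete recourse (load shedding guarantees feasibility of the second stage, as in Remark~\ref{rem:load_shedding_give_complete_recourse}), the equivalent single-stage problem \eqref{opt:two-stage-robust-single-scenario} is a convex scenario program in $d=2|\mathcal{S}|+1$ variables, so its number of support scenarios — hence the cardinality of any essential set $\mathcal{E}$ — is at most $2|\mathcal{S}|+1$. This yields the claim. I expect the main obstacle to be the bookkeeping in the first step: one must argue carefully that, after convexification, nothing other than the $2|\mathcal{S}|$ rating variables and $\gamma$ should be counted — in particular that the budget constraint \eqref{opt-det:constr-budget} and the ratio constraints \eqref{opt-det:constr-first-stage-relaxed} only restrict the feasible set without adding decision coordinates, and that the equality-linked $E_n,P_n$ via \eqref{energy-discrete}--\eqref{power-discrete} really have been dropped in $\breve\Xi_1$ (a subtlety, since $\breve\Xi_1$ as written still lists \eqref{opt-det:constr-first-stage}; one should invoke the text's statement that $z_n$ is removed). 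A secondary point worth a sentence is the degenerate-case caveat standard in the scenario approach (non-uniqueness of the optimizer), which is handled exactly as in \cite{campi_exact_2008,geng_computing_2019} and does not affect the cardinality bound.
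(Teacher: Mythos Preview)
Your proposal is correct and follows essentially the same route as the paper: convert (c-RSP) to the single-stage form via Proposition~\ref{prop:two-stage-RO-as-one-stage}, count the first-stage variables as $(E_n,P_n)_{n\in\mathcal{S}}$ plus the epigraph variable $\gamma$ for a total of $2|\mathcal{S}|+1$, and invoke the classical bound that a convex scenario program has at most $d$ support scenarios. The paper's proof is terser and appeals to Theorem~2 of \cite{geng_computing_2019} rather than \cite{campi_exact_2008}, but the argument is the same; your extra care about the removal of $z_n$ and the role of the second-stage variables is well placed and only makes the bookkeeping more explicit.
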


Since (c-RSP) is convex, the cardinality of the essential set is bounded by the total number of decision variables of both first and second stages (Theorem 2 in \cite{calafiore_uncertain_2005}). In most cases, the number of the second-stage variables is much more than the first-stage variables. For example, the 118-bus case in Section \ref{sub:ieee_118_case_study} has $2|\mathcal{S}|+1=237$ first stage variables and $\sim 2\times 10^4$ second-stage variables. If directly applying Theorem 2 in \cite{calafiore_uncertain_2005}, we obtain a loose upper bound $|\mathcal{E}| \lesssim 2\times 10^4$, which leads to an astronomical number of scenarios $K$ per Theorem \ref{thm:a_priori_guarantee_c_RSP}. Proposition \ref{prop:invariant_set_convex} significantly tightens the bound as $|\mathcal{E}| \le 237 \ll 2\times 10^4$. 

Another attractive feature of Proposition \ref{prop:invariant_set_convex} is its independence of system size. As long as there are not too many candidate locations for storage planning, the proposed approach will not require too many scenarios when being applied on large-scale real-world systems.

\subsubsection{Relatively Complete Recourse} % (fold)
\label{ssub:relatively_complete_recourse}
\begin{rem}
\label{rem:load_shedding_give_complete_recourse}
When allowing load curtailment, both (c-RSP) and (nc-RSP) have relatively complete recourse.
\end{rem}
Note that all theorems in Sections \ref{sub:prior_guarantees_RSP}-\ref{sub:posterior_guarantees_RSP} requires the relatively complete recourse assumption. Remark \ref{rem:load_shedding_give_complete_recourse} states that our RSP formulations satisfy this assumption. 
This remark follows common sense of power system operations. When a system is in severe situations, a common control action is to shed load (e.g., rotating outages). Remark \ref{rem:load_shedding_give_complete_recourse} essentially states that system operators can shed load to a lower level to maintain minimum generation, thus the second stage problem (DCOPF) is feasible at the high cost of load curtailment. We also want to point out that Remark \ref{rem:load_shedding_give_complete_recourse} does not hold true if the following two types of constraints are included: (1) no load curtailment is allowed, i.e., $p_{n,t}^{\text{shed}}=0$; or (2) load curtailment is limited, i.e., $\sum_{n \in \mathcal{N}} p_{n,t}^{\text{shed}} \le \overline{p}^{\text{shed}}$.
All results in Sections \ref{sub:prior_guarantees_RSP}-\ref{sub:posterior_guarantees_RSP} are for the cases with load curtailment, thus meets the relatively complete recourse assumption. Similar theoretical results \emph{without} the relatively complete recourse assumption can be easily derived, e.g., using Theorem 4.1 in \cite{calafiore_random_2010}.
% subsection on_relatively_complete_resource (end)
% subsubsection relatively_complete_recourse (end)
% subsection basic_definitions (end)

\subsection{A-Priori Guarantees for (c-RSP)} % (fold)
\label{sub:prior_guarantees_RSP}

\begin{thm}[A-Priori Guarantees for (c-RSP)]
\label{thm:a_priori_guarantee_c_RSP}
For the convex RSP formulation (c-RSP) with $|\mathcal{S}|$ candidate storage locations,
given an acceptable risk level $\overline{\epsilon} \in (0,1)$ and a confidence parameter $\beta \in (0,1)$, let $(\bm{x}^*, \gamma^*)$ be the optimal solution to (c-RSP) with $K$ i.i.d. scenarios, and $K$ be the smallest integer such that
\begin{equation}
\label{eqn:K_choice_c_RSP}
\sum_{i=0}^{2|\mathcal{S}|} \binom{N}{i} \overline{\epsilon}^i (1- \overline{\epsilon})^{N-i} \le \beta,
\end{equation}
then $\mathbb{P}^K\big( \mathbb{V}(\bm{x}_{\mathcal{K}}^*, \gamma^*) \le \overline{\epsilon} \big) \ge 1-\beta$.
\end{thm}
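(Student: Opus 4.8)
The plan is to derive Theorem~\ref{thm:a_priori_guarantee_c_RSP} as a direct consequence of the generic a-priori guarantee for two-stage convex RO, namely Theorem~\ref{thm:apriori_guarantee_convex}, combined with the sharpened cardinality bound of Proposition~\ref{prop:invariant_set_convex}. First I would invoke Proposition~\ref{prop:two-stage-RO-as-one-stage} to rewrite (c-RSP) as the equivalent single-stage scenario problem \eqref{opt:two-stage-robust-single-scenario}; since the convexified feasible regions $\breve{\Xi}_1$ and $\breve{\Xi}_2(\bm{x};\bm{\delta})$ are both convex (as noted right after \eqref{opt-robust-compact-convex}), the lifted-and-projected set $\mathcal{Z}(\bm{\delta})$ is convex, so \eqref{opt:two-stage-robust-single-scenario} is a convex scenario program. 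By Remark~\ref{rem:load_shedding_give_complete_recourse}, (c-RSP) has relatively complete recourse, so the feasibility hypothesis required by the classical scenario-approach theorems is met.

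The second step is the dimension count. Theorem~\ref{thm:apriori_guarantee_convex} is stated in terms of $d$, the number of first-stage decision variables of the single-stage reformulation \eqref{opt:two-stage-robust-single-scenario}. In (c-RSP), the first-stage variables are $(E_n, P_n)_{n\in\mathcal{N}}$ together with the auxiliary epigraph variable $\gamma$; since storage may be sited at every candidate bus, $|\mathcal{S}|=|\mathcal{N}|$ and the first stage carries $2|\mathcal{S}|$ genuine variables plus $\gamma$, giving $d = 2|\mathcal{S}|+1$. Therefore the summation upper index $d-1$ in \eqref{eqn:K_choice_convex} becomes $2|\mathcal{S}|$, which matches the index in \eqref{eqn:K_choice_c_RSP}. (The quantity $N$ appearing in \eqref{eqn:K_choice_c_RSP} is the number of scenarios $K$; this is just the smallest-integer condition of Theorem~\ref{thm:apriori_guarantee_convex} transcribed with the explicit $d=2|\mathcal{S}|+1$.) Plugging $d=2|\mathcal{S}|+1$ into Theorem~\ref{thm:apriori_guarantee_convex} yields $\mathbb{P}^K\big(\mathbb{V}(\bm{x}_{\mathcal{K}}^*,\gamma^*) > \overline{\epsilon}\big) \le \beta$, i.e. the complementary event $\mathbb{V}(\bm{x}_{\mathcal{K}}^*,\gamma^*) \le \overline{\epsilon}$ has probability at least $1-\beta$, which is exactly the claimed bound. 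A remark worth inserting is that Proposition~\ref{prop:invariant_set_convex}'s bound $|\mathcal{E}| \le 2|\mathcal{S}|+1$ is consistent with — and in fact what justifies the tightness of — using $d = 2|\mathcal{S}|+1$ rather than the crude total-variable count $2\times10^4$; one could alternatively route the argument through the a-posteriori Theorem~\ref{thm:aposteriori_guarantee_convex} with $|\mathcal{I}|$ replaced by its worst-case value $2|\mathcal{S}|+1$, but the cleanest path is the a-priori theorem directly.

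\textbf{Main obstacle.} The only non-routine point is making the reduction to a \emph{convex} single-stage scenario program airtight: one must check that introducing $\gamma$ and eliminating the inner $\max_{\bm\delta}\min_{\bm y}$ preserves convexity of the feasible set in the decision variables, and — more subtly — that the relatively-complete-recourse assumption (Remark~\ref{rem:load_shedding_give_complete_recourse}) is genuinely what is needed to discharge the feasibility/uniqueness hypotheses of \cite{campi_exact_2008}, since load curtailment is what prevents $\bm{d}^\intercal\bm{y} = +\infty$ from ever occurring. Once that is settled, the counting of first-stage variables and the invocation of Theorem~\ref{thm:apriori_guarantee_convex} are mechanical, so I would keep the proof short, essentially a one-paragraph citation-and-substitution argument, and defer any delicate discussion of the non-complete-recourse case (handled via \cite{calafiore_random_2010}) to a remark.
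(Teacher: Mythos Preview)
Your proposal is correct and mirrors the paper's own argument exactly: the paper's proof is a one-line remark that the result follows from Theorem~\ref{thm:apriori_guarantee_convex} together with Proposition~\ref{prop:two-stage-RO-as-one-stage}, by substituting $d = 2|\mathcal{S}|+1$ (the first-stage variables $(E_n,P_n)_{n\in\mathcal{S}}$ plus $\gamma$) into \eqref{eqn:K_choice_convex}. Your write-up simply unpacks this citation with the convexity check, the relatively-complete-recourse verification via Remark~\ref{rem:load_shedding_give_complete_recourse}, and the variable count, all of which are the intended supporting details.
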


In Section \ref{sec:case_study}, after solving  hundreds of instances of (c-RSP) on different systems, we surprisingly found that the essential set of every instance of (c-RSP) was always one, i.e., $|\mathcal{E}|=1$. This feature of $|\mathcal{E}|=1$ is very appealing, \emph{as $|\mathcal{E}| = 1$ is the best non-trivial case in the sense that it requires the least amount of scenarios to achieve a given risk level $\overline{\epsilon}$}. 
% \ifx\version\elsevier
One first attempt to explain the reason that $|\mathcal{I}|=1$ for (c-RSP) is Proposition \ref{prop:invariant_set_singleton}, which improves Proposition \ref{prop:invariant_set_convex} for special cases.
% Using the same technique (see Theorem \ref{thm:equivalent_RLP_formulations}) of proving Proposition \ref{prop:two-stage-RO-as-one-stage}, we can show that \eqref{opt:two-stage-robust-single} is equivalent with the following maximin problem
% \begin{align}
% \max_{\bm{\delta} \in \Delta} ~ \min_{\bm{x} \in \mathcal{X}}~&\bm{c}^{\intercal} \bm{x} + \gamma \\
% \text{s.t.}~&  (\bm{x},\gamma) \in \mathcal{Z}(\bm{\delta}) 
% \end{align}
\begin{defn}[The Worst Scenario]
\label{defn:worst_scenario}
The \emph{worst scenario} of \eqref{opt:two-stage-robust} is defined as $\bm{\delta}^* := \arg \max_{\bm{\delta} \in \Delta} f(\bm{\delta})$, where $f(\bm{\delta}) := \min_{\bm{x} \in \mathcal{X}, (\bm{x},\gamma) \in \mathcal{Z}(\bm{\delta}) } \bm{c}^{\intercal} \bm{x} + \gamma$.
% The set of worst scenarios is denoted by $\Omega$.
\end{defn}
% For general choices of uncertainty set $\Delta$, the worst scenario may not be unique, $\Omega$ may not even be a finite set.
% \begin{prop}
% Under Assumption \ref{def:relative} (relatively complete recourse), the essential set $\mathcal{E}$ of \eqref{opt:two-stage-robust-scenarios} has cardinality no more than 1, $|\mathcal{E}| \le 1$. \XG{Main proof idea is based on the relatively complete recourse assumption, it does not change the domain of $f(\bm{\delta})$, but I'm still doubting the if the maximin problem is truely equivalent with the original two-stage formulation, the key question here is $Z(\bm{\delta})$ is still constraint-wise uncertainty.} \XG{therefore I chose to present a much weaker version below, which I feel confident about its correctness.}
% \end{prop}
\begin{prop}
\label{prop:invariant_set_singleton}
If the worst scenario $\bm{\delta}^*$ of \eqref{opt:two-stage-robust-scenarios} is unique, then $\mathcal{I} = \Omega = \{\bm{\delta}^*\}$, thus $|\mathcal{I}|=1$.
\end{prop}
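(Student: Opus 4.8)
The plan is to push everything through the single-stage reformulation of Proposition~\ref{prop:two-stage-RO-as-one-stage}, where an invariant set is just a minimal set of scenarios whose $\mathcal{Z}(\bm{\delta})$-constraints are active at the optimum. For $\bm{\delta}\in\mathcal{K}$ let $\phi(\bm{\delta};\bm{x}):=\min_{\bm{y}\in\mathcal{Y}(\bm{x};\bm{\delta})}\bm{d}^{\intercal}\bm{y}$, which is finite (indeed a convex, hence continuous, piecewise-linear value function of $\bm{x}$) on all of $\mathcal{X}$ because (c-RSP) has relatively complete recourse; then $\mathcal{Z}(\bm{\delta})=\{(\bm{x},\gamma):\gamma\ge\phi(\bm{\delta};\bm{x})\}$, the scenario problem \eqref{opt:two-stage-robust-single-scenario} reads $\min_{\bm{x}\in\mathcal{X}}\bm{c}^{\intercal}\bm{x}+\max_{k}\phi(\bm{\delta}^{(k)};\bm{x})$, and $f(\bm{\delta})=\min_{\bm{x}\in\mathcal{X}}\bm{c}^{\intercal}\bm{x}+\phi(\bm{\delta};\bm{x})$. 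Because $\cap_{k}\mathcal{Z}(\bm{\delta}^{(k)})\subseteq\mathcal{Z}(\bm{\delta}^{*})$, dropping all but the worst scenario can only enlarge the feasible set, so $f(\bm{\delta}^{*})\le \bm{c}^{\intercal}\bm{x}_{\mathcal{K}}^{*}+\gamma_{\mathcal{K}}^{*}$; the whole statement then reduces to proving the reverse inequality together with minimality, which is what I would organize the proof around.

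For the reverse inequality, take the optimal $(\bm{x}_{\mathcal{K}}^{*},\gamma_{\mathcal{K}}^{*})$ of \eqref{opt:two-stage-robust-single-scenario}; then $\gamma_{\mathcal{K}}^{*}=\max_{k}\phi(\bm{\delta}^{(k)};\bm{x}_{\mathcal{K}}^{*})$, and the key use of the uniqueness hypothesis is that $\bm{\delta}^{*}$ is the \emph{only} scenario whose recourse cost attains this maximum, i.e.\ $\phi(\bm{\delta}^{(k)};\bm{x}_{\mathcal{K}}^{*})<\phi(\bm{\delta}^{*};\bm{x}_{\mathcal{K}}^{*})$ for every $k\ne *$. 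Continuity of the $\phi(\bm{\delta}^{(k)};\cdot)$'s then furnishes a neighborhood of $\bm{x}_{\mathcal{K}}^{*}$ on which $\max_{k}\phi(\bm{\delta}^{(k)};\cdot)$ coincides with $\phi(\bm{\delta}^{*};\cdot)$, so $\bm{x}_{\mathcal{K}}^{*}$ is a local, hence---by convexity of (c-RSP)---a global minimizer of $\bm{x}\mapsto\bm{c}^{\intercal}\bm{x}+\phi(\bm{\delta}^{*};\bm{x})$ over $\mathcal{X}$. That is precisely $f(\bm{\delta}^{*})=\bm{c}^{\intercal}\bm{x}_{\mathcal{K}}^{*}+\phi(\bm{\delta}^{*};\bm{x}_{\mathcal{K}}^{*})=\bm{c}^{\intercal}\bm{x}_{\mathcal{K}}^{*}+\gamma_{\mathcal{K}}^{*}$, so $\{\bm{\delta}^{*}\}$ is an invariant set.

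It then remains to identify $\{\bm{\delta}^{*}\}$ with the essential set and with $\Omega$. The empty set is not invariant---removing every scenario leaves $\gamma$ unbounded below, so its optimal value is strictly below $\bm{c}^{\intercal}\bm{x}_{\mathcal{K}}^{*}+\gamma_{\mathcal{K}}^{*}$---so no invariant set has cardinality $0$, and the singleton $\{\bm{\delta}^{*}\}$ already has minimum cardinality; moreover, any other singleton $\{\bm{\delta}^{(k)}\}$ has value $f(\bm{\delta}^{(k)})<f(\bm{\delta}^{*})=\bm{c}^{\intercal}\bm{x}_{\mathcal{K}}^{*}+\gamma_{\mathcal{K}}^{*}$ by uniqueness of the worst scenario, so it is not invariant. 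Hence $\{\bm{\delta}^{*}\}$ is \emph{the} essential set, $\mathcal{I}=\Omega=\{\bm{\delta}^{*}\}$, and $|\mathcal{I}|=1$.

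The crux---and the step I expect to require the most care---is the reverse inequality in the second paragraph, because it genuinely needs both ingredients: if two or more scenarios tied for the worst-case recourse cost at $\bm{x}_{\mathcal{K}}^{*}$, a mixed worst case could push $\bm{c}^{\intercal}\bm{x}_{\mathcal{K}}^{*}+\gamma_{\mathcal{K}}^{*}$ strictly above every single-scenario optimum $f(\bm{\delta}^{(k)})$, and the local identification $\max_{k}\phi(\bm{\delta}^{(k)};\cdot)=\phi(\bm{\delta}^{*};\cdot)$ would collapse; likewise, without convexity the local-to-global promotion is unavailable and one would instead fall back on the essential-set/C\&CG characterization of Proposition~\ref{prop:CCG_return_invariant_set}. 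A full write-up should therefore spell out the neighborhood argument and check that the infimum defining $\gamma_{\mathcal{K}}^{*}$ is attained.
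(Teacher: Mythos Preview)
Your argument has a genuine gap at precisely the step you flag as the crux. You assert that uniqueness of $\bm{\delta}^{*}=\arg\max_{\bm{\delta}\in\mathcal{K}}f(\bm{\delta})$ forces $\phi(\bm{\delta}^{(k)};\bm{x}_{\mathcal{K}}^{*})<\phi(\bm{\delta}^{*};\bm{x}_{\mathcal{K}}^{*})$ for every $k\ne *$, but this does not follow: the hypothesis compares the \emph{single-scenario optimal values} $f(\bm{\delta}^{(k)})=\min_{\bm{x}}\bigl[\bm{c}^{\intercal}\bm{x}+\phi(\bm{\delta}^{(k)};\bm{x})\bigr]$, each attained at its own minimizer, and says nothing about how the curves $\phi(\bm{\delta}^{(k)};\cdot)$ are ordered at the particular point $\bm{x}_{\mathcal{K}}^{*}$. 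Take $\mathcal{X}=[0,1]$, $\bm{c}=0$, $\phi(\bm{\delta}^{(1)};x)=x+0.1$, $\phi(\bm{\delta}^{(2)};x)=1-x$ (both realizable as linear second stages with relatively complete recourse). Then $f(\bm{\delta}^{(1)})=0.1>0=f(\bm{\delta}^{(2)})$, so $\bm{\delta}^{(1)}$ is the unique worst scenario, yet the full problem $\min_{x}\max\{x+0.1,\,1-x\}$ has $\bm{x}_{\mathcal{K}}^{*}=0.45$ with value $0.55$, where both scenarios are active and tied. Your neighborhood argument then collapses, and $\{\bm{\delta}^{(1)}\}$ is \emph{not} an invariant set here since its value is $0.1\ne 0.55$---exactly the ``mixed worst case'' you warn about in your last paragraph actually occurs under the stated hypothesis.

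For comparison, the paper's own proof is much terser: it works directly with $\max_{\bm{\delta}\in\mathcal{K}}f(\bm{\delta})$, noting that removing $\bm{\delta}^{*}$ strictly lowers this maximum while removing any other scenario leaves it unchanged. That reasoning is correct for the outer problem $\max_{\bm{\delta}}f(\bm{\delta})$, but it silently identifies the two-stage optimum $\min_{\bm{x}}\max_{\bm{\delta}}\bigl[\bm{c}^{\intercal}\bm{x}+\phi(\bm{\delta};\bm{x})\bigr]$ with $\max_{\bm{\delta}}\min_{\bm{x}}\bigl[\bm{c}^{\intercal}\bm{x}+\phi(\bm{\delta};\bm{x})\bigr]$, and the same one-dimensional example shows these can differ. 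Neither your route nor the paper's closes this min--max gap without an additional assumption---for instance, that a \emph{single} scenario is active at $\bm{x}_{\mathcal{K}}^{*}$, which is a different condition from uniqueness of $\arg\max_{\bm{\delta}}f(\bm{\delta})$.
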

Proposition \ref{prop:invariant_set_singleton} can be applied on both convex and non-convex problems. However, Proposition \ref{prop:invariant_set_singleton} needs global optimal solutions, which is challenging for non-convex problems. Table \ref{Tab:support scenario} provides an example that suboptimal solution leads to more than one element in the essential set.
% \fi

% Based on that, we can apply Theorem \ref{thm:aposteriori_guarantee_convex_RSP} to the large-scale (c-RSP) by fisrt making an assumption $|\mathcal{I}|=1$, then  verifying if the assumption is true. 

% The main procedures to solve (c-RSP) and obtain probabilistic guarantees are summarized below:
% % This feature of $|\mathcal{I}|=1$ is very appealing, as it holds for any problem size or system scale. If it is always true for (c-RSP), the theoretical guarantees and sample complexity $K$ 

% \begin{enumerate}
% \item choose acceptable risk level $\epsilon$ and confidence parameter $\beta$, assume the cardinality of invariant set $|\mathcal{I}| = 1$;
% \item compute the smallest integer $K$ satisfying $\epsilon(|\mathcal{I}|,\beta) \ge \overline{\epsilon}$ using \eqref{eqn:poster_convex_epsilon_function};
% \item construct uncertainty set $\Delta$ using $K$ i.i.d. scenarios $\mathcal{K} = \{\bm{\delta}^{(k)}\}_{k=1}^{K}$;
% \item solve (c-RSP) with $\Delta = \mathcal{K}$ using C\&CG algorithm, which returns the optimal storage planning solution $(\bm{x}^*, \gamma^*)$ and an invariant set $\mathcal{O}$;
% \item  compute the irreducible (or essential) invariant set $\mathcal{I}_{\mathcal{K}}$ based on Algorithm \ref{alg:irr};
% \item compute the posterior risk level $\epsilon(|\mathcal{I}_{\mathcal{K}}|)$ using \eqref{eqn:poster_convex_epsilon_function}.
% \item If $\epsilon(|\mathcal{I}_{\mathcal{K}}|)$ $\leq$ $\epsilon$, output the result; otherwise, go to step 2) with $|\mathcal{I}| =|\mathcal{I}_{\mathcal{K}}|$.
% \end{enumerate}

\subsection{A-Posteriori Guarantees for (c-RSP) and (nc-RSP)} % (fold)
\label{sub:posterior_guarantees_RSP}
Although we cannot prove $|\mathcal{E}|=1$ with less restricted assumptions, we can still improve the a-priori guarantees (Theorem \ref{thm:a_priori_guarantee_c_RSP}) using the a-posteriori guarantees (Theorems \ref{thm:aposteriori_guarantee_convex_RSP} and \ref{thm:aposteriori_guarantee_nonconvex_RSP}). 
\FirstRound{Theorem \ref{thm:aposteriori_guarantee_convex_RSP} extends the Theorem 2 in \cite{campi_wait-and-judge_2016} towards two-stage robust optimization problems; and Theorem \eqref{thm:aposteriori_guarantee_nonconvex_RSP} is a direct corollary of Theorem 1 in \cite{campi_general_2018}.
}
\begin{thm}[A-Posteriori Guarantees for (c-RSP) \cite{campi_wait-and-judge_2016,yan_two-stage_2021}]
\label{thm:aposteriori_guarantee_convex_RSP}
Let $(\bm{x}_{\mathcal{K}}^*, \gamma^*)$ be the optimal solution to (c-RSP). Let $|\mathcal{I}_{\mathcal{K}}|$ be the cardinality of an invariant set and $\beta \in (0,1)$ be a confidence parameter chosen beforehand, then the following probabilistic guarantee holds:
\begin{equation}
  \mathbb{P}^K \Big(\mathbb{V}(\bm{x}_{\mathcal{K}}^{*}, \gamma^*) \ge \epsilon(|\mathcal{I}_{\mathcal{K}}|, \beta, K) \Big) \le \beta,
\end{equation}
where $0 < \epsilon(k, \beta, K) < 1$ is the (unique) solution to the polynomial equation given an integer $k=0,1,\cdots,K$,
\begin{equation}
\label{eqn:poster_convex_epsilon_function}
  \frac{\beta}{K+1} \sum_{i=k}^K \binom{i}{k} (1- \epsilon)^{i-k} - \binom{K}{k}(1- \epsilon)^{K-k} = 0.
\end{equation}
\end{thm}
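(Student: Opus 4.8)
The plan is to derive Theorem \ref{thm:aposteriori_guarantee_convex_RSP} as the specialization to (c-RSP) of the general a-posteriori bound for convex two-stage robust optimization, Theorem \ref{thm:aposteriori_guarantee_convex}, which is itself the wait-and-judge scenario approach of \cite{campi_wait-and-judge_2016} transported through the single-stage reformulation in Proposition \ref{prop:two-stage-RO-as-one-stage}. Concretely, the proof will consist of three steps: checking that (c-RSP) meets the hypotheses of Theorem \ref{thm:aposteriori_guarantee_convex}; invoking the wait-and-judge bound on the single-stage reformulation; and relaxing the resulting guarantee, which is naturally phrased in terms of the number of support scenarios, to an arbitrary invariant set of cardinality $|\mathcal{I}_{\mathcal{K}}|$.

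First I would verify that \eqref{opt-robust-compact-convex} is a convex two-stage RO problem with relatively complete recourse, so that Theorem \ref{thm:aposteriori_guarantee_convex} applies verbatim. Convexity holds because $\breve{\Xi}_1$ and, for each fixed $\bm{\delta}$, $\breve{\Xi}_2(E_n,P_n;\bm{\delta})$ are polyhedra, the cost functions $C(\bm{x})$ and $c(\bm{y})$ are linear, and the uncertainty $\bm{\delta} = (\alpha^w_{n,t},\alpha^d_{n,t})$ enters only affinely through the right-hand sides of the nodal balance \eqref{opt-det:constr-balance} and the load-curtailment cap \eqref{opt-det:constr-cut}; in particular the recourse matrix does not depend on $\bm{\delta}$, so (c-RSP) fits the linear template of Section \ref{sub:two_stage_robust_optimization}. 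Relatively complete recourse is precisely Remark \ref{rem:load_shedding_give_complete_recourse}: since load shedding is permitted, the second-stage DCOPF is feasible for every $\bm{x} \in \breve{\Xi}_1$ and every scenario $\bm{\delta}$.

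Next I would apply Proposition \ref{prop:two-stage-RO-as-one-stage} to rewrite (c-RSP) with scenario set $\mathcal{K} = \{\bm{\delta}^{(k)}\}_{k=1}^{K}$ as the epigraph program \eqref{opt:two-stage-robust-single-scenario} in the variable $(\bm{x},\gamma)$, noting that each $\mathcal{Z}(\bm{\delta})$ is convex (it is the affine projection onto the $(\bm{x},\gamma)$-space of the convex lifted feasible set). This is a convex scenario program that is feasible for every multisample --- take $\gamma$ large, using relatively complete recourse --- so the standing assumptions of \cite{campi_wait-and-judge_2016} hold. Theorem 2 there, equivalently Theorem \ref{thm:aposteriori_guarantee_convex}, then yields $\mathbb{P}^K\big( \mathbb{V}(\bm{x}_{\mathcal{K}}^*,\gamma^*) \ge \epsilon(s^*,\beta,K) \big) \le \beta$, where $s^*$ is the number of support scenarios of \eqref{opt:two-stage-robust-single-scenario} and $\epsilon(\cdot,\beta,K)$ is the unique solution of \eqref{eqn:poster_convex_epsilon_function}. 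Finally, for a convex scenario program the essential set coincides with the set of support scenarios \cite{geng_computing_2019} and every invariant set contains an essential set, so $s^* = |\mathcal{E}| \le |\mathcal{I}_{\mathcal{K}}|$; since $k \mapsto \epsilon(k,\beta,K)$ is nondecreasing, $\{\mathbb{V} \ge \epsilon(|\mathcal{I}_{\mathcal{K}}|,\beta,K)\} \subseteq \{\mathbb{V} \ge \epsilon(s^*,\beta,K)\}$, and the claimed bound follows.

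The step I expect to be the main obstacle is the second one: although $\mathcal{Z}(\bm{\delta})$ has no closed form, one must argue that the single-stage reformulation genuinely satisfies every regularity condition behind \cite{campi_wait-and-judge_2016} --- convexity, almost-sure feasibility of the scenario program, and (for the cited version) existence together with an appropriate tie-breaking rule for the scenario optimum --- and one must make the notion of a support scenario for \eqref{opt:two-stage-robust-single-scenario} precise in this two-stage setting. Once that is secured, the monotonicity of $\epsilon(\cdot,\beta,K)$ and the inequality $s^* \le |\mathcal{I}_{\mathcal{K}}|$ are routine, and together they are exactly what makes the guarantee usable in practice: one obtains an invariant set $\mathcal{O}$ from the C\&CG run (Proposition \ref{prop:CCG_return_invariant_set}), optionally prunes it to an essential set, and plugs its cardinality into \eqref{eqn:poster_convex_epsilon_function}.
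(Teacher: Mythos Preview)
Your proposal is correct and follows the same route as the paper: verify that (c-RSP) is convex with relatively complete recourse (Remark \ref{rem:load_shedding_give_complete_recourse}), pass to the single-stage reformulation via Proposition \ref{prop:two-stage-RO-as-one-stage}, and then invoke the wait-and-judge bound of \cite{campi_wait-and-judge_2016}, which is exactly how the paper derives Theorem \ref{thm:aposteriori_guarantee_convex} and specializes it to (c-RSP). Your explicit justification of the step from support scenarios to an arbitrary invariant set via the monotonicity of $k \mapsto \epsilon(k,\beta,K)$ is a detail the paper leaves implicit, but it is consistent with the paper's remark that the tightest guarantee is achieved with the essential set.
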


\begin{thm}[A-Posteriori Guarantees for (nc-RSP) \cite{campi_general_2018,yan_two-stage_2021}]
\label{thm:aposteriori_guarantee_nonconvex_RSP}
Let $(\bm{x}_{\mathcal{K}}^*, \gamma^*)$ denote the optimal solution to (nc-RSP), let $\mathcal{I}_{\mathcal{K}}$ be an invariant set of (nc-RSP) using $K$ scenarios, then the following probabilistic guarantee holds:
% \ifx\version\arxiv
% \begin{equation}
% \label{eqn:sample_complexity_nonconvex_epsilon}
% \mathbb{P}^K\big( \mathbb{V}(\bm{x}_{\mathcal{K}}^*, \gamma^*) > \epsilon(|\mathcal{I}_{\mathcal{K}}|, \beta) \big) \le \beta,~\text{where}~\epsilon(k,\beta) := 
%   \begin{cases}
%   1&~\text{if}~k=K; \\
%   1 - \Big( \frac{\beta}{K \binom{K}{k}} \Big)^{\frac{1}{K-k}} &~\text{otherwise}.
%   \end{cases}
% \end{equation}
% \else
\begin{equation}
\label{eqn:sample_complexity_nonconvex_epsilon}
\mathbb{P}^K\big( \mathbb{V}(\bm{x}_{\mathcal{K}}^*, \gamma^*) > \epsilon(|\mathcal{I}_{\mathcal{K}}|, \beta, K) \big) \le \beta,~\text{where}
\end{equation}
\begin{equation}
\label{eqn:poster_nonconvex_epsilon_function}
\epsilon(k,\beta, K) := 
  \begin{cases}
  1&~\text{if}~k=K; \\
  1 - \Big( \frac{\beta}{K \binom{K}{k}} \Big)^{\frac{1}{K-k}} &~\text{otherwise}.
  \end{cases}
\end{equation}
\end{thm}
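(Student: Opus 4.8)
The plan is to reduce Theorem \ref{thm:aposteriori_guarantee_nonconvex_RSP} to the already-established generic result Theorem \ref{thm:aposteriori_guarantee_nonconvex}, exactly as the excerpt advertises. First I would invoke Remark \ref{rem:load_shedding_give_complete_recourse}: since the curtailment-minimizing and cost-minimizing formulations studied here allow load shedding, (nc-RSP) has relatively complete recourse, so the hypothesis of Theorem \ref{thm:aposteriori_guarantee_nonconvex} is met. Next I would apply Proposition \ref{prop:two-stage-RO-as-one-stage} to rewrite (nc-RSP), which is an instance of the two-stage robust problem \eqref{opt:two-stage-robust-scenarios} with the scenario-based uncertainty set $\Delta=\mathcal{K}$, as the equivalent single-stage scenario problem \eqref{opt:two-stage-robust-single-scenario}; the two problems share optimal first-stage solutions and optimal values, and the auxiliary variable $\gamma$ carries the worst-case operation cost (or worst-case curtailment). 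The optimal solution $(\bm{x}_{\mathcal{K}}^*,\gamma^*)$ and the violation probability $\mathbb{V}(\bm{x}_{\mathcal{K}}^*,\gamma^*)$ are then precisely the objects to which the non-convex scenario theory applies.

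The substantive step is to check that the converted single-stage problem \eqref{opt:two-stage-robust-single-scenario} satisfies the hypotheses of Theorem 1 in \cite{campi_general_2018} (equivalently, Theorem \ref{thm:aposteriori_guarantee_nonconvex} of this excerpt). This requires: (i) the scenarios $\bm{\delta}^{(1)},\dots,\bm{\delta}^{(K)}$ are i.i.d., which holds by construction of $\mathcal{K}$; (ii) a well-posedness/feasibility condition, which follows from relatively complete recourse, so that \eqref{opt:two-stage-robust-single-scenario} is feasible and attains an optimum for every subsample; and (iii) the notion of invariant set used in Theorem \ref{thm:aposteriori_guarantee_nonconvex_RSP} coincides with the support-subsample / consistency structure underlying \cite{campi_general_2018}. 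For point (iii) I would note that the invariant set $\mathcal{I}_{\mathcal{K}}$ is defined (Definition of Invariant/Essential Set) by the property that the optimal objective value is unchanged when restricting to $\mathcal{I}_{\mathcal{K}}$; any invariant set contains a minimal support-subsample, and the probabilistic bound of \cite{campi_general_2018} is monotone in cardinality, so the bound stated with $|\mathcal{I}_{\mathcal{K}}|$ is valid (it is merely looser than the one with an essential set). With these three checks in place, Theorem \ref{thm:aposteriori_guarantee_nonconvex} yields directly
\[
\mathbb{P}^K\big( \mathbb{V}(\bm{x}_{\mathcal{K}}^*, \gamma^*) > \epsilon(|\mathcal{I}_{\mathcal{K}}|, \beta, K) \big) \le \beta
\]
with the stated $\epsilon(k,\beta,K)$, which is exactly \eqref{eqn:sample_complexity_nonconvex_epsilon}--\eqref{eqn:poster_nonconvex_epsilon_function}.

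The main obstacle, and the only place where care is genuinely needed, is the equivalence of the two notions of ``invariant set'' and the verification that the non-convex theory of \cite{campi_general_2018} transfers verbatim through the epigraph lifting of Proposition \ref{prop:two-stage-RO-as-one-stage}. Concretely, one must confirm that removing scenarios outside $\mathcal{I}_{\mathcal{K}}$ from \eqref{opt:two-stage-robust-single-scenario} does not change the optimal pair $(\bm{x}^*,\gamma^*)$ (not just its value), or, if only the value is preserved, that the violation-probability bound still applies to the solution obtained with the full sample — this is where the ``consistency'' formalism of \cite{campi_general_2018} is invoked, and where Proposition \ref{prop:CCG_return_invariant_set} is useful since the C\&CG output set is a concrete invariant set one can plug in. Everything else — the i.i.d. assumption, relatively complete recourse via load shedding, and the closed-form $\epsilon(k,\beta,K)$ — is inherited without modification from the cited theorems, so the proof is essentially a careful bookkeeping argument rather than a new computation.
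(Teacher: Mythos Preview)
Your proposal is correct and follows essentially the same route as the paper: the paper explicitly states that Theorem \ref{thm:aposteriori_guarantee_nonconvex_RSP} is a direct corollary of Theorem 1 in \cite{campi_general_2018}, obtained by applying Proposition \ref{prop:two-stage-RO-as-one-stage} to convert (nc-RSP) to the single-stage form \eqref{opt:two-stage-robust-single-scenario}, invoking relatively complete recourse (Remark \ref{rem:load_shedding_give_complete_recourse}) for feasibility, and then quoting the non-convex scenario bound. Your additional care in matching the invariant-set definition to the support-subsample framework of \cite{campi_general_2018} is more explicit than what the paper spells out, but the underlying argument is identical.
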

\FirstRound{As the name suggests, a-priori guarantees (Theorem \ref{thm:a_priori_guarantee_c_RSP}) hold true \emph{before} solving the (c-RSP) problem.
In contrast, a-posteriori guarantees (Theorems \ref{thm:aposteriori_guarantee_convex_RSP} and \ref{thm:aposteriori_guarantee_nonconvex_RSP}) become valid \emph{after} obtaining the optimal solution $(\bm{x}_{\mathcal{K}}^*, \gamma^*)$ and calculating the cardinality of an invariant set $\mathcal{I}_{\mathcal{K}}$.
}
Although Theorems \ref{thm:aposteriori_guarantee_convex_RSP} and \ref{thm:aposteriori_guarantee_nonconvex_RSP} hold true for \emph{any} invariant set, the tightest guarantee is achieved with the essential set, i.e., the invariant set with minimal cardinality \cite{geng_computing_2019}.
In general, finding essential sets is a combinatorial problem, which could be computationally intractable for non-convex scenario problems \eqref{opt:two-stage-robust-scenarios}.
Proposition \ref{prop:CCG_return_invariant_set} shows that the C\&CG algorithm can effectively narrow down the range of searching.
\begin{prop}
\label{prop:CCG_return_invariant_set}
\ifx\version\arxiv % for arxiv version, single column
The set $\mathcal{O}$ returned by the C\&CG algorithm (Algorithm \ref{alg:CCG}) is an invariant set.
\else
The set $\mathcal{O}$ returned by the C\&CG algorithm is an invariant set.
\fi
\end{prop}
Proposition \ref{prop:CCG_return_invariant_set} is almost self-evident. The last step of C\&CG is to solve a two-stage RO problem with all scenarios in $\mathcal{O}$. \cite{zeng2013solving} shows that C\&CG algorithm converges to an optimal solution to \eqref{opt:two-stage-robust-two-scenario}. By definition, $\mathcal{O}$ is an invariant set (not necessarily an essential one). In practice, C\&CG algorithm usually converges after only a few iterations so that the cardinality of $\mathcal{O}$ is small,
\ifx\version\arxiv % for arxiv version, single column
then we can use Algorithm \ifx\version\arxiv \ref{alg:irr} \else of finding an essential set (Algorithm 3 in the appendix of \cite{yan_two-stage_2021}) \fi to identify the essential set from $\mathcal{O}$.
\else
then we can use a simple greedy algorithm (see Algorithm 3 in \cite{yan_two-stage_2021}) to identify the essential set from $\mathcal{O}$.
\fi

% In general, it is challenging to obtain \emph{a-priori} guarantees for non-convex problems. The only known result is Theorem \ref{thm:aposteriori_guarantee_nonconvex_RSP}, which provides \emph{a-posteriori} guarantees.
The main procedures to solve RSP problems and calculate theoretical guarantees are summarized in Algorithm \ref{alg:procedures_to_get_guarantees}.

\begin{algorithm}[H]
\begin{algorithmic}[1]
\STATE Choose acceptable risk level $\overline{\epsilon}$ and confidence parameter $\beta$;
\STATE Guess the cardinality of invariant set $k$;
\STATE Compute the smallest integer $K$ such that $\epsilon(k,\beta,K) \le \overline{\epsilon}$; \emph{use \eqref{eqn:poster_convex_epsilon_function} in Theorem \ref{thm:aposteriori_guarantee_convex_RSP} for (c-RSP); use \eqref{eqn:poster_nonconvex_epsilon_function} in Theorem \ref{thm:aposteriori_guarantee_nonconvex_RSP} for (nc-RSP)};
\STATE Construct the scenario-based uncertainty set $\mathcal{K} \leftarrow \{\bm{\delta}^{(k)}\}_{k=1}^{K}$ using $K$ i.i.d. scenarios;
\STATE Solve the RSP problem with the uncertainty set $\mathcal{K}$ using the C\&CG algorithm, obtain the optimal storage planning solution $(\bm{x}^*, \gamma^*)$ and an invariant set $\mathcal{O}$;
\STATE Compute an invariant set $\mathcal{I}_{\mathcal{K}} \subseteq \mathcal{O}$ as small as possible using an algorithm \ifx\version\arxiv \ref{alg:irr} \else to find essential sets (e.g., Algorithm 3 in the appendix of \cite{yan_two-stage_2021}) \fi;
\STATE Calculate $\epsilon(|\mathcal{I}_{\mathcal{K}}|, \beta, K)$; \emph{use \eqref{eqn:poster_convex_epsilon_function} in Theorem \ref{thm:aposteriori_guarantee_convex_RSP} for (c-RSP); use \eqref{eqn:poster_nonconvex_epsilon_function} in Theorem \ref{thm:aposteriori_guarantee_nonconvex_RSP} for (nc-RSP)};
\IF {$\epsilon(|\mathcal{I}_{\mathcal{K}}|, \beta, K) > \overline{\epsilon}$}
\STATE Go to step 2 with $k \leftarrow |\mathcal{I}_{\mathcal{K}}|$;
\ELSE 
\STATE Output uncertainty set $\mathcal{K}$, invariant set $\mathcal{I}_{\mathcal{K}}$, optimal solution $(\bm{x}^*, \gamma^*)$, and theoretical guarantee $\mathbb{P}^{K}(\mathbb{V}(\bm{x}^*, \gamma^*) \le \overline{\epsilon}) \ge 1 - \beta$.
\ENDIF
% \STATE otherwise, 
% \STATE compute the posterior risk level $\epsilon(|\mathcal{I}_{\mathcal{K}}|)$ using \eqref{eqn:sample_complexity_nonconvex_epsilon}.
\end{algorithmic}
\caption{Calculating Theoretical Guarantees for RSP}
\label{alg:procedures_to_get_guarantees}
\end{algorithm}

\begin{figure*}[htbp]
\centering
\begin{tikzpicture}[node distance=1.5cm,
    every node/.style={fill=white, font=\sffamily}, align=center]
  % Specification of nodes (position, etc.)
  \node (init)             [other]              {Choose risk parameters\\Initial Guess on $|\mathcal{I}_{\mathcal{K}}|$};
  \node (complexity)       [scenario,below of=init,xshift=5cm]              {Compute Sample\\Complexity $K$\\\tiny{(Equations \eqref{eqn:poster_convex_epsilon_function} or \eqref{eqn:poster_nonconvex_epsilon_function}})};
  % \node (calculate-K) 		  [below of=complexity,yshift=-0.5cm] {$K = \arg\min $} (formulation);
  \node (construct)        [scenario,right of=complexity,xshift=5cm]              {Construct Scenario-based\\Uncertainty Set $\mathcal{K}$\\Using $K$ i.i.d. Scenarios};
  \node (formulation)     [robust, below of=construct,yshift=-1.0cm]          {Robust Energy Storage\\Planning Formulation};
  \node (equation) 		  [below of=formulation,yshift=0.2cm] {$\min _{\bm{x} \in \mathcal{X} } \bm{c}^{\intercal} \bm{x}+\hspace{2cm}$\\$\qquad\max_{\bm{\delta} \in \mathcal{K} } \min _{\bm{y} \in \mathcal{Y}(\bm{x}; \bm{\delta})} \bm{d}^{\intercal} \bm{y}$} (formulation);
  \node (solve)      [compute, left of=formulation,xshift=-5cm]   {Compute Optimal Solution\\and Invariant Sets via\\the C\&CG Algorithm};
  \node (theory)     [scenario, left of=solve,xshift=-5cm]   {Theoretical Guarantees\\Calculate $\epsilon(|\mathcal{I}_{\mathcal{K}}|, \beta, K)$\\\tiny{(Theorems \ref{thm:aposteriori_guarantee_convex_RSP} or \ref{thm:aposteriori_guarantee_nonconvex_RSP})}};
  \node (output)      [other, below of=solve,yshift=-0.3cm]   {Output Optimal Solution $(\bm{x}^*, \gamma^*)$\\and Theoretical Guarantees\\$\mathbb{P}^{K}\Big(\mathbb{V}(\bm{x}^*, \gamma^*) \le \overline{\epsilon}\Big) \ge 1 - \beta$};
  \draw[->]     (init) -| node [xshift=-1.5cm,opacity=0,text opacity=1] {$(\overline{\epsilon},\beta)$\\$|\mathcal{I}_{\mathcal{K}}|$} (complexity);
  \draw[->]     (complexity) -- node [xshift=0cm,opacity=0,text opacity=1] {$\bm{\delta}^{(1)},\bm{\delta}^{(2)},\qquad$\\$\quad\cdots,\bm{\delta}^{(K)}$} (construct);
  \draw[->]      (construct) -- node [yshift=0.05cm,opacity=1,text opacity=1] {$\mathcal{K}:=\{\bm{\delta}^{(1)},\bm{\delta}^{(2)},\cdots,\bm{\delta}^{(K)}\}$} (formulation);
  \draw[->]     (formulation) -- (solve);
  \draw[->]      (solve) -- node [xshift=0cm,opacity=0,text opacity=1] {$(\bm{x}_{\mathcal{K}}^*, \gamma^*)$\\$\mathcal{I}_{\mathcal{K}}$} (theory);
  \draw[->]     (theory) |- node [xshift=1.8cm] {If $\epsilon(|\mathcal{I}_{\mathcal{K}}|, \beta, K) > \overline{\epsilon}$} (complexity);
  \draw[->]     (theory) |- node[xshift=1.8cm] {If $\epsilon(|\mathcal{I}_{\mathcal{K}}|, \beta, K) \le \overline{\epsilon}$} (output);
  \end{tikzpicture}
\caption{Main Procedures (Algorithm \ref{alg:procedures_to_get_guarantees}) to solve RSP problems and Calculate Theoretical Guarantees.}
\label{fig:flow_chart}
\end{figure*}
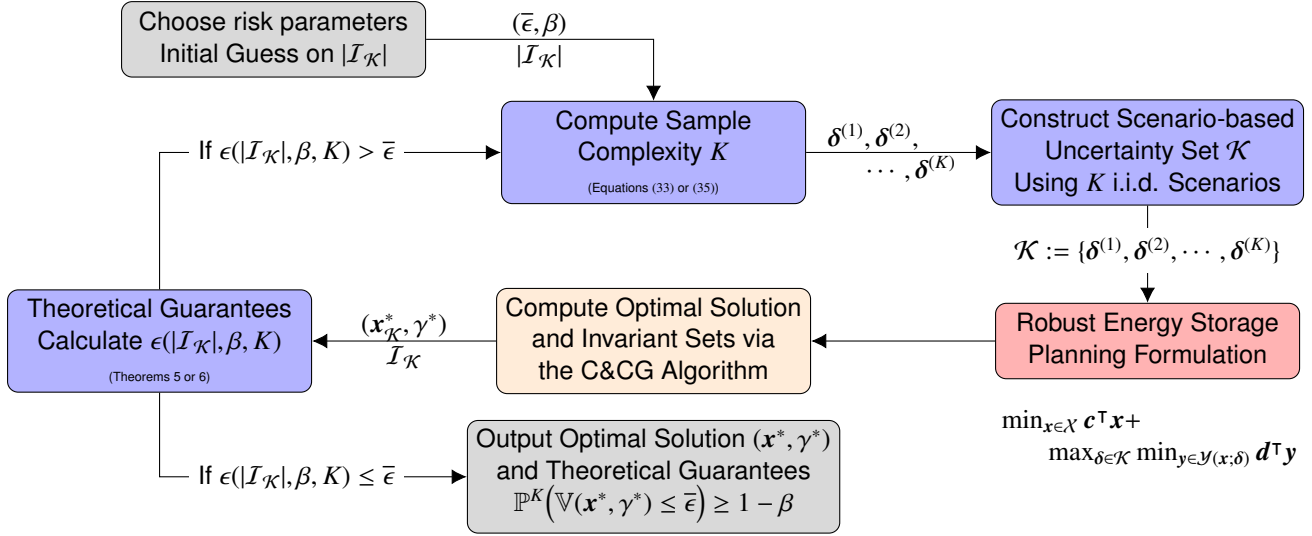

Proposition \ref{prop:CCG_return_invariant_set} states that C\&CG algorithm can identify an invariant set $\mathcal{O}$ while solving a two-stage RO problem.
After solving many instances of (c-RSP) and (nc-RSP) problems, we observed that the C\&CG algorithm usually converged within very few iterations, thus the calculated invariant sets $\mathcal{O}$ often have small cardinalities.
The next step is to apply a greedy algorithm (Algorithm \ifx\version\arxiv \ref{alg:irr} in \ref{sub:algorithms} \else 3 in the appendix of \cite{yan_two-stage_2021}\fi), checking if the removal any scenario from $\mathcal{O}$ changes optimal solution) to further pinpoint an essential set. 
Since $\mathcal{O}$ only consists of a few scenarios, \ifx\version\arxiv Algorithm \ref{alg:irr} \else this algorithm of finding an essential set \fi will stop after a few iterations. This indicates that it is computationally inexpensive to identify an essential set.

% \XG{missing the procedures of the }
% \begin{enumerate}
% \item Execute the computation procedures in part B or part C \XG{what are these?} for (c-RSP) or (nc-RSP) to obtain the solution $\bm{x}^*, \gamma^*$ satisfying the risk guarantee $\epsilon$ (i.e., LOLP).
% \item If $\gamma^*=0$, terminate the computation; otherwise, add the investment budget $C^{budget}$ or quantity budget $\overline{z}$ in an increment $\Delta C$ or $\Delta z$ , then go back to step 1). 
% \end{enumerate}
% subsection theoretical_guarantees_for_ (end)

\subsection{Reduce Conservativeness and Find Better Solutions} % (fold)
\label{sub:improve_solution_quality}
The procedures described in Sections \ref{sub:prior_guarantees_RSP} and \ref{sub:posterior_guarantees_RSP} are essentially a randomized algorithm. More specifically, solutions to (c-RSP) or (nc-RSP) with different sets of scenarios of the same size $K$ could be different, but all of them possess the same theoretical guarantees. We can exploit the randomness of the algorithm to get better storage planning decisions and reduce conservativeness. For example, in Section \ref{sec:case_study}, we solved (c-RSP) or (nc-RSP) 10 times using 10 different datasets of same size, then chose the storage planning solutions with the least investment cost. This approach effectively avoids overly conservative solutions and achieves a trade-off between investment cost and risk in the robust planning model.

% The theory can guarantee the solution can be against the future risk of short-term operation uncertainty, which maybe from the varying mode of renewable and load.    
% We can obtain a least investment cost solution with the same risk guarantee and the similar total operational cost. Due to the randomized properties, the investment solutions from the planning experiments with the same sample complexity but different random scenarios are usually different. They would have the same risk guarantees, meanwhile, these robust planning solutions usually have very close actual operational cost. These good properties allow us to choose a least-cost investment among the multiple random planning experiments. A more rigorous analysis is possible via order statistics, which beyond the scope of this paper. 

% \begin{prop}
% Reduce conservativeness. 
% \begin{equation}
% (1 - \beta)^M \approx 1 - M \beta
% \end{equation}
% but the randomness is reduced.
% \end{prop}

\section{Case Study} % (fold)
\label{sec:case_study}
Two case studies are presented in this section. The first one investigates the performance of the proposed approach in an environment with \emph{abundant data}. Uncertainties are depicted using probability distributions, from which we can sample as many scenarios as possible. The second case study is a thorough investigation of the proposed approach in a more realistic setting, i.e., a large-scale system with a limited number of available historical scenarios.
% In both cases, we examine the proposed approach in the following aspects: (1) convex versus non-convex formulations; (2) solutions with different number of scenarios; (3) quality of the optimal solution. risk level versus sample complexity; (3) .
% In both cases, the proposed approach worked well and the theoretical guarantees hold regardless of the modeling of uncertainties. 
Both case studies share the same settings of energy storage systems, critical parameters are summarized in Table \ref{tab:parameters}.

\ifx\version\arxiv % for arxiv version, single column
\begin{table*}[tb]
  \caption{Critical Parameters of Energy Storage Systems}
  \label{tab:parameters}
  \centering
\begin{tabular}{l|ccccccccccc}
  \hline

  \hline
  Parameter & $C_{n}^{\text{P}}$ & $C_{n}^{\text{E}}$ & $\Gamma$ & $\sigma$ & $\eta_{n}^{\text{ch}}$ & $\eta_{n}^{\text{dis}}$ & $\overline{\rho}$ & $\underline{\rho}$ \\
  \hline
  Value & $500$\$/kW & $20$\$/kWh & $10$ Years & $10\%$ & $0.9$ & $0.9$ & $0.8$ & $0.2$ \\
  \hline
  \hline
  Parameter  & $K_d$ & $c_{n}^{\text{dis}}$ &  $c_{n}^{\text{ch}}$ & $q^{\text{E}}$ & $q^{\text{P}}$ & $\overline{z}$ & $\overline{z}_n$\\
  \hline
  Value & $365$ & $18$\$/MW & $1$\$/MW & $32$MWh & $8$MW & $20$ & $4$ \\
  \hline

  \hline
  \end{tabular}
\end{table*}
\else
\begin{table*}[tb]
  \caption{Critical Parameters of Energy Storage Systems}
  \label{tab:parameters}
  \centering
\begin{footnotesize}
\begin{tabular}{l|ccccccccccc}
  \hline

  \hline
  Parameter & $C_{n}^{\text{P}}$ & $C_{n}^{\text{E}}$ & $\Gamma$ & $\sigma$ & $\eta_{n}^{\text{ch}}$ & $\eta_{n}^{\text{dis}}$ & $\overline{\rho}$ & $\underline{\rho}$ \\
  \hline
  Value & $500$\$/kW & $20$\$/kWh & $10$ Years & $10\%$ & $0.9$ & $0.9$ & $0.8$ & $0.2$ \\
  \hline
  \hline
  Parameter  & $K_d$ & $c_{n}^{\text{dis}}$ &  $c_{n}^{\text{ch}}$ & $q^{\text{E}}$ & $q^{\text{P}}$ & $\overline{z}$ & $\overline{z}_n$\\
  \hline
  Value & $365$ & $18$\$/MW & $1$\$/MW & $32$MWh & $8$MW & $20$ & $4$ \\
  \hline

  \hline
  \end{tabular}
\end{footnotesize}
\end{table*}
\fi
Since the scenario approach is a \emph{randomized} algorithm \cite{campi_exact_2008}, it is necessary to quantify the randomness of the solution returned by the scenario approach. Specifically, for a given sample complexity $K$, we repetitively solved RSP using 10 independent sets of $K$ scenarios. Each one of those 10 problems solved is referred as \emph{an experiment} in this section.

\subsection{IEEE 6-bus system}
\label{sub:ieee_6bus_case_study}
\subsubsection{System configuration} % (fold)
% \label{ssub:system_configuration}
The first case study is based on a modified IEEE 6-bus system \cite{data_driven_planning}. We added two 100MW wind farms at buses 5 and 6, and increased the cost coefficients of coal and natural gas plants 20\% higher than the original case. Key parameters of storage investment are in Table \ref{tab:parameters}.
% In this case study, the convex model is mainly used.

The uncertainties of load and wind are modeled by probability distributions. We first generated wind speed data from Weibull distributions\footnote{Key parameters: scale factor $11.0086$m/s, shape factor $1.9622$m/s, cut-in speed $V_\text{ci}=4$m/s, rated speed $V_\text{rated}=13.61$m/s, $V_\text{co}=25$m/s).}. The output of wind farms was computed based on the generated wind speed data using equation (44) in \cite{robust_storage_in}.
The load scenarios were generated by adding generated forecast errors on 365 days of historic load profiles in \cite{robust_storage_in}. We assume the load forecast errors follow a Gaussian distribution $\mathcal{N}(0,1\%)$. 
% subsubsection system_configuration (end)
% \ifx\version\elsevier % for arxiv version, single column
\subsubsection{Planning Results with Different Risk Parameters $\epsilon$}
\label{ssub:6bus_results_diff_epsilon}
 In  this case, we study the impact of sample complexity $K$ on the robust planning solution considering the randomness of the algorithm. We examine the costs and variances of 10 solutions of 10 experiments at the same sample complexity, i.e. the same risk level. There are 6 candidate locations in the test system, i.e. $|\mathcal{S}|=6$. When adopting the prior risk guarantees, i.e. Theorem \ref{thm:apriori_guarantee_convex},
given different risk parameters $\epsilon$, we computed the sample complexity $K$ with the same confidence parameter $\beta=10^{-3}$ in the Corollary \ref{cor:a_priori_guarantee_c_RSP}  . Results are in Table \ref{tab:priori_sample_complexity_convexity}. Given $K$ i.i.d. scenarios, we solved (c-RSP) and obtained optimal storage planning decisions. 
% \ref{fig:performance}.
\begin{table}[htbp]
  \caption{The sample complexity for the different a-priori risk guarantees, using Theorem }
  \centering
\begin{footnotesize}
\begin{tabular}{l|cccccc}
  \hline

  \hline
  \textbf{A-priori risk level $\epsilon$} & \textbf{5\%} & \textbf{10\%} & \textbf{20\%} & \textbf{30\%} & \textbf{40\%} & \textbf{50\%} \\
  \hline
  \textbf{Sample complexity $K$}   & 796 & 398 & 199 & 133 & 100 & 80\\
  \hline

  \hline
  \end{tabular}
\end{footnotesize}
  \label{tab:priori_sample_complexity_convexity}
\end{table}

\begin{figure}[htbp]
  \centering
  \includegraphics[width=\linewidth]{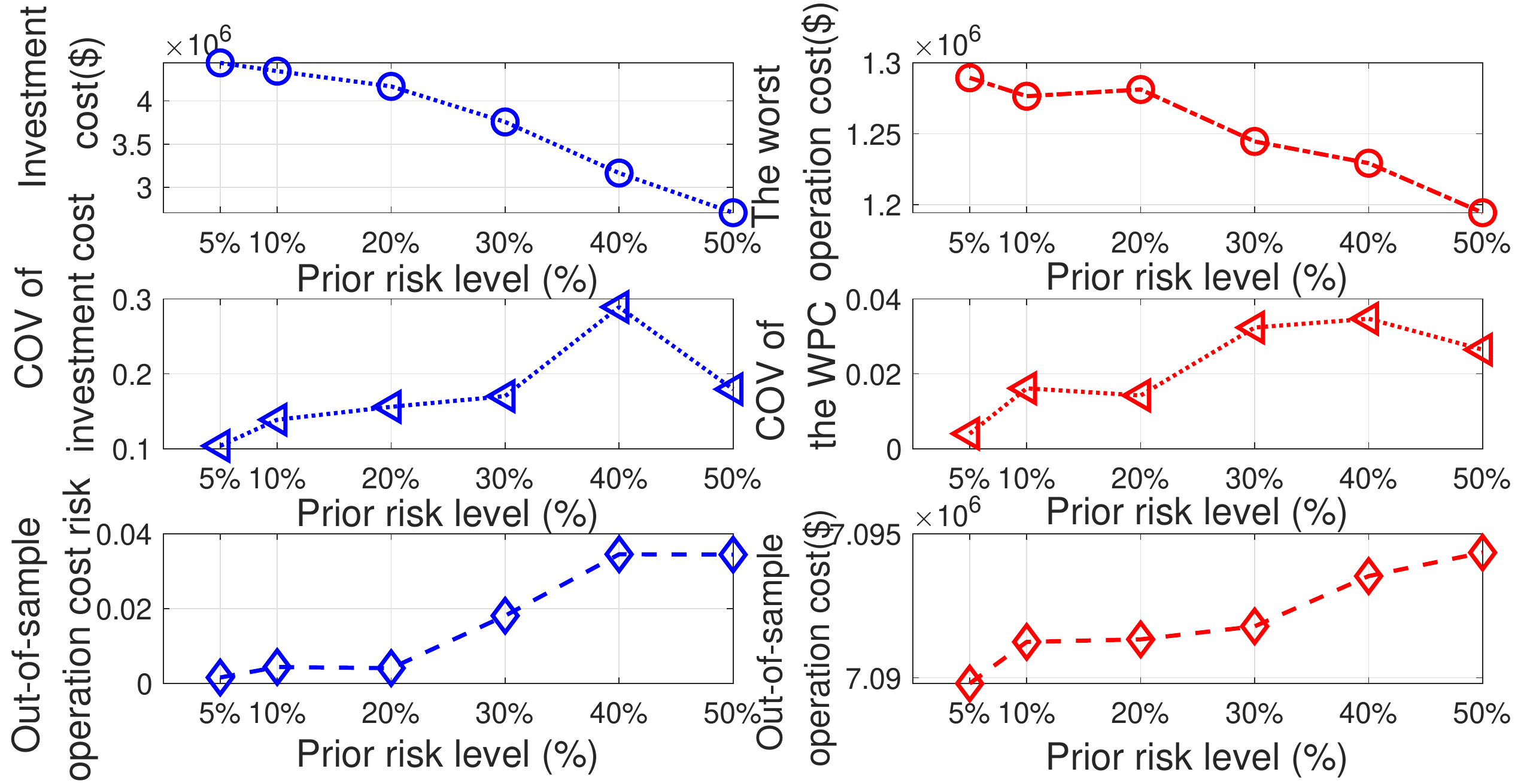}
  \caption{Storage Planning Results with Different A-priori Risk Guarantees $\epsilon$. All results were computed based on 10 independent experiments (Section \ref{ssub:6bus_results_diff_epsilon}).}
  \label{fig:multi}
\end{figure}

The first row of Fig. \ref{fig:multi} shows two components in the objective of (c-RSP): investment cost $C(\bm{x})$ and worst-case operational cost $c(\bm{y})$. Clearly planning decisions of larger risk parameters $\epsilon$ require less planning investments.
Although the top-right panel of Fig. \ref{fig:multi} also shows worst-case operational cost $c(\bm{y})$ decreases with increasing risk parameters $\epsilon$, we would like to emphasize that the worst-case operational cost $c(\bm{y})$ of (c-RSP) was evaluated on the sampled $K$ scenarios, which \emph{does not represent true operational costs}.

The true operational cost was estimated using another independent $1.6\times 10^4$ test scenarios, it is shown as the ``out-of-sample'' operation cost $c(\bm{y})$ in the bottom-right panel of Fig. \ref{fig:multi}. Clearly, the true operation cost increases with risk parameters $\epsilon$, since more load curtailment will arise.
The optimal solution $\gamma^*$ of (c-RSP) provides an estimate of worst-case operation cost. The bottom-left figure of Fig. \ref{fig:multi} shows $\mathbb{V}(\bm{x}^*, \gamma^*)$, i.e., the probability that the true operation cost is higher than $\gamma^*$ (see Remark \ref{rem:violation_probability_interpretation_general} and Section \ref{ssub:on_violation_probability}
). 

The second row of Fig. \ref{fig:multi} examines the randomness in the planning decisions. The main metric being used here is the coefficient of variance (COV), which is defined by the ratio of standard deviation of a random variable to its mean. Fig. \ref{fig:multi} illustrates the positive correlation between $\epsilon$ and COV values, more risk-averse (smaller $\epsilon$) solutions are more stable (smaller COV). "WPC" means the worst-case operation cost.
% \fi
\subsubsection{Planning Results with Different Investment Budgets $C^{\text{budget}}$} % (fold)
\label{ssub:results_with_varying_investment_budgets}
Results of 10 experiments (c-RSP) with the the same risk guarantee parameters ($\epsilon=0.01$, $\beta=10^{-3}$, $K=920$) but varying budgets $C^{\text{budget}}$ are presented below.
% The robust storage planning problem is essentially seeking a balance between current investment and future savings in operation.
We take a closer look at these two interesting cases in the following subsections. The case studies adopt the posterior risk guarantee for \emph{c-RSP} and correspond to Theorem \ref{thm:aposteriori_guarantee_convex} and the computation procedures in Section \ref{sub:theoretical_guarantees_cRSP}. For each result of planning experiment, we compute its out-of-sample risk level through sampling another $1.6*10^4$ scenarios from the uncertainty distribution as test data.

\begin{figure}
  \centering
  \includegraphics[width=\linewidth]{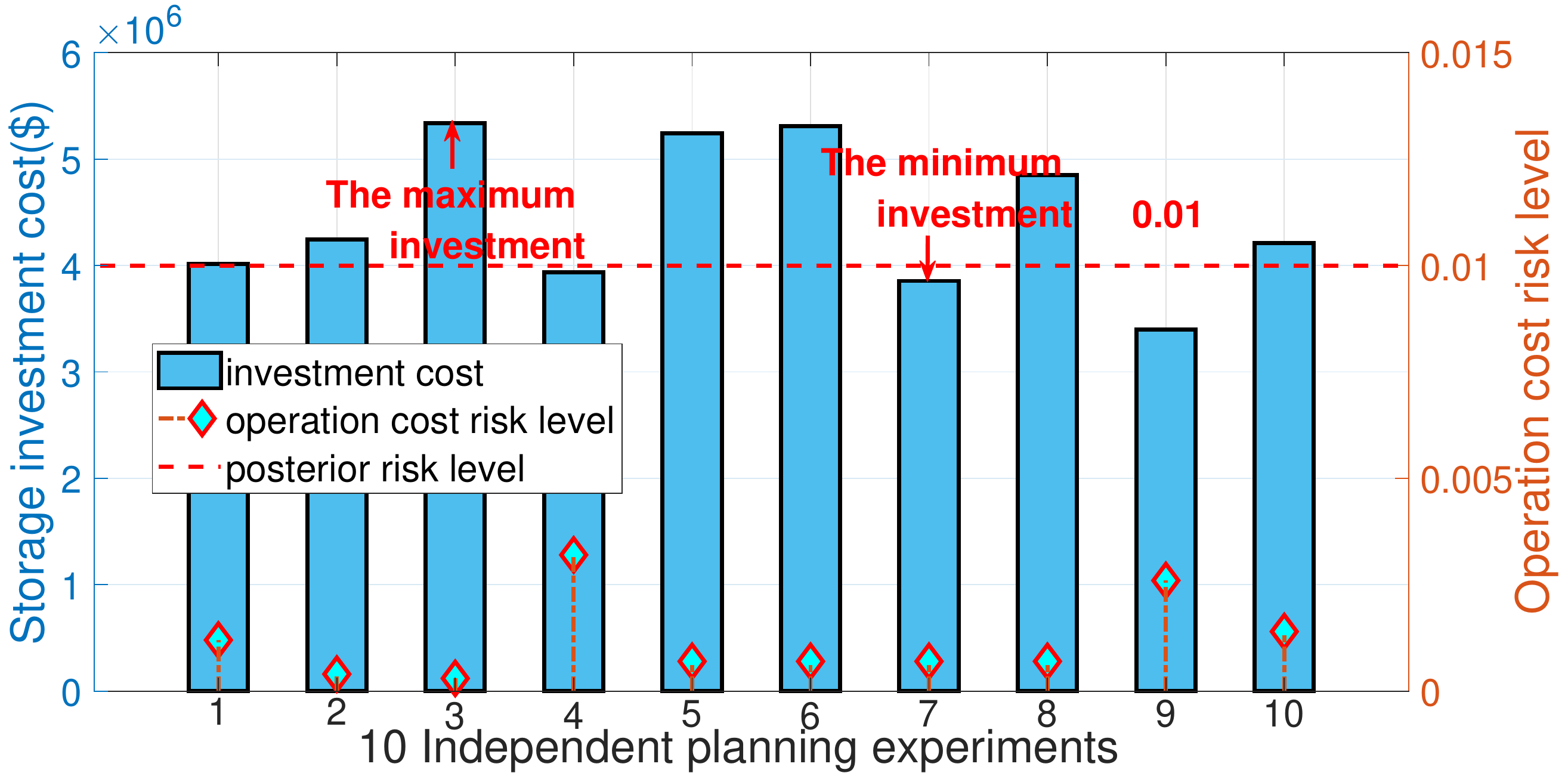}
  \caption{Storage planning results with adequate investment budget ($C(\bm{x}^*) < C^{\text{budget}}$)}
  \label{fig:6-bus-a-priori-risk}
\end{figure}

\paragraph{Adequate Investment Budget ($C(\bm{x}^*) < C^{\text{budget}}$)} % (fold)
\label{par:inactive_budget_constraints}
The purpose of robust planning is to avoid future risk by optimizing the system in the worst case. When the investment budget is adequate, i.e., inactive budget constraint $C(\bm{x}^*) < C^{\text{budget}} = \$6 \times 10^6$, the storage installment will \emph{eliminate} load curtailment in all scenarios in the uncertainty set $\mathcal{K}$.
Planning results are reported in Fig. \ref{fig:6-bus-a-priori-risk} show the storage investment costs and out-of-sample operation cost risk levels of all planning experiments. The cardinality of Invariant Set of all planning experiments are found to be 1. Since the out-of-sample violation probabilities (risk levels) are all below the risk parameter $\epsilon=0.01$ (red dotted line), we can choose the planning result with the lowest cost. This could effectively reduce the randomness and conservativeness of the scenario approach.

Despite the fact that the storage planning decisions of 10 experiments have different investment costs, they would not lead to the different actual operation cost. Actual operation cost is represented by the \emph{average} value of the operation costs of $1.6*10^4$ out-of-sample test scenarios.   The \emph{average} operation cost are almost identical, as shown in the bottom-left panel of Fig. \ref{fig:performance}. Hence, in the first row of Fig. \ref{fig:performance}, the investment costs and the total out-of-sample costs of the 10 experiments have the nearly same variation trend and amplitude. However, the bottom-right panel of Fig. \ref{fig:performance} shows that the worst-case operation cost in (c-RSP) varies due to drawing extreme scenarios. It is worth noting that the worst-case operation cost, instead of the \emph{average} operation cost, is embedded in the robust storage planning model. It is a good choice to avoid risks, but it  cannot reflect the actual operation costs. Our method can trade-off the cost and risk in the robust planning model.

% the blue circles represent the investment costs, the two black solid lines indicate the amplitude of variation of investment costs among the 10  independent planning experiments. The red circles represent the out-of-sample total system costs which show the same changing shape as that of the investment costs. The two black solid lines indicate the range of out-of-sample system operational costs among the 10 independent planning experiments. 
\begin{figure}
\centering
\includegraphics[width=\linewidth]{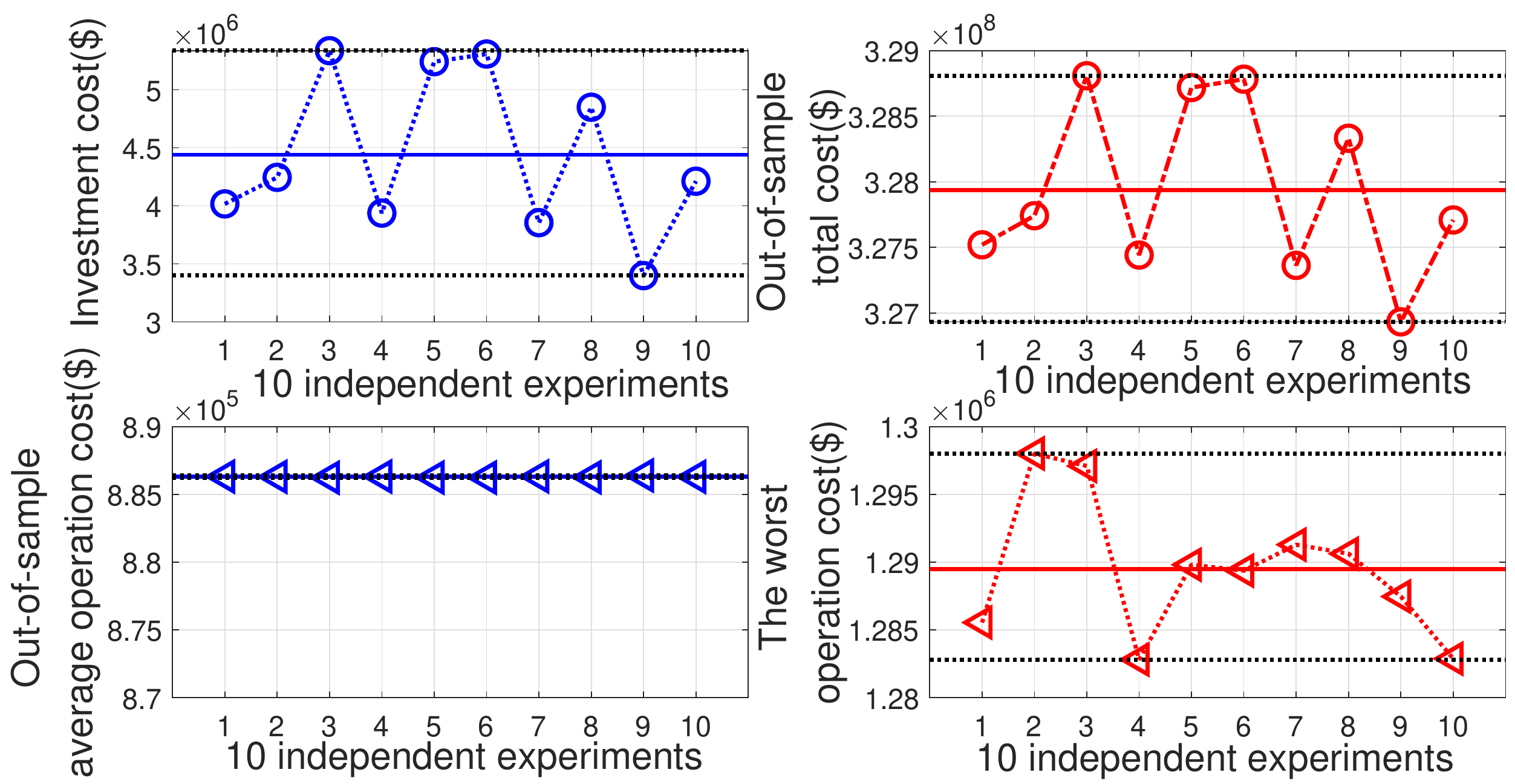}
\caption{Investment cost and operation costs with adequate investment budget ($C(\bm{x}^*) < C^{\text{budget}}$)}
\label{fig:performance}
\end{figure}

% In the second row of Fig. \ref{fig:performance}, we focus on the operational costs. The left shows the average operational costs of the 10 planning experiments are nearly the same, the right shows the different worst-case operational costs of the 10 planning experiments. Because in the real-life operation, there is a very low chance to meet the worst-case operational scenario, most scenarios are common scenarios. For those most common scenarios, they do not need the same amount of storage investment as the worst-case operational scenario does to improve the system operational cost, and their operational costs under different storage investments are nearly the same. That is the reason for their average operational costs are nearly the same. This also explains why the differences in costs of the 10 planning experiments are mainly on the investment costs, their system operational costs have very minor impacts.

% paragraph inactive_budget_constraints (end)
% \ifx\version\elsevier % for arxiv version, single column
\paragraph{Inadequate Investment Budget $C(\bm{x}^*) = C^{\text{budget}}$} % (fold)
\label{par:active_budget_constraints}
% \XG{come back to this after revising the theoretical section.}
When the budget is inadequate, the budget constraint is active $C(\bm{x}^*) = C^{\text{budget}}$, load curtailment are necessary in extreme scenarios. To ensure the load curtailment risk of the planning solution, we resort to the curtailment-minimizing formulation. In the first row  of Fig. \ref{fig:storage-incressing-investment-6} , the investment budget is gradually increased and the load curtailment is decreased to zero. According to the above theoretical analysis, the investment solution at $2.4\times10^6\$$ (0 load curtailment) can provide the needed load curtailment risk guarantee 0.01. 
Even though, at the investment $1.8\times10^6\$$, the out-of-sample load curtailment risk requirement seems to be satisfied. Considering the randomized property of the planning solutions, we adopt the planning solution in the investment of $2.4\times10^6\$$. Besides, all operation cost risks have been satisfied shown in the second row. 

\begin{figure}[htbp]
  \centering
  \includegraphics[width=\linewidth]{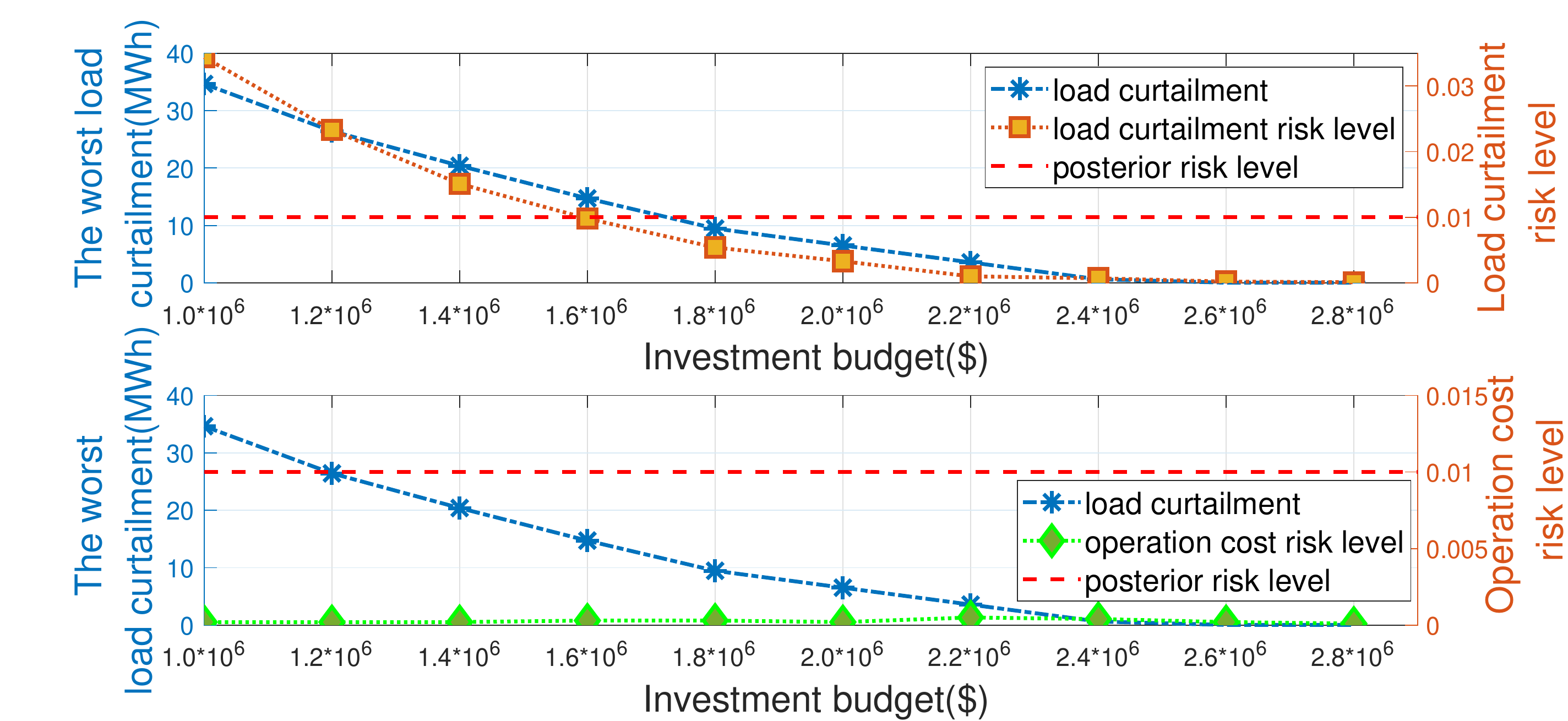}
  \caption{Storage planning with gradual increasing investments. }
  \label{fig:storage-incressing-investment-6}
\end{figure}
% paragraph active_budget_constraints (end)
% subsubsection results_with_varying_investment_budgets (end)
% \fi
\subsection{IEEE 118-bus system}
\label{sub:ieee_118_case_study}
\subsubsection{System Configuration} % (fold)
\label{ssub:system_configuration}
\ifx\version\arxiv
We conducted numerical simulations on a modified 118-bus system\footnote{The original system is available at \url{http://motor.ece.iit.edu/data}. Additional changes include generation costs are set 20\% higher than the original case. The hourly ramp rates of generators are set to 45\% of the maximum generation for the largest units in the system.}
\else
We conducted numerical simulations on a modified 118-bus system\footnote{The original system is available at \url{http://motor.ece.iit.edu/data}. Complete details of the modified system are at \cite{yan_two-stage_2021}.} 
\fi
with five 200MW wind farms added to buses 16, 37, 48, 75 and 83. All 118 buses are candidate sites for storage installation. The wind and load profiles are from the Electric Reliability Council of Texas (ERCOT)\footnote{Historical load data of ERCOT is from \url{http://www.ercot.com/mktinfo/loadprofile/alp}. Wind data is from \url{http://www.ercot.com/gridinfo/resource}.}.
The dataset consists of about 7300 days (20 years) of hourly wind and load profiles.
It is worth noting that we converted the wind generation profiles to wind capacity factor $\{\alpha_{n,t}^{w,(k)}\}_{n \in \mathcal{N}, t \in \mathcal{T}}$ ($0\sim 100\%$ of full capacities), which depict the spatial and temporal patterns of wind speeds.
Load profiles were converted to load factors $\{\alpha_{n,t}^{d,(k)}\}_{n \in \mathcal{N}, t \in \mathcal{T}}$ ($0\sim 100\%$), which mainly model the temporal variations of system demands. 
The actual load and wind generation were scaled up according to the predicted peak loads and wind capacities. 
% We would like to emphasize that wind factors and load factors represent short-term uncertainties, 
%\XG{Talk to Chao to figure out what he means for the sentences below:} Note the peak load and wind capacity at each node maybe change with time, we can not directly use the historical data as the input data. But we think the varying mode of uncertain data at each bus is not variable with the capacity. For example, the varying mode is the temporal wind output capacity factor (i.e. the rate of hourly output to capacity at a bus) or the temporal load factor (i.e. the rate of hourly load to peak load at a bus). They represent the short-term uncertainty in the planning problem.
%For example, Fig. \ref{fig:varying mode} shows 10 daily wind output and load varying mode, they are from the ECORT's historic data. However, the change of peak load and wind capacity at each node is the long-term uncertainty. In this paper, they are considered have been given by experts or professional prediction methods. Therefore, we extract the different varying modes of wind output and load from the historical output data of each year in the 20-year data set as the data set. Then by multiplying them with the predicted peak load and wind capacity, we can obtain the final data set. . These historical data are rescaled to meet the peak load and wind farm capacity settings of the 118-bus system. Fig. \ref{fig:varying mode} plots the \emph{normalized} load and wind profiles of 10 days from ERCOT historical data.
\ifx\version\arxiv
Wind capacity factors and load factors of 10 consecutive days are plotted in Figure \ref{fig:varying mode}.
\begin{figure}[htbp]
  \centering
  \includegraphics[width=\linewidth]{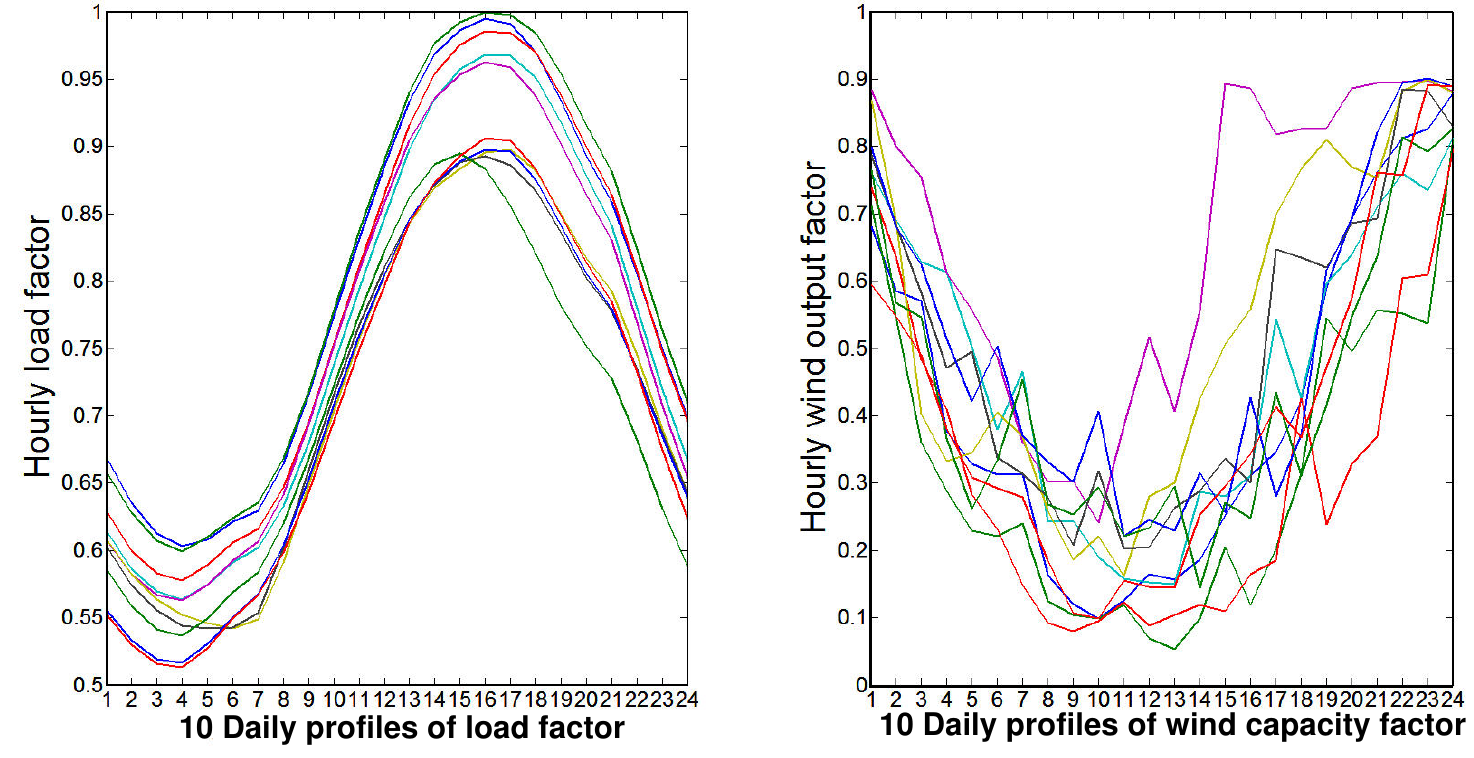}
  \centering
  \caption{10 Daily profiles of load factor and wind capacity factor in ECORT historical data}
  \label{fig:varying mode}
\end{figure}
\fi

\subsubsection{Storage Planning Results of Four Different Formulations} % (fold)
\label{ssub:storage_planning_results_of_four_different_formulations}
Similar to the 6-bus case in Section \ref{sub:ieee_6bus_case_study}, 10 experiments on the 118-bus system were conducted.
In each experiment, the 7300-day dataset was divided into two non-overlapping subsets.
The first subset contained the $K$ scenarios as input to the scenario problem. The second subset consisted of all remaining scenarios, which served as the test dataset to evaluate out-of-sample results.
\ifx\version\arxiv
If the remaining scenarios have more than 5000 scenarios, the test dataset includes 5000 scenarios; otherwise it only includes 4000 scenarios.
Storage planning results using four different formulations (see Table \ref{tab:four_formulations}) are reported in Figure \ref{fig:118-bus-results}. In all experiments, we used the same parameters $\epsilon=0.01$ and $\beta=0.001$. For non-convex formulations (nc-RSP), we used different parameters $\overline{z}_n = 10$ and $\overline{z}=1000$ from Table \ref{tab:parameters}.
\fi
We follow the procedures in Algorithm \ref{alg:procedures_to_get_guarantees} to solve (c-RSP) and (nc-RSP) problems and obtain theoretical guarantees.
% The first subset was the random sampled scenarios from the data set as input to (c-RSP) or (nc-RSP), the second part is the 5000 daily scenarios in the remaining data, which served as the test dataset to compute out-of-sample results. Actually, if we adopt more than 3000 scenarios as input to (c-RSP) or (nc-RSP), the remaining 4000 scenarios will be used as test data set.
\begin{figure*}[tb]
\begin{subfigure}[t]{0.49\linewidth}
  \centering
  \includegraphics[width=\linewidth]{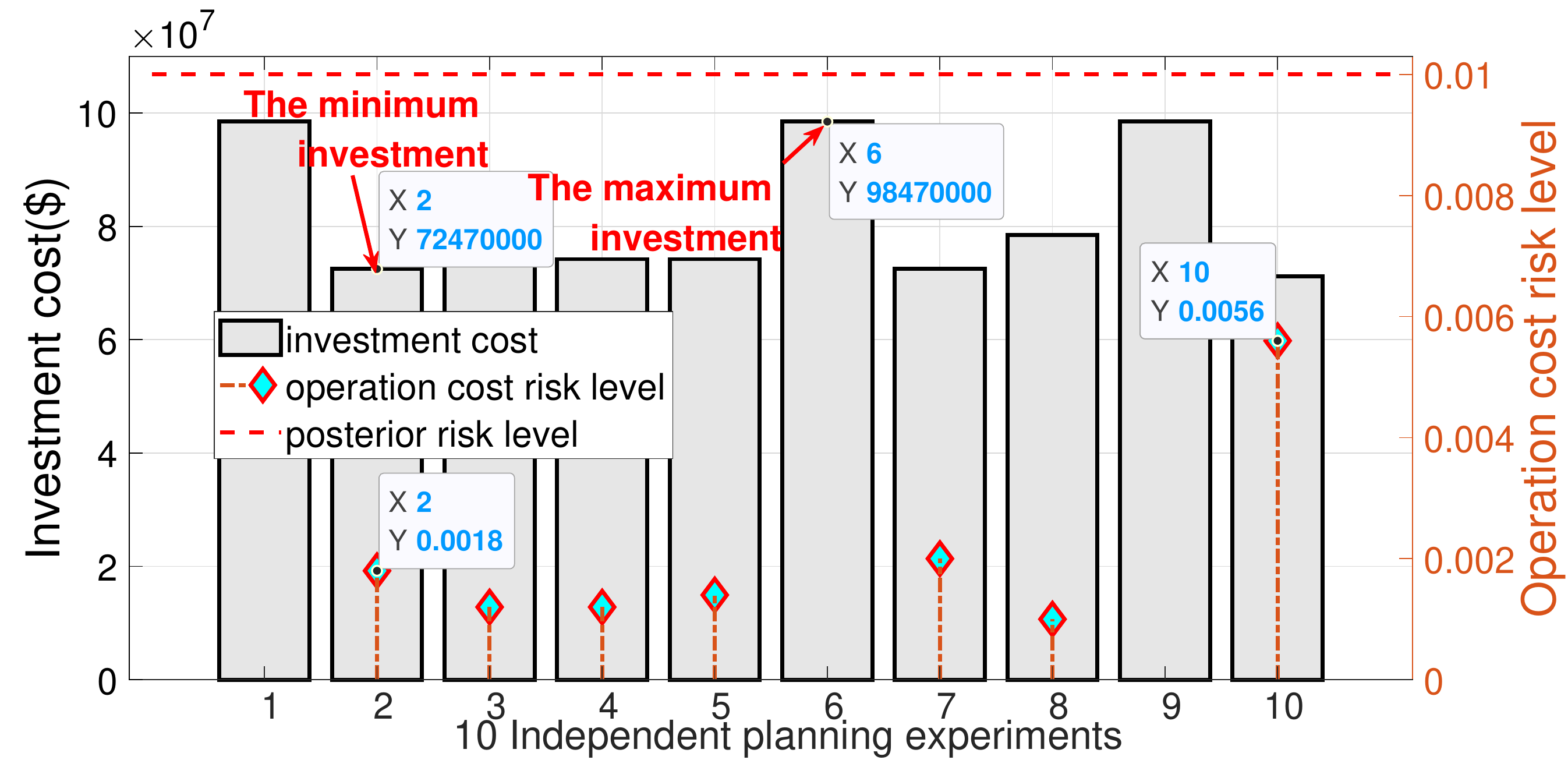}
  \caption{Cost-minimizing (c-RSP).}
  \label{fig:118-bus-continuous-investments-cost}
\end{subfigure}
\begin{subfigure}[t]{0.49\linewidth}
  \centering
  \includegraphics[width=\linewidth]{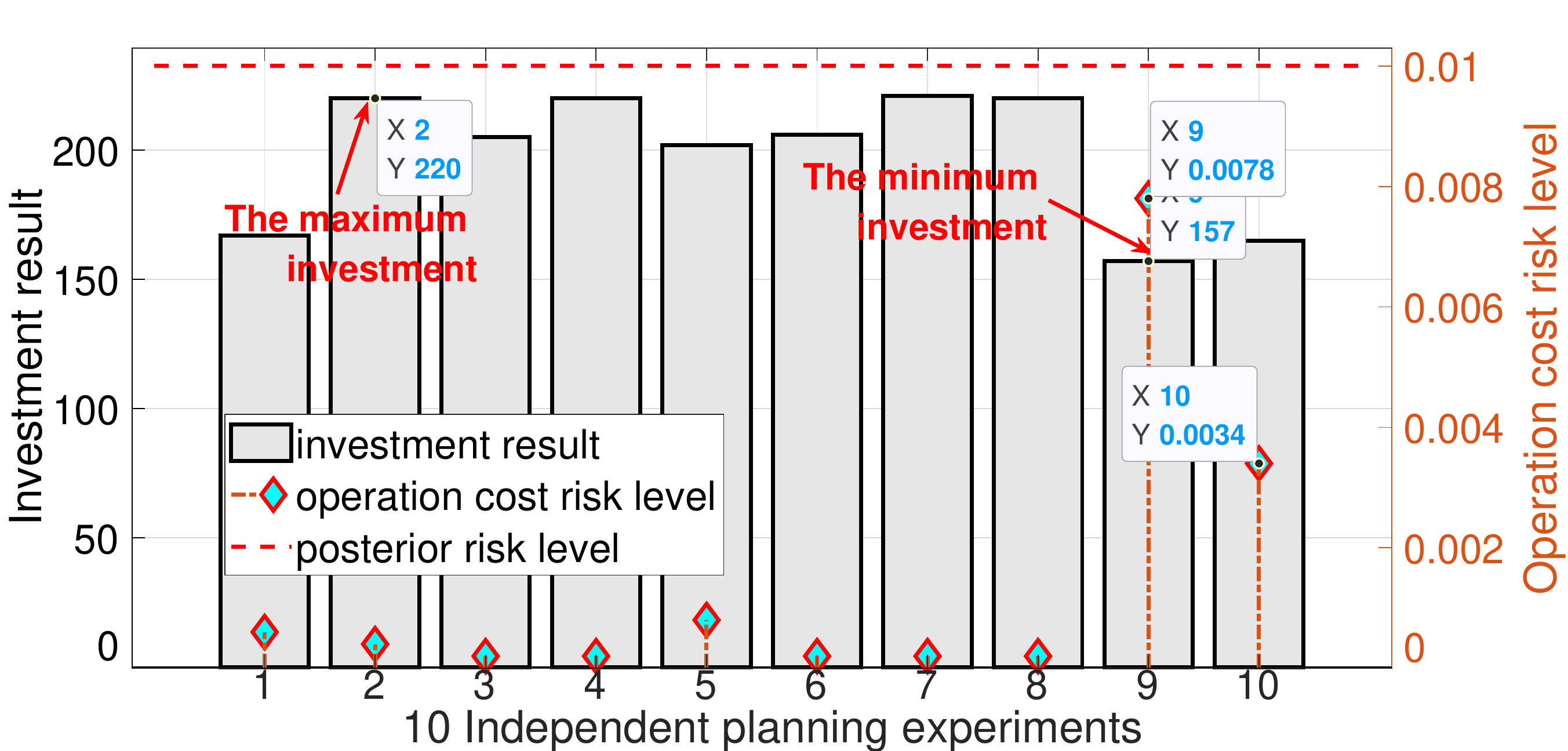}
  \caption{Cost-minimizing (nc-RSP)}
  \label{fig:118-bus-discrete-investments-cost}
\end{subfigure}
\begin{subfigure}[t]{0.49\linewidth}
  \centering
  \includegraphics[width=\linewidth]{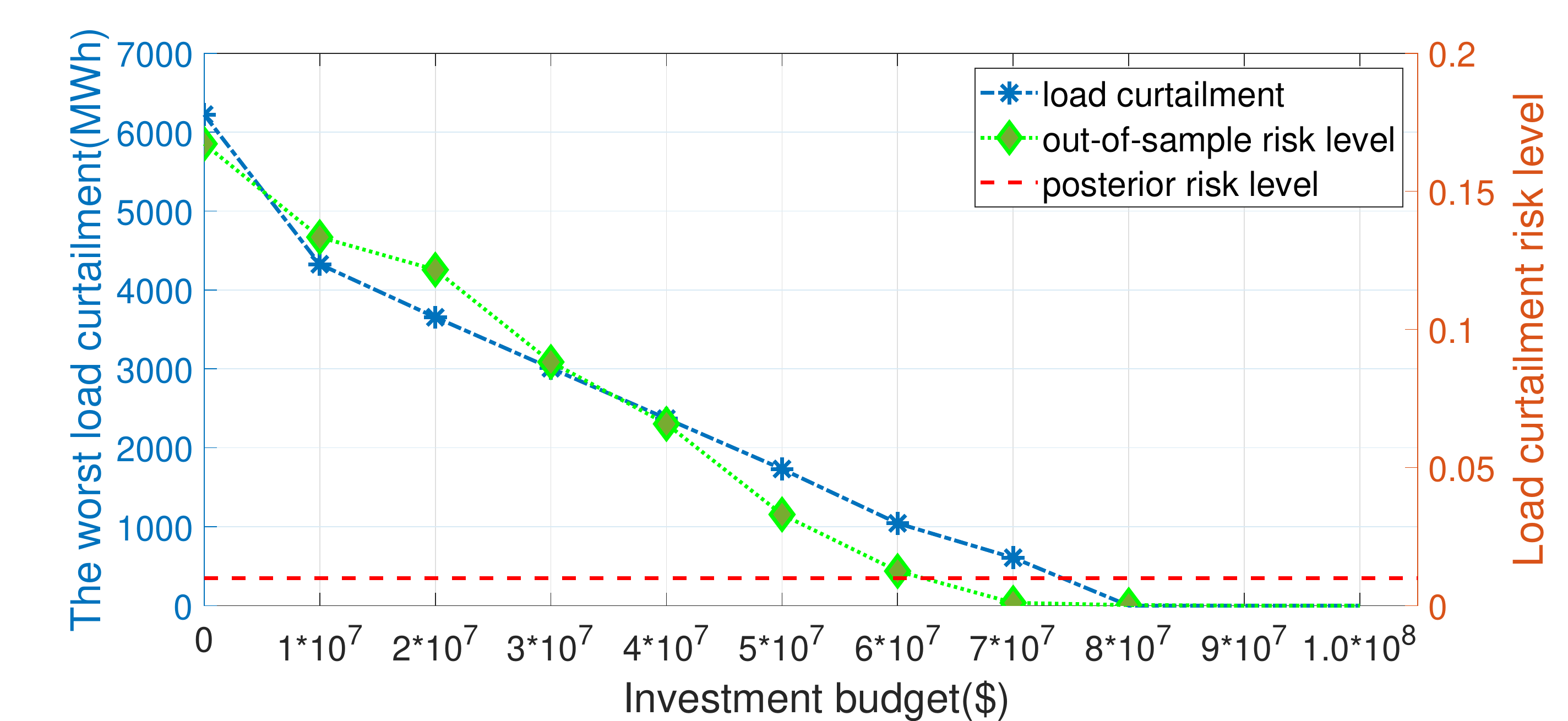}
  \caption{Curtailment-minimizing (c-RSP)}
  \label{fig:118-bus-continuous-investments-load}
\end{subfigure}
\begin{subfigure}[t]{0.49\linewidth}
  \centering
  \includegraphics[width=\linewidth]{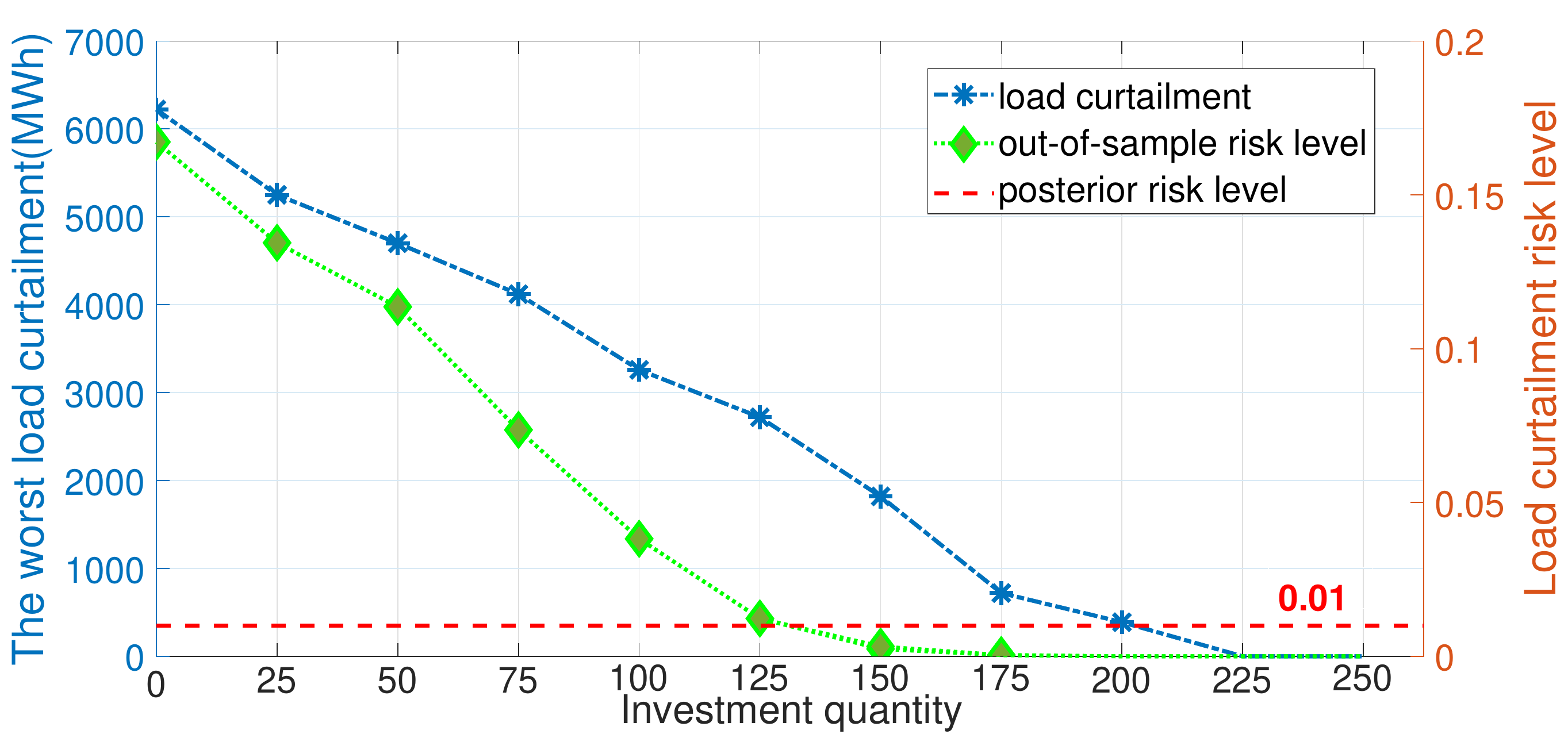}
  \caption{Curtailment-minimizing (nc-RSP)}
  \label{fig:118-bus-discrete-investments-load}
\end{subfigure}
\caption{Storage planning Results of the IEEE 118-bus system, using four different formulations \ifx\version\elsevier in Table \ref{tab:four_formulations}\fi}
\label{fig:118-bus-results}
\end{figure*}

\paragraph{Using Cost-minimizing Formulation} 
Fig. \ref{fig:118-bus-continuous-investments-cost} shows the storage planning results with adequate investment budget $C(\bm{x}) < C^{\text{budget}}=\$12\times10^7$.
% Each planning experiment employed the historical data being sampled randomly from the historical data. Then the other 5000 historical data is used to calculate the out-of-sample risk level to verify the risk guarantee.
No experiment reached the investment budget $C^{\text{budget}} $ and all experiments successfully maintained out-of-sample risk $\hat{\epsilon}$ within acceptable ranges ($\hat{\epsilon} < 0.01$). The largest out-of-sample risk level $\hat{\epsilon}=0.0068 \le 0.01$ happened in the $10^{th}$ planning experiment. This verifies the probabilistic guarantees in Theorem \ref{thm:aposteriori_guarantee_convex}.

Figure \ref{fig:118-bus-discrete-investments-cost} is the case that employs the non-convex storage planning model (nc-RSP), in which the capacities of energy storages are quantized. The y-axis in Figure \ref{fig:118-bus-discrete-investments-cost} is the total number of energy storage units to be installed.
We started by assuming $|\mathcal{I}|=1$ based on our observations after solving hundreds of (c-RSP) and (nc-RSP), and computed $K=2250$ using \eqref{eqn:sample_complexity_nonconvex_epsilon} to guarantee $\epsilon=0.01$ and $\beta=0.001$.
The out-of-sample violation probabilities in all 10 experiments are smaller than $\epsilon=0.01$, this verifies Theorem \ref{thm:aposteriori_guarantee_nonconvex}. Notice that the smallest total investment in storage units happened at the $9$th experiment, in which the largest out-of-sample $\hat{\epsilon} = 0.0078 < 0.01$ is within acceptable ranges. 
% The maximum investment quantity is 220 units ($2^{th}$ experiment), the minimum investment quantity is 157 units ($9^{th}$ experiment).
\ifx\version\arxiv
Notice that the $1$st and $10$th experiments have the same total storage units. This does not indicate the same storage planning solution, since they can be installed in different locations.
\fi

\paragraph{Using Curtailment-minimizing Formulation} 
Fig. \ref{fig:118-bus-continuous-investments-load} shows the convex planning result pursuing the posterior load-curtailment risk guarantee by the curtailment-minimizing formulation. To better study the impact of investment on load curtailment, in this case, the peak load level is 1.2 times of that in the above cost-minimizing planing experiment. With the increase of the investment budget, at the investment of $6 \times 10^7\$$, the out-of-sample risk level $\hat{\epsilon}$ is approaching the load curtailment risk requirement $\epsilon = 0.01$.  But considering the randomized property of the planning solution, the planning solution at the investment of $8 \times 10^7\$$ should be employed because of the 0 load curtailment is reached. The blue star is the objective function of the curtailment-minimizing formulation, i.e. the maximum load curtailment. The 0-curtailment means the risk requirement $\epsilon = 0.01$ is ensured according to the above theoretical analysis. The out-of-sample risk level (green line) based on historical data also verified the risk guarantee.

When the investment quantity is limited, the load-curtailment risk is important. Fig. \ref{fig:118-bus-continuous-investments-load} shows the non-convex planning result pursuing the posterior load-curtailment risk guarantee by the curtailment-minimizing formulation. This experiment corresponds to the above curtailment-minimizing convex planning experiment, their peak load levels and other system parameters are the same. With the increase of the investment quantity, at the investment of 150 units, the out-of-sample risk level $\hat{\epsilon}$ is below the load curtailment risk requirement $\epsilon$ 0.01. The theoretical guarantee is hold at 225 units. The blue star indicates that the objective of the curtailment-minimizing model reaches 0 at 225 units.
When we compared the non-convex and convex planning results by converting the quantity to investment cost, it is easy to find that the non-convex planning model needs more cost to reach the needed load curtailment risk guarantee. That is because the investment of quantized storage units have more physical limitations than the continuous storage units.

\FirstRound{The out-of-sample experiments of these above simulations illustrate the robust planning result has the probability guarantees for short-term uncertainties. Hence, we can employ a part of HOD to represent short-term uncertainties in the robust planning model. } 
      
% paragraph cost_oriented_formulation_versus_curtailment_oriented_formulation (end)

\subsubsection{Convex Formulation versus Non-convex Formulation} % (fold)
\label{ssub:convex_formulation_versus_non_convex_formulation}
Although this is a much bigger system than the 6-bus case, the cardinality of essential sets of (c-RSP) is always $|\mathcal{E}| = 1$.
With the same parameters $\epsilon=0.01$ and $\beta=10^{-3}$, the 118-bus system requires the same number of scenarios $K=920$ as the 6-bus system. 
Besides the 20 experiments reported in Figures \ref{fig:118-bus-continuous-investments-cost} and \ref{fig:118-bus-continuous-investments-load}, we solved many additional (c-RSP) problems using different number of scenarios. We never found an exception of (c-RSP) with $|\mathcal{E}| > 1$.
For non-convex problems, the observation $|\mathcal{E}| \le 1$ is not always true. Most of (nc-RSP) problems have $|\mathcal{E}| \le 2$, with a few exceptions in which $|\mathcal{E}|=2$. 

% After solving hundreds of RSP problems, we observe that the cardinality of essential set 
% The feature makes the application of Theorem \ref{thm:aposteriori_guarantee_nonconvex} for large-scale systems possible.

% subsubsection convex_formulation_versus_non_convex_formulation (end)

\subsubsection{Obtaining Theoretical Guarantees} % (fold)
\label{ssub:finding_essential_set}
We illustrate the process of searching for essential set $\mathcal{E}$ and obtaining theoretical guarantees using one instance of (nc-RSP). 
Based on our experiences in Section \ref{ssub:convex_formulation_versus_non_convex_formulation}, we first assume $|\mathcal{E}| = 1$. Given $\epsilon=0.01$ and $\beta=0.001$, we computed $K=2220$ using equation \eqref{eqn:sample_complexity_nonconvex_epsilon}. 
We first solved (nc-RSP) with $2220$ scenarios via C\&CG algorithm. Besides the optimal storage planning solution and optimal objective value $2.4253\times 10^9 \$$, C\&CG algorithm also returned an invariant set $\mathcal{I} = \{589, 164, 1732\}$ (Proposition \ref{prop:CCG_return_invariant_set}). 
Algorithm \ifx\version\arxiv \ref{alg:irr} \else of finding an essential set (Algorithm 3 in the appendix of \cite{yan_two-stage_2021}) \fi started with $\mathcal{I} = \{589, 164, 1732\}$ and removes scenarios one by one to check if the invariant set can be future reduced.
Table \ref{Tab:support scenario} illustrates this process.
\begin{table}[htbp]
  \caption{Finding Essential Set for (nc-RSP)}
  \label{Tab:support scenario}
  \centering
\begin{footnotesize}
  \begin{tabular}{r|cccc}
  \hline

  \hline
  \textbf{Iter} & \textbf{Scenario} & \textbf{Invariant Set} & \textbf{Objective Value} & \textbf{Invariant Set} \\
  \textbf{No.} &  \textbf{Removed} & \textbf{After Removal} & \textbf{After Removal} & \textbf{Updated} \\
  % & \textbf{to be Removed} & \textbf{$\mathcal{I} -s$ After Removal} & \\
  \hline
  0   & -  & \{164,  589, 1732\} & \$$2.4253\times 10^9$ & \{164, 589, 1732\} \\
  1   & 164 &  \{589, 1732\} & \$$2.4253 \times 10^9$  & \{589, 1732\} \\
  2   & 589 & \{1732\} &  \$$2.2765 \times 10^9$ & \{589, 1732\} \\
  3   & 1732  & \{589\} &  \$$2.3224 \times 10^9$ & \{589, 1732\} \\
  \hline

  \hline
  \end{tabular}
\end{footnotesize}
\end{table}
The first row of Table \ref{Tab:support scenario} shows the initialization of Algorithm \ifx\version\arxiv \ref{alg:irr} \else 3 \cite{yan_two-stage_2021}) \fi. In the first iteration, scenario $164$ was removed and the optimal objective values remained unchanged, thus the updated invariant set is $\{589, 1732\}$. In the second and third iterations, removing scenario $589$ and $1732$ changed optimal solutions, therefore the invariant set $\{589, 1732\}$ cannot be further reduced.

However, notice that the invariant set $\mathcal{I} = \{589, 1732\}$ has cardinality $2$, which is greater than our initial guess $1$.
There are two possibilities of following steps. First, we can stop here if the relaxed guarantee $\epsilon(2) = $ is acceptable. 
The second choice is to recompute $K=3012$ using $\epsilon=0.01$, $\beta=0.001$ and $|\mathcal{I}|=2$, solve (nc-RSP) using another independent set of $3012$ scenarios, and check if the cardinality of the new invariant set is smaller than $2$.

In practice, we can use all available scenarios to obtain more risk-averse (i.e., smaller $\epsilon$) solutions, without reserving a major part of datasets for out-of-sample analysis. This allows us to fully exploit the value of limited data, which is one major advantage of having theoretical guarantees.

\ifx\version\arxiv
\subsection{Improve Sample Complexity} % (fold)
\label{sub:reducing_sample_complexity}
The sample complexity can be further reduced. For example, if $|\mathcal{E}| \le k$ can be \emph{proved} by the special structure of problem, according to \emph{Remark 4} in \cite{campi_general_2018}, then equation (9) in \cite{campi_general_2018} can be used to improve the sample complexity in \eqref{non-convex-number}.
If we can prove $|\mathcal{I}| \le 1$, then the sample complexity of scenarios will be reduced from $2250$ to $1410$; if $|\mathcal{I}| \le 2$ is always true, then we only need $2147$ scenarios instead of $3012$.
\fi

\section{Concluding Remarks} % (fold)
\label{sec:concluding_remarks}
This paper studies the problem of energy storage planning in future power systems through a novel data-driven scenario approach. Using the two-stage robust formulation, we explicitly account for both shorter-term fluctuations (such as during hourly operation) as well as longer-term uncertainties (such as seasonable and yearly load variations) in the storage planning problem. 

Methodologically, we connect two-stage RO optimization with the scenario approach theory to provide theoretical guarantees on the potential risk of planning solutions. The theoretical guarantees hold for both non-convex and convex planning models. We further show that the operation risk consists of two critical components: the risk of exceeding the estimated cost and the risk of load curtailment to avoid infeasible real-time operations.

In the computational aspect, we design numerical algorithms to tighten theoretical guarantees. We show that while solving RSP problems via the C\&CG algorithm, we also obtain an invariant set, which is a key component to tighten theoretical guarantees. Numerical results on the 6-bus and 118-bus systems verify the correctness of theoretical results. Due to the structure of the two-stage robust planning problem formulation, we demonstrate that the essential set is typically small and can be pinpointed at very low computational costs. Numerical results indicates that this observation is regardless of system size, which makes the proposed approach scalable and applicable for large-scale systems.

This paper is a first step towards utilizing theoretically rigorous and computationally scalable approaches to  more integrated planning decision that will address both shorter-term and longer-term uncertainties in the future grid. Future work would include: (1) developing rigorous theories which provide upper bounds on the cardinality of essential sets of two-stage robust optimization problems; (2) applying the proposed framework on the joint planning of energy storage, renewable generation, transmission, and many other critical facilities in power systems; and (3) extending the proposed planning framework towards a more detailed data-driven modeling of long-term uncertainties.

% \YC{procedures of the storage planning}
% \YC{1. doing the cost-based algorithm, if there are no load curtailment in the model, then the result has the operational cost and operational constraints risk at the same time;}
% \YC{2. if there is one load curtailment event in these scenarios, then do the safety-based model to guarantee the operational constraints risk}
% \YC{This is the correct procedure of doing the planning, but because the space limit, I think we don't need to express the entire procedures}

% if have a single appendix:
%\appendix[Proof of the Zonklar Equations]
% or
%\appendix  % for no appendix heading
% do not use \section anymore after \appendix, only \section*
% is possibly needed

% use appendices with more than one appendix
% then use \section to start each appendix
% you must declare a \section before using any
% \subsection or using \label (\appendices by itself
% starts a section numbered zero.)
%

% \newpage

\ifx\version\arxiv
% \appendices
\appendix

\section{Additional Background and Basic Derivations} % (fold)
\label{sec:additional_background}
\subsection{Single-stage Decision Making} % (fold)
\label{sub:single_stage_decision_making}

\subsubsection{Epigraph Formulation} % (fold)
\label{ssub:epigraph_formulation}
The standard form of optimization problem is \eqref{opt:standard_opt_problem}.
\begin{subequations}
\label{opt:standard_opt_problem}
\begin{align}
\min_{\bm{x}}~& f_0(\bm{x}) \\
\text{s.t.}~& f_i(\bm{x}) \le 0,~i=1,2,\cdots,m. \\
& h_i(\bm{x}) = 0,~i=1,2,\cdots,p.
\end{align}
\end{subequations}
The \emph{epigraph formulation} of \eqref{opt:standard_opt_problem} is 
\begin{subequations}
\label{opt:epigraph_opt_problem}
\begin{align}
\min_{\bm{x},t}~& t \\
\text{s.t.}~& f_0(\bm{x}) \le t \\
& f_i(\bm{x}) \le 0,~i=1,2,\cdots,m. \\
& h_i(\bm{x}) = 0,~i=1,2,\cdots,p.
\end{align}
\end{subequations}
\begin{prop}[Chapter 4.1 of \cite{boyd_convex_2004}]
\label{prop:epigraph_formulation}
$(\bm{x}^\star, t^\star)$ is optimal for the epigraph formulation \eqref{opt:epigraph_opt_problem} if and only if $\bm{x}^\star$ is optimal for the standard form \eqref{opt:standard_opt_problem} and $t^\star = f_0(\bm{x}^\star)$.
\end{prop}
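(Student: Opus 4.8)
The plan is to reduce the equivalence to a comparison of the feasible sets and optimal values of the two problems, with no appeal to convexity (the statement is purely about the epigraph reformulation). First I would fix notation: let $p^\star$ be the optimal value of the standard form \eqref{opt:standard_opt_problem} and $q^\star$ the optimal value of the epigraph form \eqref{opt:epigraph_opt_problem}. The crux is a simple correspondence between feasible points. On one hand, if $\bm{x}$ is feasible for \eqref{opt:standard_opt_problem}, then $(\bm{x}, f_0(\bm{x}))$ is feasible for \eqref{opt:epigraph_opt_problem} and attains epigraph-objective value $f_0(\bm{x})$. On the other hand, if $(\bm{x}, t)$ is feasible for \eqref{opt:epigraph_opt_problem}, then $\bm{x}$ satisfies exactly the same constraints $f_i(\bm{x}) \le 0$, $h_i(\bm{x}) = 0$ appearing in \eqref{opt:standard_opt_problem}, so $\bm{x}$ is feasible there, and additionally $t \ge f_0(\bm{x})$.

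From this correspondence I would derive $q^\star = p^\star$. The first half gives $q^\star \le p^\star$, since every standard-feasible $\bm{x}$ yields an epigraph-feasible point with the same objective value; taking the infimum over such $\bm{x}$ gives the inequality. The second half gives $q^\star \ge p^\star$, since every epigraph-feasible $(\bm{x}, t)$ has $t \ge f_0(\bm{x}) \ge p^\star$. I would also note briefly that the degenerate cases transfer: \eqref{opt:standard_opt_problem} is infeasible if and only if \eqref{opt:epigraph_opt_problem} is, and $p^\star = -\infty$ if and only if $q^\star = -\infty$, so ``optimal'' is meaningful for one exactly when it is for the other.

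Then both directions of the ``if and only if'' follow quickly. For the forward direction, suppose $(\bm{x}^\star, t^\star)$ is optimal for \eqref{opt:epigraph_opt_problem}. Then $t^\star = q^\star = p^\star$, and feasibility of $(\bm{x}^\star, t^\star)$ forces $f_0(\bm{x}^\star) \le t^\star = p^\star$; since $\bm{x}^\star$ is feasible for \eqref{opt:standard_opt_problem} we also have $f_0(\bm{x}^\star) \ge p^\star$, hence $f_0(\bm{x}^\star) = p^\star$, so $\bm{x}^\star$ is optimal for \eqref{opt:standard_opt_problem} and $t^\star = f_0(\bm{x}^\star)$. For the backward direction, suppose $\bm{x}^\star$ is optimal for \eqref{opt:standard_opt_problem} and $t^\star = f_0(\bm{x}^\star)$. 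Then $(\bm{x}^\star, t^\star)$ is feasible for \eqref{opt:epigraph_opt_problem} with value $t^\star = f_0(\bm{x}^\star) = p^\star = q^\star$, so it is optimal.

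I do not anticipate a genuine obstacle in this argument; the only points requiring a little care are the degenerate cases (empty feasible region, objective unbounded below), so that the word ``optimal'' is used consistently across the two formulations, and the observation that the constraints $f_i, h_i$ appear verbatim in both problems, so that feasibility transfers in both directions without modification. Since the proof never uses convexity, it is valid in the general (possibly nonconvex) setting in which the proposition is stated.
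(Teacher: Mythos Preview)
Your proposal is correct and follows essentially the same approach as the paper: both arguments rest on the feasibility correspondence between the two formulations and compare objective values, with the paper phrasing each direction as a short contradiction while you first establish $p^\star = q^\star$ and then read off both implications directly. The content is the same, and like the paper you correctly note that convexity is nowhere used.
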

We want to emphasize the fact that \emph{Proposition \ref{prop:epigraph_formulation} holds true even when the original problem \eqref{opt:standard_opt_problem} is non-convex}. Although the proof of Proposition \ref{prop:epigraph_formulation} is elementary, we provide it below to emphasize the fact that it does not require convexity of \eqref{opt:standard_opt_problem}.
\begin{proof}[Proof of Proposition \ref{prop:epigraph_formulation}]
To prove ``$\Leftarrow$'', let  $\bm{x}^*$ denote the optimal solution to \eqref{opt:standard_opt_problem} and $t^* := f_0(\bm{x}^*)$. Clearly $(\bm{x}^*,t^*)$ is the optimal solution to \eqref{opt:epigraph_opt_problem}. Othwerwise, the optimal solution $\bm{x}^\star$ to \eqref{opt:epigraph_opt_problem} is a feasible solution to \eqref{opt:standard_opt_problem} with smaller objective, which causes contradiction. 

The proof of ``$\Rightarrow$'' is identical with above. Clearly $\bm{x}^\star$ is feasible to \eqref{opt:standard_opt_problem}, thus $f_0(\bm{x}^*) \le t^\star$. For the purpose of contradiction, we assume that $\bm{x}^* \ne \bm{x}^\star$ and $f_0(\bm{x}^*) < t^\star$. We can define $t^* := f_0(\bm{x}^*)$, then $(\bm{x}^*, t^*)$ is a feasible solution to \eqref{opt:epigraph_opt_problem} with smaller objective objective $t^* < t^\star$, which is a contradiction.
\end{proof}

% subsubsection epigraph_formulation (end)

\subsubsection{Chance-constrained Optimization} % (fold)
\label{ssub:chance_constrained_optimization}
A typical chance-constrained program is presented below:
\begin{subequations}
\label{opt:chance-constrained-program}
\begin{align}
\min_{\bm{x}}~& c^\intercal \bm{x} \\
\text{s.t.}~& f(\bm{x}) \le 0 \\
& \mathbb{P}_{\bm{\delta}}\big( h(\bm{x}, \bm{\delta}) \le 0 \big) \ge 1-\epsilon \label{eqn:chance}
\end{align}
\end{subequations}
where $x \in \mathbb{R}^{v}$ is the decision variable and random variable $\delta \in \Delta$ denotes the uncertainties. All deterministic constraints are represented by $f(\bm{x}) \le 0$. The chance constraint \eqref{eqn:chance} ensures the inner stochastic constraint $h(\bm{x},\bm{\delta}) \le 0$ is feasible with high probability at least $1- \epsilon$. 

\begin{defn}[Violation Probability \cite{campi_exact_2008}]
The violation probability $\mathbb{V}(\bm{x})$ of a candidate solution $\bm{x}$ is defined as
\begin{equation}
  \mathbb{V}(\bm{x}) := \mathbb{P}\big( \bm{\delta}: h(\bm{x},\bm{\delta}) > 0 \big)
\end{equation}
\end{defn}
% subsubsection chance_constrained_optimization (end)

\subsubsection{The Scenario Approach for Convex Problems} % (fold)
% subsubsection ss (end)
The scenario approach is one of the well-known solutions to chance-constrained programs \cite{geng_data-driven_2019-2}. 

To solve \eqref{opt:chance-constrained-program}, the scenario approach reformulates it to the scenario problem \eqref{opt:scenario-problem-standard} with $K$ i.i.d. scenarios $\Delta=\mathcal{K} := \{ \delta^{(1)},\delta^{(2)},\cdots,\delta^{({K})} \}$.
\begin{subequations}
\label{opt:scenario-problem-standard}
\begin{align}
\text{(SP)}_{\mathcal{K}}:~\min_{\bm{x}}~& c^\intercal \bm{x} \\
\text{s.t.}~& f(\bm{x}) \le 0 \\
&  h(\bm{x},\delta^{(1)}) \le 0, \cdots, h(\bm{x},\delta^{(K)}) \le 0
\end{align}
\end{subequations}
The scenario problem \eqref{opt:scenario-problem-standard} seeks the optimal solution $x_{\mathcal{K}}^*$ which is feasible to all $K$ scenarios. With a carefully chosen $K$, the optimal solution $\bm{x}_{\mathcal{K}}^*$ is a feasible solution to the chance-constrained program \eqref{opt:chance-constrained-program}, i.e., $\mathbb{V}(\bm{x}_{\mathcal{K}}^*) \le \epsilon$.

\begin{defn}[Support Scenario \cite{campi_exact_2008}]
\label{defn:support}
A scenario $\delta^{(i)}$ is a \emph{support scenario} for the  scenario problem $\text{SP}_{\mathcal{K}}$ if its removal changes the solution of $\text{SP}_{\mathcal{K}}$. $K$ is the number of all scenarios, $\mathcal{S}$ can denote the set of support scenarios, $|\mathcal{S}|$ is the number of support scenarios. 
\end{defn}
\begin{thm}[Theorem 2 of \cite{geng_computing_2019}]
\label{thm:number_of_support_scenarios_convex_case}
For any convex scenario problem \eqref{opt:scenario-problem-standard}, the cardinality of its invariant set is no more than the number of decision variables.
\end{thm}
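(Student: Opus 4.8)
The plan is to reduce Theorem~\ref{thm:number_of_support_scenarios_convex_case} to the classical Calafiore--Campi bound on the number of support constraints of a convex program (the statement is, after all, a restatement of \cite[Thm.~2]{geng_computing_2019}), and to prove that bound by Helly's theorem. Write $n$ for the number of decision variables of the scenario problem \eqref{opt:scenario-problem-standard}. First I would recall that for a convex scenario problem a minimal invariant set coincides with the set of \emph{support scenarios} --- those $\delta^{(i)}$ whose removal changes the optimal solution --- so that it suffices to bound the number of support scenarios by $n$. Before doing so I would make the standing reduction that the optimum $\bm{x}_{\mathcal{K}}^*$ is attained and unique; the general case follows from a generic lexicographic perturbation of $\bm{c}$, which enforces uniqueness and can only enlarge the support set, so the bound transfers. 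Under this reduction, ``$\delta^{(j)}$ is a support scenario'' is the same as ``deleting scenario $j$ strictly decreases the optimal value''. Set $v^*:=\bm{c}^\intercal\bm{x}_{\mathcal{K}}^*$, and let $\mathcal{X}_0:=\{\bm{x}:f(\bm{x})\le 0\}$ and $\mathcal{X}_i:=\{\bm{x}:h(\bm{x},\delta^{(i)})\le 0\}$ for $i=1,\dots,K$, all convex.

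The core step is a Helly argument run by contradiction. Suppose there are $s\ge n+1$ support scenarios, relabelled $1,\dots,s$. For each such $j$ pick an optimizer $\bm{x}^{(j)}$ of the problem with scenario $j$ deleted and every other constraint kept; then $\bm{c}^\intercal\bm{x}^{(j)}<v^*$ and $\bm{x}^{(j)}\in\mathcal{X}_k$ for all $k\ne j$. Put $v':=\max_{1\le j\le s}\bm{c}^\intercal\bm{x}^{(j)}<v^*$ and $H:=\{\bm{x}:\bm{c}^\intercal\bm{x}\le v'\}$. I would then invoke Helly's theorem in $\mathbb{R}^n$ for the finite convex family $\{H,\mathcal{X}_0,\mathcal{X}_1,\dots,\mathcal{X}_K\}$: any $(n+1)$-element subfamily has a common point, because (a) if it omits $H$ then the full optimizer $\bm{x}_{\mathcal{K}}^*$ lies in all its sets, and (b) if it contains $H$ it then contains at most $n$ of the $\mathcal{X}_k$, whose index set cannot exhaust all $s\ge n+1$ support indices, so it omits some support index $j_0$ and $\bm{x}^{(j_0)}\in H\cap\bigcap_{k\ne j_0}\mathcal{X}_k$ lies in that subfamily. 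Helly's theorem then yields a common point $\bar{\bm{x}}\in H\cap\bigcap_{k=0}^{K}\mathcal{X}_k$, i.e.\ a point feasible for \eqref{opt:scenario-problem-standard} with objective $\le v'<v^*$ --- contradicting optimality of $\bm{x}_{\mathcal{K}}^*$. Hence $s\le n$, which is the claim.

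Most of this is bookkeeping once the Helly family is set up correctly; the delicate parts are where I expect the real work. The main obstacle is the non-degeneracy reduction: I must argue, along the lines of \cite{campi_exact_2008,calafiore_uncertain_2005}, that a small generic perturbation of $\bm{c}$ both restores a unique optimizer and does not destroy any support scenario, so that the bound proved for the perturbed problem is valid for the original. A secondary point is to keep the deterministic set $\mathcal{X}_0$ inside the Helly family and to assume the infimum is attained with finite value; otherwise Helly would only produce a point feasible for a relaxation and the contradiction would collapse. Finally, the two-stage results of this paper are immediate corollaries: by Proposition~\ref{prop:two-stage-RO-as-one-stage}, (c-RSP) is equivalent to a single-stage convex scenario problem whose first-stage variables are $(\bm{x},\gamma)\in\mathbb{R}^{2|\mathcal{S}|+1}$, so the bound above specializes to the statement that every essential set of (c-RSP) has at most $2|\mathcal{S}|+1$ elements.
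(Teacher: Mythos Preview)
The paper does not actually prove Theorem~\ref{thm:number_of_support_scenarios_convex_case}; it is quoted verbatim as a known result from \cite{geng_computing_2019} (and ultimately from \cite{calafiore_uncertain_2005,campi_exact_2008}) and used as a black box in the proof of Proposition~\ref{prop:invariant_set_convex}. So there is no ``paper's own proof'' to compare your proposal against.

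That said, your Helly-based argument is the classical route to this bound and is correct. Two minor remarks. First, the theorem as stated in the paper is sloppy (``its invariant set'' is not well-defined, since $\mathcal{K}$ itself is always invariant); you rightly interpret it as a bound on the \emph{essential} set, i.e., on the number of support scenarios, and your reduction via a generic perturbation of $\bm{c}$ to the unique-optimizer case is the standard device. Second, your concluding paragraph --- applying the bound to the single-stage reformulation from Proposition~\ref{prop:two-stage-RO-as-one-stage} with $2|\mathcal{S}|+1$ first-stage variables --- is exactly the content of the paper's own proof of Proposition~\ref{prop:invariant_set_convex}, so on that corollary your approach and the paper's coincide.
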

\begin{defn}[Non-degenerate Scenario Problem\cite{campi_exact_2008}] Let $x_{{\mathcal{K}}}^*$ and $x_{{\mathcal{S}}}^*$ stand for the optimal solutions to the scenario problems $\text{SP}_{\mathcal{K}}$ and $\text{SP}_{\mathcal{S}}$, respectively. The scenario problem $\text{SP}_{\mathcal{K}}$ is \emph{non-degenerate} if $c^\intercal \bm{x}_{{\mathcal{K}}}^* = c^\intercal \bm{x}_{{\mathcal{S}}}^*$.
\end{defn}
\begin{assumption}[Non-Degeneracy\cite{campi_exact_2008}]
\label{ass:non_degeneracy}
For every $K$, the scenario problem \eqref{opt:scenario-problem-standard} is non-degenerate with probability 1 with respect to scenarios $\mathcal{K} := \{ \delta^{(1)},\delta^{(2)},\cdots,\delta^{({K})} \}$. 
\end{assumption}
\begin{assumption}[Feasibility]
\label{ass:feasibility}
Every scenario problem \eqref{opt:scenario-problem-standard} is feasible, and its feasibility region has a non-empty interior. The optimal solution $x_{\mathcal{K}}^*$ of \eqref{opt:scenario-problem-standard} exists. \footnote{If the problem has multiple solutions, a tie-break rule can be applied to get one unique solution.}
\end{assumption}

\begin{thm}[Prior Guarantees \cite{campi_exact_2008}]
\label{thm:a-priori scenario approach}
Under Assumptions \ref{ass:non_degeneracy} and \ref{ass:feasibility}, let $\bm{x}_{\mathcal{K}}^*$ be the optimal solution to the scenario problem $\text{SP}_{\mathcal{K}}$, it holds that
\begin{equation}
\label{a-priori scenario approach}
   \mathbb{P}^{K}\left\{\mathbb{V}\left(\bm{x}^{*}\right)>\epsilon\right\} \le \sum_{i=0}^{v-1} \binom{K}{i} \epsilon^{i}(1-\epsilon)^{K-i} 
\end{equation}
The probability $\mathbb{P}^{K}$ is taken with respect to drawing $K$ i.i.d. scenarios , and $v$ is the number of decision variables.
% The righthand-size of \eqref{a-priori scenario approach} is the tail of Beta function $\mathcal{B}(\epsilon;K,v)$.   
\end{thm}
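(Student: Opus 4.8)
The plan is to follow the classical argument of Campi and Garatti \cite{campi_exact_2008} within the scaffolding already set up in this appendix. The backbone is that the scenario optimum $\bm{x}_{\mathcal{K}}^*$ is always reconstructible from a small number of samples: by Theorem \ref{thm:number_of_support_scenarios_convex_case} the convex program $\text{SP}_{\mathcal{K}}$ has at most $v$ support scenarios, so there is $I\subseteq\{1,\dots,K\}$ with $|I|\le v$ and $\bm{x}_{\mathcal{K}}^*=\bm{x}_{I}^*$. Using Assumption \ref{ass:non_degeneracy} (non-degeneracy) and the standard reduction to a \emph{fully supported} program --- one in which the support set has exactly $v$ elements with probability one; the general case follows because any scenario program is a restriction of a fully supported one and passing to a restriction can only shrink the violation probability --- one may assume $|I|=v$ almost surely.

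The second step is a symmetrization over which $v$-tuple of samples forms the support set. Write $\{\mathbb{V}(\bm{x}_{\mathcal{K}}^*)>\epsilon\}$ as the (essentially) disjoint union, over the $\binom{K}{v}$ index sets $I$ of size $v$, of the events that $I$ is the support set of $\text{SP}_{\mathcal{K}}$ and $\mathbb{V}(\bm{x}_{\mathcal{K}}^*)>\epsilon$; since the $K$ samples are i.i.d.\ (hence exchangeable) all $\binom{K}{v}$ contributions are equal. Conditioning on $\bm{\delta}^{(1)},\dots,\bm{\delta}^{(v)}$ fixes $\bar{\bm{x}}:=\bm{x}_{\{1,\dots,v\}}^*$ and its violation $\mathbb{V}(\bar{\bm{x}})$; for $\{1,\dots,v\}$ to be the support set of the full program each of the remaining $K-v$ samples must be satisfied by $\bar{\bm{x}}$, a conditionally independent event of probability $(1-\mathbb{V}(\bar{\bm{x}}))^{K-v}$ (non-degeneracy discards the measure-zero event that a fresh sample lands exactly on the active boundary). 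This yields
\[
\mathbb{P}^{K}\{\mathbb{V}(\bm{x}_{\mathcal{K}}^*)>\epsilon\}=\binom{K}{v}\,\mathbb{E}\!\left[\mathbf{1}\{\mathbb{V}(\bar{\bm{x}})>\epsilon\}\,(1-\mathbb{V}(\bar{\bm{x}}))^{K-v}\right].
\]

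The crux --- and the step I expect to be the main obstacle --- is pinning down the law of $\mathbb{V}(\bar{\bm{x}})$, the violation probability of the $v$-sample fully supported program. Bounding $(1-\mathbb{V}(\bar{\bm{x}}))^{K-v}\le(1-\epsilon)^{K-v}$ on $\{\mathbb{V}(\bar{\bm{x}})>\epsilon\}$ already yields the coarser estimate $\binom{K}{v}(1-\epsilon)^{K-v}$; obtaining the \emph{sharp} binomial tail instead requires the exact fact that $\mathbb{V}(\bar{\bm{x}})$ has a $\mathrm{Beta}(v,1)$ law, i.e.\ density $v\,t^{v-1}$ on $(0,1)$. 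I would establish this by induction on $v$: removing a support constraint of the $v$-sample program relates its violation distribution to that of a fully supported $(v-1)$-sample program on an effectively lower-dimensional face, with a direct exchangeability argument settling $v=1$. Substituting this density into the display above, the expectation becomes $v\binom{K}{v}\int_\epsilon^1 t^{v-1}(1-t)^{K-v}\,dt$, which is precisely the incomplete-Beta representation of $\sum_{i=0}^{v-1}\binom{K}{i}\epsilon^{i}(1-\epsilon)^{K-i}$, the claimed bound.

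Throughout, the non-trivial ingredients are the measure-theoretic bookkeeping guaranteed by Assumption \ref{ass:non_degeneracy} --- that a newly drawn sample is generically either strictly inactive at $\bm{x}_{\mathcal{K}}^*$ or leaves the optimizer unchanged, so that the support set is well defined and the $\binom{K}{v}$ events above are genuinely disjoint up to a null set --- together with Assumption \ref{ass:feasibility} (existence and uniqueness of the optimum via a tie-break rule). Without these, both the fully supported reduction and the symmetrization fail, so these assumptions, rather than any new convexity input, are where the real work lies.
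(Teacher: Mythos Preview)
The paper does not supply its own proof of this theorem: it is stated in the appendix purely as a citation of \cite{campi_exact_2008} and is then invoked as a black box (e.g., in deriving Theorem~\ref{thm:apriori_guarantee_convex} via Proposition~\ref{prop:two-stage-RO-as-one-stage}). So there is no ``paper's proof'' to compare against beyond the original Campi--Garatti argument.

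Your sketch is a faithful outline of that original argument: the reduction to the fully supported case, the exchangeability/symmetrization over $\binom{K}{v}$ index sets, the conditional-independence factor $(1-\mathbb{V}(\bar{\bm{x}}))^{K-v}$, the identification of the law of $\mathbb{V}(\bar{\bm{x}})$ as $\mathrm{Beta}(v,1)$, and the incomplete-Beta identity $v\binom{K}{v}\int_\epsilon^1 t^{v-1}(1-t)^{K-v}\,dt=\sum_{i=0}^{v-1}\binom{K}{i}\epsilon^i(1-\epsilon)^{K-i}$ are all correct and are exactly the steps in \cite{campi_exact_2008}. The one place where your plan is thinner than the source is the derivation of the $\mathrm{Beta}(v,1)$ law: Campi and Garatti do not quite proceed by the inductive ``remove a support constraint and drop to a lower-dimensional face'' mechanism you describe, but rather obtain the distribution of $\mathbb{V}(\bm{x}_{\mathcal{K}}^*)$ directly in the fully supported case by a counting argument over which $v$ of the $K$ samples form the support set. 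Your inductive route can be made to work, but it requires care to argue that the restricted problem on the face is again fully supported with one fewer support constraint, which is where non-degeneracy is genuinely used.
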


\begin{thm}[Posterior Guarantee\cite{campi_wait-and-judge_2016}]
\label{thm:posterior_guarantee}
Given $\beta \in (0,1)$, for any $k=0,1,\cdots,K$, the polynomial equation in variable $\tau$
\begin{equation}
\label{eqn:poster_epsilon_function}
  \frac{\beta}{K+1} \sum_{i=k}^K \binom{i}{k} \tau^{i-k} - \binom{K}{k} \tau^{K-k} = 0
\end{equation}
has exactly one solution $\tau(k)$ in the interval $(0,1)$.
Let $\epsilon(k) := 1 - \tau(k)$. Under Assumptions \ref{ass:non_degeneracy} and \ref{ass:feasibility}, it holds that
\begin{equation}
  \mathbb{P}^K \Big(\mathbb{V}(\bm{x}_{\mathcal{K}}^{*}) \ge \epsilon(|\mathcal{S}|) \Big) \le \beta,
\end{equation}
where $|\mathcal{S}|$ is the number of support scenarios.
\end{thm}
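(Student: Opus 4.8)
The statement bundles two claims: an elementary algebraic fact --- that \eqref{eqn:poster_epsilon_function} has a unique root $\tau(k)\in(0,1)$ --- and the probabilistic guarantee. I would dispose of the first, then reduce the second to the ``wait-and-judge'' machinery of \cite{campi_wait-and-judge_2016}.

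For the root, fix $k$ with $0\le k<K$ and call the left-hand side of \eqref{eqn:poster_epsilon_function} $g(\tau)$. Dividing by $\tau^{K-k}>0$ turns $g(\tau)=0$ into $\psi(\tau)=\binom{K}{k}$, where $\psi(\tau):=\frac{\beta}{K+1}\sum_{i=k}^{K}\binom{i}{k}\,\tau^{-(K-i)}$. Each exponent $-(K-i)$ is non-positive and, since $k<K$, at least the $i=k$ term has a strictly negative exponent, so $\psi$ is continuous and strictly decreasing on $(0,\infty)$ with $\psi(\tau)\to+\infty$ as $\tau\downarrow 0$. Using the hockey-stick identity $\sum_{i=k}^{K}\binom{i}{k}=\binom{K+1}{k+1}$ together with $\frac{1}{K+1}\binom{K+1}{k+1}=\frac{1}{k+1}\binom{K}{k}$ gives $\psi(1)=\frac{\beta}{k+1}\binom{K}{k}<\binom{K}{k}$. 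Hence $\psi$ meets the level $\binom{K}{k}$ at exactly one point, which lies in $(0,1)$; this is $\tau(k)$, and $\epsilon(k):=1-\tau(k)\in(0,1)$ is well defined. (The boundary case $k=K$ does not occur when $K$ exceeds the number $v$ of decision variables, since then $|\mathcal{S}|\le v<K$ by Theorem~\ref{thm:number_of_support_scenarios_convex_case}; otherwise one uses the convention $\epsilon(K):=1$.)

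For the probabilistic bound I would run the localization argument. Under Assumptions~\ref{ass:non_degeneracy}--\ref{ass:feasibility}, for almost every draw of $\mathcal{K}$ the support set $\mathcal{S}\subseteq\mathcal{K}$ is well defined, $|\mathcal{S}|\in\{0,\dots,v\}$ (Theorem~\ref{thm:number_of_support_scenarios_convex_case}), and $\bm{x}_{\mathcal{K}}^{*}=\bm{x}_{\mathcal{S}}^{*}$, so
\[
  \mathbb{P}^{K}\!\big(\mathbb{V}(\bm{x}_{\mathcal{K}}^{*})\ge\epsilon(|\mathcal{S}|)\big)=\sum_{k=0}^{v}\mathbb{P}^{K}\!\big(|\mathcal{S}|=k,\ \mathbb{V}(\bm{x}_{\mathcal{K}}^{*})\ge\epsilon(k)\big).
\]
By exchangeability, conditioning on which $k$-subset of $\mathcal{K}$ is the support set, and noting that the event $\{\mathcal{S}=\{\delta^{(1)},\dots,\delta^{(k)}\}\}$ coincides (up to a null set) with the event that each of $\delta^{(1)},\dots,\delta^{(k)}$ is a support scenario of the $k$-scenario problem they define \emph{and} $\delta^{(k+1)},\dots,\delta^{(K)}$ are all satisfied by $\bm{x}_{k}^{*}$, each summand equals $\binom{K}{k}\int_{\epsilon(k)}^{1}(1-\lambda)^{K-k}\,dG_{k}(\lambda)$, where $G_{k}$ is the sub-probability law of $\mathbb{V}(\bm{x}_{k}^{*})$ on the event that all $k$ of its scenarios are support scenarios. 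The same computation with the integral over $[0,1]$, summed over $k$, yields the normalization $\sum_{k=0}^{v}\binom{m}{k}\int_{0}^{1}(1-\lambda)^{m-k}\,dG_{k}(\lambda)=1$ for every $m\ge v$.

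The remaining --- and hardest --- step is to show that over all tuples $(G_{0},\dots,G_{v})$ of sub-probability measures obeying that family of normalizations, $\sum_{k=0}^{v}\binom{K}{k}\int_{\epsilon(k)}^{1}(1-\lambda)^{K-k}\,dG_{k}(\lambda)\le\beta$. This is an infinite-dimensional linear program in the $G_{k}$'s (constraints indexed by $m\ge v$); its optimum is attained when each $G_{k}$ concentrates mass at $\lambda=\epsilon(k)$, and the value $\beta$ appears precisely because $\epsilon(k)$ is pinned by $g(\tau(k))=0$, i.e.\ by \eqref{eqn:poster_epsilon_function}. Constructing the certifying dual multipliers across the sample sizes $m$ is the technical core of \cite{campi_wait-and-judge_2016} and is the step I expect to be the main obstacle; since the theorem is recalled here only as background, I would either reproduce that argument or simply invoke \cite{campi_wait-and-judge_2016} for it, having given the self-contained proof of the root claim above. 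Note that Assumption~\ref{ass:feasibility} is used only to guarantee existence (and, via a tie-break, uniqueness) of $\bm{x}_{\mathcal{K}}^{*}$, while Assumption~\ref{ass:non_degeneracy} is exactly what makes the localization identity an equality rather than an inequality.
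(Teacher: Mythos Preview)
The paper does not prove this theorem at all: it is stated in the appendix purely as cited background from \cite{campi_wait-and-judge_2016}, with no accompanying argument. So your proposal goes well beyond what the paper does.

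Your self-contained proof of the root claim is correct. The reduction to $\psi(\tau)=\binom{K}{k}$ via division by $\tau^{K-k}$, the strict monotonicity of $\psi$ on $(0,\infty)$, the blow-up at $0^{+}$, and the evaluation $\psi(1)=\frac{\beta}{k+1}\binom{K}{k}$ via the hockey-stick identity are all clean and valid. You are also right to flag the boundary case $k=K$: as literally stated, the polynomial reduces to the nonzero constant $\frac{\beta}{K+1}-1$, so the ``exactly one root in $(0,1)$'' claim fails there and must be handled by convention (or by the observation that $|\mathcal{S}|\le v<K$ in the regime of interest). This is a wrinkle in the statement, not in your argument.

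Your sketch of the probabilistic bound is the standard localization argument of \cite{campi_wait-and-judge_2016}, and the decomposition into the sub-distributions $G_k$ together with the family of normalizations indexed by $m\ge v$ is exactly how that paper proceeds. You correctly identify the infinite-dimensional LP/duality step as the technical core and the main obstacle; that is precisely the part \cite{campi_wait-and-judge_2016} devotes most of its effort to, and the present paper makes no attempt to reproduce it. In short: your proposal is correct and strictly more detailed than the paper, which simply invokes the cited reference.
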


% If the scenario problem $\text{SP}_{\Delta_{K}}$ is convex, it will be a non-degenerate problem, this conclusion holds. However, for most non-convex problems, they are not non-degenerate problems, therefor, the conclusion is used for convex problems in most cases in a posteriori manner.

\subsubsection{A General Scenario Theory for Non-Convex Problems} % (fold)
\label{sub:a_general_scenario_theory}

In general, if we extend the scenario approach theory to the general case, e.g. non-convex optimization problems. The above approaches usually become inefficient. In a prior approach, there maybe not exist a bound on $|\mathcal{S}|$ for non-convex $(\text{SP})_{\mathcal{K}}$; in a posterior approach, the \emph{non-degeneracy assumption} is crucial for the wait-and-judge approach, however, most non-convex optimization problems are not guaranteed to be a non-degenerate scenario problem. Recently, \cite{campi_general_2018} extends the scenario approach to the general case with a better bound through removing the \emph{non-degeneracy} assumption.   

For non-convex problems, it is usually computationally intractable to find global optimal solutions. There are many algorithms that are capable of finding local optimal solutions. Then, we use $\mathcal{A}_{\mathcal{K}}$ to represent a (maybe suboptimal) solution to $(\text{SP})_{\mathcal{K}}$ obtained via algorithm $\mathcal{A}$.

\begin{defn}[Support Sub-sample \cite{campi_general_2018}] Given a sample $\left(\delta^{(1)}, \cdots, \delta^{(K)}\right) \in \mathcal{K},$ a support sub-sample $\mathcal{O}$ for $\left(\delta^{(1)}, \cdots, \delta^{(K)}\right)$ is a $O$-tuple of elements extracted from $\left(\delta^{(1)}, \cdots, \delta^{(K)}\right),$ i.e. $\mathcal{O}=\left(\delta^{\left(i_{1}\right)}, \cdots, \delta^{\left(i_{O}\right)}\right)$
, which gives the same solution as the original sample, that is,
\begin{equation}
\mathcal{A}_{\mathcal{O}}\left(\delta^{\left(i_{1}\right)}, \cdots, \delta^{\left(i_{O}\right)}\right)=\mathcal{A}_{K}\left(\delta^{(1)}, \cdots, \delta^{(K)}\right)
\end{equation}
A support sub-sample $\mathcal{O}=\left(\delta^{\left(i_{1}\right)}, \cdots, \delta^{\left(i_{O}\right)}\right)$ is said to $\mathcal{O}$ leaving the solution unchanged 
be \emph{irreducible} if no element can be further removed from $\mathcal{O}$.
\end{defn}

\begin{defn}[Irreducible Set \cite{campi_general_2018}] For a support sub-sample, it is said to be \emph{irreducible} if no elements can  be removed leaving the solution unchanged.
\end{defn}

It is possible that there exist many irreducible sets for a non-convex optimization problem, for the one has the minimal cardinality, it is called the essential set in \cite{campi_general_2018}. 

\begin{thm}[Posterior Probability Guarantee on Non-convex Problem \cite{campi_general_2018}]
 Suppose that Assumption \ref{ass:feasibility} holds true, and set a value $\beta \in(0,1)$ (confidence parameter). Let $\epsilon:$
$\{0, \cdots, K\} \rightarrow[0,1]$ be a function such that
\label{thm:non-convex-scenario}
\begin{subequations}
\begin{align}
&\epsilon(K)=1 \\
&\sum_{k=0}^{K-1}\binom{K}{k}(1-\epsilon(k))^{K-k}=\beta
\label{thm:non-convex-scenario-2}
\end{align}
\end{subequations}
Then, for any $\mathcal{A}_{\mathcal{K}}$ and probability $\mathbb{P},$ it holds that
\begin{equation}
\mathbb{P}^{K}\left\{\mathbb{V}\left(x_{\mathcal{K}}^{*}\right)>\epsilon\left(s_{\mathcal{K}}^{*}\right)\right\} \le \beta
\end{equation}
where $s_{\mathcal{K}}^{*}$ is the cardinality of one support sub-sample.
\end{thm}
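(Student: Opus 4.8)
The plan is to reduce the statement to the symmetrization argument underlying the whole scenario approach, now phrased in terms of the algorithm $\mathcal{A}$ and its support sub-samples rather than of support constraints. Throughout, for a candidate decision $x$ write $\mathbb{V}_{x}:=\{\delta:h(x,\delta)>0\}$, so that $\mathbb{V}(x)=\mathbb{P}(\mathbb{V}_{x})$, and let $x_{\mathcal{K}}^{*}=\mathcal{A}_{\mathcal{K}}(\delta^{(1)},\dots,\delta^{(K)})$, which under Assumption~\ref{ass:feasibility} and the convention that $\mathcal{A}_{\mathcal{K}}$ returns a (possibly suboptimal) solution of $(\text{SP})_{\mathcal{K}}$ is feasible for all $K$ sampled constraints. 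First I would fix, once and for all, a deterministic measurable rule that assigns to each realization one support sub-sample (in practice an irreducible one, which minimises $s_{\mathcal{K}}^{*}$); denote its index set by $I\subseteq\{1,\dots,K\}$ and its cardinality by $s_{\mathcal{K}}^{*}=|I|$. As $I$ ranges over all subsets of $\{1,\dots,K\}$, the events $E_{I}:=\{\text{the rule returns the index set }I\}$ partition the underlying probability space up to a null set, so
\[
\mathbb{P}^{K}\{\mathbb{V}(x_{\mathcal{K}}^{*})>\epsilon(s_{\mathcal{K}}^{*})\}=\sum_{k=0}^{K}\ \sum_{|I|=k}\mathbb{P}^{K}\!\left(E_{I}\cap\{\mathbb{V}(x_{\mathcal{K}}^{*})>\epsilon(k)\}\right).
\]
The term $k=K$ vanishes because $\epsilon(K)=1$ whereas $\mathbb{V}(\cdot)\le 1$ always, so only $k=0,\dots,K-1$ contribute.

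The core step is to bound each inner summand by $(1-\epsilon(k))^{K-k}$. On $E_{I}$, the defining property of a support sub-sample gives $x_{\mathcal{K}}^{*}=\mathcal{A}_{|I|}\big((\delta^{(i)})_{i\in I}\big)$; in particular $x_{\mathcal{K}}^{*}$ is a function of the samples indexed by $I$ alone, which I call $x_{I}$. Since $x_{\mathcal{K}}^{*}$ is feasible for the scenario problem on all $K$ samples, $\delta^{(j)}\notin\mathbb{V}_{x_{\mathcal{K}}^{*}}=\mathbb{V}_{x_{I}}$ for every $j\notin I$, and therefore
\[
E_{I}\cap\{\mathbb{V}(x_{\mathcal{K}}^{*})>\epsilon(k)\}\ \subseteq\ \{\mathbb{V}(x_{I})>\epsilon(k)\}\cap\bigcap_{j\notin I}\{\delta^{(j)}\notin\mathbb{V}_{x_{I}}\}.
\]
Taking $\mathbb{P}^{K}$ and conditioning on $(\delta^{(i)})_{i\in I}$ — on which $x_{I}$ is measurable, while the $\delta^{(j)}$ with $j\notin I$ are i.i.d.\ and independent of the conditioning — yields
\[
\mathbb{P}^{K}\!\left(E_{I}\cap\{\mathbb{V}(x_{\mathcal{K}}^{*})>\epsilon(k)\}\right)\ \le\ \mathbb{E}\!\left[\mathbf{1}\{\mathbb{V}(x_{I})>\epsilon(k)\}\,(1-\mathbb{V}(x_{I}))^{K-k}\right],
\]
and since $(1-\mathbb{V}(x_{I}))^{K-k}\le(1-\epsilon(k))^{K-k}$ on the event $\{\mathbb{V}(x_{I})>\epsilon(k)\}$ (here $k<K$), the right-hand side is at most $(1-\epsilon(k))^{K-k}$. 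Summing over the $\binom{K}{k}$ subsets $I$ of size $k$ and over $k=0,\dots,K-1$ gives
\[
\mathbb{P}^{K}\{\mathbb{V}(x_{\mathcal{K}}^{*})>\epsilon(s_{\mathcal{K}}^{*})\}\ \le\ \sum_{k=0}^{K-1}\binom{K}{k}(1-\epsilon(k))^{K-k}=\beta,
\]
the last equality being precisely the relation imposed on $\epsilon(\cdot)$ in the hypothesis.

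The hard part will not be the algebra but making the partition-and-symmetrization step watertight. I would need to exhibit a genuinely measurable deterministic rule for selecting one support sub-sample, verify that the events $E_{I}$ it induces really do form a measurable partition of the sample space, and — most delicately — justify the set inclusion above, which enlarges $E_{I}$ to ``every sample outside $I$ is satisfied by $x_{I}$''; it is this inclusion that decouples the $K-k$ unused samples, and it is exactly where the feasibility/consistency content of Assumption~\ref{ass:feasibility} is consumed. A secondary point is the tacit requirement that a support sub-sample always exists, so that the selection rule is well defined. Once these measure-theoretic details are settled, the binomial estimate and the final summation are routine; as a sanity check, a rule that ever returned the whole sample would give $s_{\mathcal{K}}^{*}=K$, $\epsilon(K)=1$, and the trivially valid bound $0\le\beta$, so it is irreducibility of the selected sub-sample that makes the guarantee useful in practice.
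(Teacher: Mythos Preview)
The paper does not actually prove this theorem: it is stated in the appendix as background material, with the citation \cite{campi_general_2018} and no accompanying proof. The paper simply quotes the result and then uses it as a black box to obtain Theorem~\ref{thm:aposteriori_guarantee_nonconvex} (explicitly described there as ``a direct corollary of Theorem~1 in \cite{campi_general_2018}''). So there is no ``paper's own proof'' to compare against.

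That said, your proposal is a faithful and essentially complete sketch of the argument in \cite{campi_general_2018}. The decomposition over the index set $I$ returned by a fixed support-sub-sample selection rule, the key inclusion that replaces $E_I$ by the event ``$x_I$ satisfies all $K-k$ remaining samples,'' the conditioning step that exploits independence of the unused samples from $(\delta^{(i)})_{i\in I}$, and the binomial summation against the calibration condition on $\epsilon(\cdot)$ --- this is exactly the structure of Campi and Garatti's original proof. The caveats you flag (measurability of the selection rule, existence of a support sub-sample, and the feasibility needed for the inclusion) are the right ones, and they are handled in \cite{campi_general_2018} essentially as you anticipate. Your sketch would be a perfectly good replacement for the missing proof; it just happens that the present paper chose to cite rather than reproduce it.
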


A simple choice of $\varepsilon(\cdot)$ obtained by splitting $\beta$ evenly among the $K$ terms in the sum \eqref{thm:non-convex-scenario-2} is:
\begin{equation}
\label{non-convex-number}
     \epsilon(k) =
\begin{cases}
1 & \text{if } k < K,\\
1-\sqrt[K-k]{\frac{\beta}{K \binom{K}{k}} } & \text{otherwise.} 
\end{cases} 
\end{equation}

To apply the probability guarantee for the non-convex problem, we first find an algorithm $\mathcal{A}$ to find the optimal solution for the non-convex problem. Then use the greedy algorithm $\mathcal{B}$ to search for a support sub-sample. The greedy algorithm in \cite{campi_general_2018} can ensure to find an irreducible set, but it needs to solve K non-convex optimization problems, it is sometimes computationally intractable. 

\begin{rem}[Remark 4 of \cite{campi_general_2018}]
Choices for $\epsilon(\cdot)$ other than \eqref{thm:non-convex-scenario-2} are possible and,
at times, advisable.
For example, if from the structure of the problem it was known that $s_K^*$ is always less than some $\overline{s}$, then it would make sense to deliberately ignore all the situations
where $s_K^* \ge \overline{s}$, thus allowing for stronger claims when $s_K^* < \overline{s}$. One possible choice is
\begin{equation}
\label{non-convex-number-improved}
     \epsilon(k) =
\begin{cases}
1 & \text{if } k < K,\\
1-\sqrt[K-k]{\frac{\beta}{\overline{s} \binom{K}{k}} } & \text{otherwise.} 
\end{cases} 
\end{equation}
\end{rem}

\section{Proofs and Algorithms} % (fold)
\label{sec:proofs_and_algorithms}
\subsection{Proofs} % (fold)
\label{sub:proofs}

There are two common formulations of robust optimization: using the \emph{robust} constraint 
\begin{subequations}
\label{opt:robust_linear_program_forall}
\begin{align}
\min_{\bm{x} \ge \bm{0}}~& \bm{c}^\intercal \bm{x} \\
\text{s.t.}~& \bm{A} \bm{x} \le \bm{d},~\forall \bm{A} \in \mathcal{U}
\end{align}
\end{subequations}
or the maximin objective
\begin{subequations}
\label{opt:robust_linear_program_maximin}
\begin{align}
\max_{\bm{A} \in \mathcal{U}}~\min_{\bm{x} \ge 0}~& \bm{c}^\intercal \bm{x} \\
\text{s.t.}~& \bm{A} \bm{x} \le \bm{d}.
\end{align}
 \end{subequations} 
Many researchers have studied the relationship between the two formulations above, e.g., \cite{beck_duality_2009,soyster_unifying_2013}. The main results is that these two formulations are equivalent if the uncertainty is constraint-wise. For robust linear programs, the constraint-wise assumption is always satisfied (see Section 2.1 of \cite{gorissen_practical_2015}), thus these two formulations are equivalent.
\begin{thm}[\cite{beck_duality_2009}, Section 8 of \cite{gorissen_practical_2015}]
\label{thm:equivalent_RLP_formulations}
The robust linear program \eqref{opt:robust_linear_program_forall} is equivalent the maximin formulation \eqref{opt:robust_linear_program_maximin}.
\end{thm}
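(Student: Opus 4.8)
The plan is to prove that the two optimal values coincide by bounding each by the other. Write $v_1$ for the optimal value of \eqref{opt:robust_linear_program_forall} and $v_2$ for that of \eqref{opt:robust_linear_program_maximin}, with the convention that a minimization over an empty feasible set equals $+\infty$. The first step is to record the structural fact that underlies everything: uncertainty in a robust linear program is \emph{constraint-wise} (Section 2.1 of \cite{gorissen_practical_2015}), because each constraint $\bm{a}_i^\intercal\bm{x}\le d_i$ involves only the $i$-th row of $\bm{A}$. We may therefore take $\mathcal{U}=\mathcal{U}_1\times\cdots\times\mathcal{U}_m$ with the rows ranging independently; this product structure is precisely what makes the equivalence hold, and it fails for row-coupled uncertainty sets.

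The inequality $v_2\le v_1$ is immediate and uses no structure. If $\bar{\bm{x}}$ is feasible for \eqref{opt:robust_linear_program_forall}, then $\bar{\bm{x}}$ is feasible for the inner linear program of \eqref{opt:robust_linear_program_maximin} for \emph{every} $\bm{A}\in\mathcal{U}$, so that inner minimum never exceeds $\bm{c}^\intercal\bar{\bm{x}}$; taking the maximum over $\bm{A}\in\mathcal{U}$ and then the infimum over feasible $\bar{\bm{x}}$ gives $v_2\le v_1$.

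The substance is the reverse inequality $v_1\le v_2$, and here I would use duality rather than a generic minimax theorem (the latter does not apply: for fixed $\bm{x}\ge\bm{0}$ the value function is convex, not concave, in $\bm{A}$). View \eqref{opt:robust_linear_program_forall} as the convex program $\min_{\bm{x}\ge\bm{0}}\{\bm{c}^\intercal\bm{x}:\ g_i(\bm{x})\le d_i,\ i=1,\dots,m\}$ with $g_i(\bm{x}):=\sup_{\bm{a}_i\in\mathcal{U}_i}\bm{a}_i^\intercal\bm{x}$. After disposing of the degenerate infeasible and unbounded cases and assuming a Slater-type constraint qualification, there exist optimal multipliers $\bm{y}^*\ge\bm{0}$ with $v_1+\bm{d}^\intercal\bm{y}^*=\min_{\bm{x}\ge\bm{0}}(\bm{c}^\intercal\bm{x}+\sum_i y_i^* g_i(\bm{x}))$. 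Using $\bm{y}^*\ge\bm{0}$ and the product structure, $\sum_i y_i^* g_i(\bm{x})=\sup_{\bm{A}\in\mathcal{U}}\bm{x}^\intercal\bm{A}^\intercal\bm{y}^*$ for every $\bm{x}\ge\bm{0}$, so the right-hand side equals $\min_{\bm{x}\ge\bm{0}}\sup_{\bm{A}\in\mathcal{U}}\bm{x}^\intercal(\bm{c}+\bm{A}^\intercal\bm{y}^*)$; the function being minimized is positively homogeneous of degree one on the orthant, so its minimum is $0$ or $-\infty$, and finiteness forces it to be $0$, giving $v_1=-\bm{d}^\intercal\bm{y}^*$ and, by a Farkas-type separation of the convex set $\{\bm{c}+\bm{A}^\intercal\bm{y}^*:\bm{A}\in\mathcal{U}\}$ from the nonnegative orthant, some $\bar{\bm{A}}\in\mathcal{U}$ (or in its closure) with $\bm{c}+\bar{\bm{A}}^\intercal\bm{y}^*\ge\bm{0}$. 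Then $\bm{y}^*$ is dual-feasible for the linear program $\min_{\bm{x}\ge\bm{0},\ \bar{\bm{A}}\bm{x}\le\bm{d}}\bm{c}^\intercal\bm{x}$ with dual value $-\bm{d}^\intercal\bm{y}^*=v_1$, so weak LP duality shows the inner program of \eqref{opt:robust_linear_program_maximin} at $\bar{\bm{A}}$ has value at least $v_1$, hence $v_2\ge v_1$. An alternative, and the route the literature actually takes, is to invoke the ``primal worst equals dual best'' theorem of \cite{beck_duality_2009} (see also Section 8 of \cite{gorissen_practical_2015}) together with LP strong duality; the chain just sketched is an unpacking of that result.

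I expect the main obstacle to be the reverse direction's reliance on (i) a constraint qualification guaranteeing the multipliers $\bm{y}^*$ for the semi-infinite constraints $g_i(\bm{x})\le d_i$, and (ii) the Farkas/separation step, which needs the relevant convex sets to be closed; both are standard but require care in the degenerate cases (empty robust-feasible region, unbounded inner LPs, row suprema $g_i$ not attained when $\mathcal{U}_i$ is noncompact). The easy direction and the product-structure rewriting are routine.
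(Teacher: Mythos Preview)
The paper does not supply its own proof of this statement; Theorem~\ref{thm:equivalent_RLP_formulations} is quoted from \cite{beck_duality_2009} and Section~8 of \cite{gorissen_practical_2015} and then invoked as a black box inside the proof of Proposition~\ref{prop:two-stage-RO-as-one-stage}. There is therefore nothing in the paper to compare your argument against.

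On its own terms your sketch is sound and is essentially an unpacking of the cited ``primal worst equals dual best'' route: the direction $v_2\le v_1$ is immediate; for $v_1\le v_2$ you dualise the semi-infinite robust problem, exploit the product structure $\mathcal{U}=\prod_i\mathcal{U}_i$ to rewrite $\sum_i y_i^*g_i(\bm{x})=\sup_{\bm{A}\in\mathcal{U}}\bm{x}^\intercal\bm{A}^\intercal\bm{y}^*$, and then exhibit a single $\bar{\bm{A}}$ making $\bm{y}^*$ dual-feasible for the corresponding inner LP, so that weak LP duality finishes. You also correctly flag where regularity enters (a constraint qualification for the semi-infinite program; closedness for the separation step). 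One small point worth making explicit: the separation producing $\bar{\bm{A}}$ with $\bm{c}+\bar{\bm{A}}^\intercal\bm{y}^*\ge\bm{0}$ relies on convexity of $\{\bm{c}+\bm{A}^\intercal\bm{y}^*:\bm{A}\in\mathcal{U}\}$, which follows from convexity of each $\mathcal{U}_i$ (the set is a translated Minkowski sum of the scaled $\mathcal{U}_i$'s) rather than from the product structure alone; you already call it ``the convex set,'' so you are implicitly assuming this, and it is the standard hypothesis in the cited references.
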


\begin{proof}[Proof of Proposition \ref{prop:two-stage-RO-as-one-stage}]
Proposition \ref{prop:two-stage-RO-as-one-stage} can be seen as an extention of Theorem \ref{thm:equivalent_RLP_formulations} towards two-stage robust optimization problems. For a given $\bm{x}$, the second-stage problem $\max_{\bm{\delta} \in \bm{\Delta}} \min_{\bm{y} \in \mathcal{Y}(\bm{x}; \bm{\delta})} \bm{d}^{\intercal} \bm{y}$ is equivalent with 
\begin{subequations}
\label{opt:second_stage_forall}
\begin{align}
\min_{\bm{y}}~&\bm{d}^{\intercal} \bm{y} \\
\text{s.t.}~&\bm{y} \in \mathcal{Y}(\bm{x}; \bm{\delta}),~\forall \bm{\delta} \in \bm{\Delta}
\end{align}
\end{subequations}
Using the epigraph formulation (see Proposition \ref{prop:epigraph_formulation}), \eqref{opt:second_stage_forall} can be equivalently formulated as
\begin{subequations}
\label{opt:second_stage_epigraph}
\begin{align}
\min_{\bm{y},\gamma}~& \gamma \\
\text{s.t.}~&\gamma \ge \bm{d}^{\intercal} \bm{y}\\
& \bm{y} \in \mathcal{Y}(\bm{x}; \bm{\delta}),~\forall \bm{\delta} \in \bm{\Delta} 
\end{align}
\end{subequations}
Putting two stages together, we get
\begin{subequations}
\label{opt:two_stage_robust_xyr}
\begin{align}
\min_{\bm{x},\bm{y},\gamma}~&\bm{c}^{\intercal} \bm{x} + \gamma \\
\text{s.t.}~& \bm{x} \in \mathcal{X} \\
&  \gamma \ge \bm{d}^{\intercal} \bm{y}\\
& \bm{y} \in \mathcal{Y}(\bm{x}; \bm{\delta}),~\forall \bm{\delta} \in \bm{\Delta} 
\end{align}
\end{subequations}
The last step is to define $\mathcal{Z}(\bm{\delta}) := \{(\bm{x},\gamma): \exists \bm{y} \in \mathcal{Y}(\bm{x}; \bm{\delta})~\text{and}~\bm{d}^{\intercal} \bm{y} \le \gamma\}$, which is essentially projecting the original feasible set $\{(\bm{x},\bm{y},\gamma): \gamma \ge \bm{d}^{\intercal} \bm{y}~\text{and}~\bm{y} \in \mathcal{Y}(\bm{x}; \bm{\delta}),~\forall \bm{\delta} \in \bm{\Delta} \}$ onto the $\bm{x}$-space.
\end{proof}
\begin{proof}[Proof of Corollary \ref{cor:a_priori_guarantee_c_RSP}]
Corollary \ref{cor:a_priori_guarantee_c_RSP} is a direct conclusion of Theorem \ref{thm:apriori_guarantee_convex} and Proposition \ref{prop:two-stage-RO-as-one-stage}. \eqref{eqn:K_choice_c_RSP} is simply replacing d with $2|\mathcal{S}|+1$ in \eqref{eqn:K_choice_convex}.
\end{proof}

% \iffalse
\begin{rem}
The additional variable $\gamma$ might seem unnecessary, the following formulation is equivalent without $\gamma$:
\begin{subequations}
\label{opt:two_stage_robust_xy}
\begin{align}
\min_{\bm{x},\bm{y},\gamma}~&\bm{c}^{\intercal} \bm{x} + \bm{d}^{\intercal} \bm{y} \\
\text{s.t.}~& \bm{x} \in \mathcal{X} \\
& \bm{y} \in \mathcal{Y}(\bm{x}; \bm{\delta}),~\forall \bm{\delta} \in \bm{\Delta} 
\end{align}
\end{subequations}
However, the number of support scenarios of formulation \eqref{opt:two-stage-robust-single} is much smaller than \eqref{opt:two_stage_robust_xy}.
\end{rem}

% \begin{thm}[Minimax Theorem by John von Neumann \cite{neumann1928theorie,du2013minimax}]
% \label{thm:minimax}
% Let $\mathcal{X} \subseteq \mathbb{R}^n$ and $\mathcal{Y} \subseteq \mathbb{R}^m$ be compact and convex sets. If $f: \mathcal{X}\times \mathcal{Y}\rightarrow \mathbb{R}$ is a continuous function that is concave-convex, i.e., $f(\cdot,y): \mathcal{X}\rightarrow \mathbb{R}$ is concave for fixed $y$, and $f(x,\cdot): \mathcal{Y}\rightarrow \mathbb{R}$ is convex for fixed $x$, then
% \begin{equation}
% \max_{\bm{x} \in \mathcal{X}}~\min_{\bm{y} \in \mathcal{Y}}~f(\bm{x},\bm{y}) = \min_{\bm{y} \in \mathcal{Y}}~\max_{\bm{x} \in \mathcal{X}}~f(\bm{x},\bm{y}) 
% \end{equation}
% \end{thm}

\begin{proof}[Proof of Proposition \ref{prop:invariant_set_convex}]
For the convex formulation (c-RSP), there are only two first-stage decision variables $E_n$ and $P_n$. Using Proposition \ref{prop:two-stage-RO-as-one-stage}, (c-RSP) can be equivalently formulated as a single-stage problem \eqref{opt:two-stage-robust-single} with an additional variable $\gamma$. This new formulation does not have any second-stage decision variables $\bm{y} = (p_{n,t}^{\text{ch}},p_{n,t}^{\text{dis}},p_{i,n,t}^{\mathcal{G}},p_{n,t}^\text{shed}, v_{n,t})$. The last step is to apply Theorem \ref{thm:number_of_support_scenarios_convex_case} on \eqref{opt:two-stage-robust-single}, which consists of $2|\mathcal{S}|+1$ decision variables ($E_n$, $P_n$, and $\gamma$).
% For the convex formulation (c-RSP), the binary planning decision variable $z_{n}$ is relaxed as continuous variable $0 \le z_{n} \le 1$. After eliminating variables $E_n$ and $P_n$ using \eqref{energy-discrete} and \eqref{power-discrete}, the first stage of (c-RSP) only consists of variables $z_n$.
% Using Proposition \ref{prop:two-stage-RO-as-one-stage}, (c-RSP) can be equivalently formulated as a single-stage problem \eqref{opt:two-stage-robust-single} with an additional variable $\gamma$. This new formulation does not have any second-stage decision variables $\bm{y} = (p_{n,t}^{\text{ch}},p_{n,t}^{\text{dis}},p_{i,n,t}^{\mathcal{G}},p_{n,t}^\text{shed}, v_{n,t})$. The last step is to apply Theorem \ref{thm:number_of_support_scenarios_convex_case} on \eqref{opt:two-stage-robust-single}, which consists of $|\mathcal{S}|+1$ decision variables ($z_n$ and $\gamma$).
% This proposition is simple corollary of Theorem \ref{thm:number_of_support_scenarios_convex_case} applying on \eqref{opt:two-stage-robust-scenarios}. It is based on the facts that \eqref{opt:two-stage-robust-single-scenario} is equivalent with \eqref{opt:two-stage-robust-scenarios}, and there are $n_x+1$ decision variables, i.e., $\bm{x}$ and $\gamma$.
\end{proof}

\begin{proof}[Proof of Proposition \ref{prop:invariant_set_singleton}]
When $\bm{\delta}^*$ is the unique solution to $\arg \max_{\bm{\delta} \in \mathcal{K}} f(\bm{\delta})$, simply removing $\bm{\delta}^*$ from $\mathcal{K}$ will leads to smaller optimal objective
\begin{equation}
\max_{\bm{\delta} \in \mathcal{K}} f(\bm{\delta}) > \max_{\bm{\delta} \in \mathcal{K} - \bm{\delta}^* } f(\bm{\delta})
\end{equation}
Thus $\bm{\delta}^*$ is clearly a support scenario. Also note that removing any other scenarios in $\mathcal{K} - \bm{\delta}^*$ will not change the optimal solution to $\max_{\bm{\delta} \in \mathcal{K}} f(\bm{\delta})$. This concludes the proof.
\end{proof}

\subsection{Algorithms} % (fold)
\label{sub:algorithms}

\begin{algorithm}[H]
\begin{algorithmic}[1]
\STATE set $\text{LB}=-\infty$,  $\text{UB}=+\infty$, $\mathcal{K} := \{\bm{\delta}^{(1)},\bm{\delta}^{(2)},\cdots,\bm{\delta}^{(K)}\}$, $\mathcal{O}=\emptyset$ and $k=0$.
\WHILE{$\text{UB}-\text{LB} \le \varepsilon$}
\STATE solve the master problem \eqref{master-problem},
\begin{subequations}
\label{master-problem}
\begin{align}
\min _{\gamma,\bm{x} \in \mathcal{X} }~ &\bm{c}^{\intercal} \bm{x}+ \gamma\\
\text{s.t.}~&\bm{d}^{\intercal} \bm{y}^{(l)} \leq \gamma,~l=1,2,\cdots,k \\
& \bm{y}^{(l)}\in \mathcal{Y}(\bm{x}, \bm{\delta}^{*{(l)}}),~l=1,2,\cdots,k
% &  \nonumber
\end{align}
\end{subequations}
obtain an optimal solution $(\bm{x}_{k+1}^*, \gamma_{k+1}^*, \bm{y}^{1*},\cdots,\bm{y}^{k*})$.
\STATE update $\text{LB} \leftarrow \bm{c}^{\intercal} \bm{x}^*_{k+1}+\gamma^*_{k+1}$.
\STATE solve the sub-problem \eqref{sub-problem} with $\bm{x}_{k+1}^*$,
\begin{equation}
\label{sub-problem}
  \max_{\bm{\delta} \in \mathcal{K} } \min _{\bm{y} \in \mathcal{Y}(\bm{x}_{k+1}^*, \bm{\delta})} \bm{d}^{\intercal} \bm{y} 
\end{equation}

\STATE if \eqref{sub-problem} is feasible, obtain the optimal solution $\bm{y}_{k+1}^*$ and $\bm{\delta}^{*(k+1)}$. Create variable $\bm{y}^{(k+1)}$, add the following constraints to \eqref{master-problem}.
\begin{subequations}
\label{feasible-problem}
\begin{align}
&\bm{d}^{\intercal} \bm{y}^{(k+1)} \leq \gamma \\
& \bm{y}^{(k+1)}\in \mathcal{Y}(\bm{x}, \bm{\delta}^{*{(k+1)}})
% &  \nonumber
\end{align}
\end{subequations}

\STATE if \eqref{sub-problem} is infeasible, $\bm{c}^{\intercal} \bm{x}^*_{k+1}+\bm{d}^{\intercal} \bm{y_{k+1}}^*=+\infty$. Create variable $\bm{y}^{(k+1)}$, add the following constraints to \eqref{master-problem}.
\begin{equation}
\label{infeasible-problem}
 \bm{y}^{(k+1)}\in \mathcal{Y}(\bm{x}, \bm{\delta}^{*{(k+1)}})
% &  \nonumber
\end{equation}
where $\bm{\delta}^{*{(k+1)}}$ is the identified scenario for the infeasible \eqref{sub-problem}.
\STATE update $\text{UB} \leftarrow \min(\text{UB},\bm{c}^{\intercal} \bm{x}^*_{k+1}+\bm{d}^{\intercal} \bm{y_{k+1}}^*)$. 
% \STATE \XG{why need this step?} We can solve \eqref{subsub-problem} for each scenario $\delta$ in $\Delta$ to find $\delta^{*(k+1)}$ leading to the maximum cost.
% \begin{equation}
% \label{subsub-problem}
%    \min _{\bm{y} \in \mathcal{Y}(\bm{x^*_{k+1}}, \bm{\delta})} \bm{d}^{\intercal} \bm{y}_{k+1};
% \end{equation}
\STATE update $k$ and $\mathcal{O}$ as $\mathcal{O}\leftarrow\{\mathcal{O},\delta^{*(k+1)}\}$, $k \leftarrow k+1$;
\ENDWHILE
\RETURN $\bm{x^*}$, $\bm{y}^*$ and $\mathcal{O}$.
% \FOR{$i \in \mathcal{N}$}
% \STATE Solve the scenario problem $\text{SP}_{\mathcal{N}-i}$ and compute $x_{\mathcal{N}-i}^*$.
% \IF{$c^\intercal x_{\mathcal{N}-i}^* < c^\intercal x_{\mathcal{N}}^*$} 
%   \STATE $ \mathcal{S} \leftarrow \mathcal{S} + i$.
% \ENDIF
% \ENDFOR
\end{algorithmic}
\caption{column-and-constraint generation (C\&CG) \cite{zeng2013solving}}
\label{alg:CCG}
\end{algorithm}

\begin{algorithm}
 \caption{Find an Essential Set for the Robust Storage Planning}
 \label{alg:irr}
 \begin{algorithmic}[1]
 \renewcommand{\algorithmicrequire}{\textbf{Input:}}
 \renewcommand{\algorithmicensure}{\textbf{Output:}}
 \REQUIRE  $\mathbf{\Delta}=\mathcal{K}=\{\delta^{(1)},...,\delta^{(K)}\}$;
 %\ENSURE  out
 \\ \textit{Find an Invariant Set} :
  \STATE {Solve (c-RSP) or (nc-RSP) with scenario-based uncertainty set $\mathcal{K}$ by C\&CG algorithm, and obtain the set of scenarios  in the master problem $\mathcal{O}$}.
  \STATE {Delete the repetitive scenarios $\delta^{*}$ in $\mathcal{O}$ , and obtain an Invariant Set $\mathcal{O}=\{\delta^{(i_1)},...,\delta^{(i_{|\mathcal{O}|})}\}$}
   \\ \textit{Find an Essential Set}
  \STATE {$\mathcal{I}_{\mathcal{K}}\leftarrow\mathcal{O}$}; solve (c-RSP) or (nc-RSP) with the uncertainty set $\mathcal{O}$ and obtain the solution $x_{\mathcal{O}^{*}}$;
  \FOR {o=1 to $|\mathcal{O}|$} 
     \STATE      $\overline{\mathcal{I}}_{\mathcal{K}}= \mathcal{I}_{\mathcal{K}}\text{\textbackslash} ~ \delta^{(i_o)}$; solve (c-RSP) or (nc-RSP) with the uncertainty set $\overline{\mathcal{I}}_{\mathcal{K}}$ and obtain the solution $x_{\overline{\mathcal{I}}_{\mathcal{K}}}^{*}$ .
  \IF {$x_{\overline{\mathcal{I}}_{\mathcal{K}}}^{*}=x_{\mathcal{O}}^{*}$} 
  \STATE {$\mathcal{I}_{\mathcal{K}}\leftarrow{\overline{\mathcal{I}}_{\mathcal{K}}}$}.
 \ENDIF
 \ENDFOR
 \RETURN {Essential Set $\mathcal{I}_{\mathcal{K}}$ and its cardinality $|\mathcal{I}_{\mathcal{K}}|$. }
 \end{algorithmic} 
 \end{algorithm}

\fi

\ifx\version\arxiv
\bibliographystyle{elsarticle-num}
\else
\bibliographystyle{elsarticle-num}
\fi
\bibliography{references}

\begin{thebibliography}{10}
\expandafter\ifx\csname url\endcsname\relax
  \def\url#1{\texttt{#1}}\fi
\expandafter\ifx\csname urlprefix\endcsname\relax\def\urlprefix{URL }\fi
\expandafter\ifx\csname href\endcsname\relax
  \def\href#1#2{#2} \def\path#1{#1}\fi

\bibitem{DOE-storage-2013}
{Grid} {energy} {storage}, Tech. rep., Department of Energy (2013).

\bibitem{us_department_of_energy_doe_nodate}
{US Department of Energy},
  \href{https://www.sandia.gov/ess-ssl/global-energy-storage-database-home/}{{DOE}
  {OE} {Global} {Energy} {Storage} {Database}}.
\newline\urlprefix\url{https://www.sandia.gov/ess-ssl/global-energy-storage-database-home/}

\bibitem{WM-2020}
Global energy storage outlook: H2 2020, Tech. rep., Wood Mackenzie (2020).

\bibitem{planning_book}
A.~Conejo, L.~Baringo, J.~Kazempour, A.~Siddiqui, Investment in Electricity
  Generation and Transmission, 2016.
\newblock \href {https://doi.org/10.1007/978-3-319-29501-5}
  {\path{doi:10.1007/978-3-319-29501-5}}.

\bibitem{planning_review_2017}
V.~Oree, S.~Z.~S. Hassen, P.~J. Fleming, Generation expansion planning
  optimisation with renewable energy integration: A review, Renewable and
  Sustainable Energy Reviews 69 (2017) 790 -- 803.

\bibitem{transmission_wind}
L.~{Baringo}, A.~J. {Conejo}, Transmission and wind power investment, IEEE
  Transactions on Power Systems 27~(2) (2012) 885--893.

\bibitem{trans_sequential_planning}
H.~{Park}, R.~{Baldick}, Transmission planning under uncertainties of wind and
  load: Sequential approximation approach, IEEE Transactions on Power Systems
  28~(3) (2013) 2395--2402.

\bibitem{trans_planning_load_wind}
H.~{Park}, R.~{Baldick}, D.~P. {Morton}, A stochastic transmission planning
  model with dependent load and wind forecasts, IEEE Transactions on Power
  Systems 30~(6) (2015) 3003--3011.

\bibitem{fully_renewable_energy}
R.~{Domínguez}, A.~J. {Conejo}, M.~{Carrión}, Toward fully renewable electric
  energy systems, IEEE Transactions on Power Systems 30~(1) (2015) 316--326.

\bibitem{interval_minimum_load}
P.~{Wu}, H.~{Cheng}, J.~{Xing}, The interval minimum load cutting problem in
  the process of transmission network expansion planning considering
  uncertainty in demand, IEEE Transactions on Power Systems 23~(3) (2008)
  1497--1506.

\bibitem{robust_transmission_planning}
C.~Ruiz, A.~Conejo, Robust transmission expansion planning, European Journal of
  Operational Research 242~(2) (2015) 390 -- 401.

\bibitem{stochastic_robust_planning}
L.~{Baringo}, A.~{Baringo}, A stochastic adaptive robust optimization approach
  for the generation and transmission expansion planning, IEEE Transactions on
  Power Systems 33~(1) (2018) 792--802.

\bibitem{operational_robust_planning}
F.~{Verástegui}, {Á. Lorca}, D.~E. {Olivares}, M.~{Negrete-Pincetic},
  P.~{Gazmuri}, An adaptive robust optimization model for power systems
  planning with operational uncertainty, IEEE Transactions on Power Systems
  34~(6) (2019) 4606--4616.

\bibitem{adaptive_robust_2020}
Z.~{Liang}, H.~{Chen}, S.~{Chen}, Y.~{Wang}, C.~{Zhang}, C.~{Kang}, Robust
  transmission expansion planning based on adaptive uncertainty set
  optimization under high-penetration wind power generation, IEEE Transactions
  on Power Systems (2020) 1--1.

\bibitem{campi_exact_2008}
M.~C. Campi, S.~Garatti, The exact feasibility of randomized solutions of
  uncertain convex programs, SIAM Journal on Optimization (2008).

\bibitem{calafiore_random_2010}
G.~C. Calafiore, Random convex programs, SIAM Journal on Optimization (2010).

\bibitem{campi_wait-and-judge_2016}
M.~Campi, S.~Garatti, Wait-and-judge scenario optimization, Mathematical
  Programming (2016) 1--35.

\bibitem{campi_general_2018}
M.~C. Campi, S.~Garatti, F.~A. Ramponi, A general scenario theory for
  non-convex optimization and decision making, IEEE Transactions on Automatic
  Control (2018).

\bibitem{reserve_security_framework}
M.~{Vrakopoulou}, K.~{Margellos}, J.~{Lygeros}, G.~{Andersson}, A probabilistic
  framework for reserve scheduling and ${\rm n}-1$ security assessment of
  systems with high wind power penetration, IEEE Transactions on Power Systems
  28~(4) (2013) 3885--3896.

\bibitem{scenario_dispatch}
M.~S. {Modarresi}, L.~{Xie}, M.~C. {Campi}, S.~{Garatti}, A.~{Carè}, A.~A.
  {Thatte}, P.~R. {Kumar}, Scenario-based economic dispatch with tunable risk
  levels in high-renewable power systems, IEEE Transactions on Power Systems
  34~(6) (2019) 5103--5114.

\bibitem{scenario_demand_response}
H.~{Ming}, L.~{Xie}, M.~C. {Campi}, S.~{Garatti}, P.~R. {Kumar}, Scenario-based
  economic dispatch with uncertain demand response, IEEE Transactions on Smart
  Grid 10~(2) (2019) 1858--1868.

\bibitem{scenario_unit}
X.~{Geng}, L.~{Xie}, Chance-constrained unit commitment via the scenario
  approach, in: 2019 North American Power Symposium (NAPS), 2019, pp. 1--6.

\bibitem{geng_computing_2019}
X.~Geng, L.~Xie, M.~S. Modarresi, Computing {Essential} {Sets} for {Convex} and
  {Non}-convex {Scenario} {Problems}: {Theory} and {Application}, arXiv
  preprint arXiv:1910.07672 (2019).

\bibitem{bertsimas_tractable_2006}
D.~Bertsimas, M.~Sim, Tractable approximations to robust conic optimization
  problems, Mathematical Programming 107~(1-2) (2006) 5--36.

\bibitem{ben-tal_robust_2009}
A.~Ben-Tal, L.~El~Ghaoui, A.~Nemirovski, Robust optimization, Princeton
  University Press, 2009.

\bibitem{yang_hybrid_2013}
P.~Yang, A.~Nehorai, Hybrid energy storage and generation planning with large
  renewable penetration, in: 2013 5th {IEEE} {International} {Workshop} on
  {Computational} {Advances} in {Multi}-{Sensor} {Adaptive} {Processing}
  ({CAMSAP}), IEEE, 2013, pp. 460--463.

\bibitem{yang_joint_2014}
P.~Yang, A.~Nehorai, Joint optimization of hybrid energy storage and generation
  capacity with renewable energy, IEEE Transactions on Smart Grid 5~(4) (2014)
  1566--1574.

\bibitem{yan_scenario-based_2019}
C.~Yan, X.~Geng, L.~Xie, Z.~Bie, A {Scenario}-based {Storage} {Planning}
  {Framework} with {Probabilistic} {Guarantees}, in: 2019 {IEEE} {Sustainable}
  {Power} and {Energy} {Conference} ({iSPEC}), IEEE, 2019, pp. 2341--2346.

\bibitem{zeng2013solving}
B.~Zeng, L.~Zhao, Solving two-stage robust optimization problems using a
  column-and-constraint generation method, Operations Research Letters 41~(5)
  (2013) 457--461.

\bibitem{gorissen_practical_2015}
B.~L. Gorissen, I.~Yankoglu, D.~den Hertog, A practical guide to robust
  optimization, Omega 53 (2015) 124--137.

\bibitem{yan_two-stage_2021}
C.~Yan, X.~Geng, Z.~Bie, L.~Xie, Two-stage {Robust} {Energy} {Storage}
  {Planning} with {Probabilistic} {Guarantees}: {A} {Data}-driven {Approach},
  arXiv preprint arXiv:2103.16424 (2021).

\bibitem{xu_scalable_2017}
B.~Xu, Y.~Wang, Y.~Dvorkin, R.~Fernández-Blanco, C.~A. Silva-Monroy, J.-P.
  Watson, D.~S. Kirschen, Scalable planning for energy storage in energy and
  reserve markets, IEEE Transactions on Power Systems 32~(6) (2017) 4515--4527.

\bibitem{hsia_energy_2019}
E.~Hsia,
  \href{https://insidelines.pjm.com/energy-storage-in-pjm-a-perspective/}{Energy
  {Storage} in {PJM}: {A} {Perspective}} (2019).
\newline\urlprefix\url{https://insidelines.pjm.com/energy-storage-in-pjm-a-perspective/}

\bibitem{iso_new_england_2020_2020}
{ISO New England},
  \href{https://www.iso-ne.com/static-assets/documents/2020/02/2020_reo.pdf}{2020
  {Regional} {Electricity} {Outlook}}, Tech. rep. (2020).
\newline\urlprefix\url{https://www.iso-ne.com/static-assets/documents/2020/02/2020_reo.pdf}

\bibitem{storage_size_site}
R.~{Fernández-Blanco}, Y.~{Dvorkin}, B.~{Xu}, Y.~{Wang}, D.~S. {Kirschen},
  Optimal energy storage siting and sizing: A wecc case study, IEEE
  Transactions on Sustainable Energy 8~(2) (2017) 733--743.

\bibitem{storage_technology}
T.~{Sayfutdinov}, H.~{Patsios}, P.~{Vorobev}, E.~{Gryazina}, D.~M. {Greenwood},
  J.~W. {Bialek}, P.~C. {Taylor}, Degradation and operation-aware framework for
  the optimal siting, sizing and technology selection of battery storage, IEEE
  Transactions on Sustainable Energy (2019) 1--1.

\bibitem{stochastic_storage_planning}
T.~{Qiu}, B.~{Xu}, Y.~{Wang}, Y.~{Dvorkin}, D.~S. {Kirschen}, Stochastic
  multistage coplanning of transmission expansion and energy storage, IEEE
  Transactions on Power Systems 32~(1) (2017) 643--651.

\bibitem{robust_storage_in}
R.~A. {Jabr}, I.~{Džafić}, B.~C. {Pal}, Robust optimization of storage
  investment on transmission networks, IEEE Transactions on Power Systems
  30~(1) (2015) 531--539.

\bibitem{chance_storage_planning}
H.~{Akhavan-Hejazi}, H.~{Mohsenian-Rad}, Energy storage planning in active
  distribution grids: A chance-constrained optimization with non-parametric
  probability functions, IEEE Transactions on Smart Grid 9~(3) (2018)
  1972--1985.

\bibitem{data_scenario}
M.~Sun, J.~Cremer, G.~Strbac, A novel data-driven scenario generation framework
  for transmission expansion planning with high renewable energy penetration,
  Applied Energy 228 (2018) 546 -- 555.

\bibitem{storage_sufficient_condition}
Z.~{Li}, Q.~{Guo}, H.~{Sun}, J.~{Wang}, Sufficient conditions for exact
  relaxation of complementarity constraints for storage-concerned economic
  dispatch, IEEE Transactions on Power Systems 31~(2) (2016) 1653--1654.

\bibitem{storage_sizing_2015}
S.~{Wogrin}, D.~F. {Gayme}, Optimizing storage siting, sizing, and technology
  portfolios in transmission-constrained networks, IEEE Transactions on Power
  Systems 30~(6) (2015) 3304--3313.

\bibitem{storage_deg_2020}
T.~{Sayfutdinov}, C.~{Patsios}, P.~{Vorobev}, E.~{Gryazina}, D.~M. {Greenwood},
  J.~W. {Bialek}, P.~C. {Taylor}, Degradation and operation-aware framework for
  the optimal siting, sizing, and technology selection of battery storage, IEEE
  Transactions on Sustainable Energy 11~(4) (2020) 2130--2140.

\bibitem{calafiore_uncertain_2005}
G.~Calafiore, M.~C. Campi, Uncertain convex programs: randomized solutions and
  confidence levels, Mathematical Programming (2005).

\bibitem{data_driven_planning}
A.~{Bagheri}, J.~{Wang}, C.~{Zhao}, Data-driven stochastic transmission
  expansion planning, IEEE Transactions on Power Systems 32~(5) (2017)
  3461--3470.

\bibitem{boyd_convex_2004}
S.~Boyd, L.~Vandenberghe, Convex optimization, Cambridge University Press,
  2004.

\bibitem{geng_data-driven_2019-2}
X.~Geng, L.~Xie, Data-driven decision making in power systems with
  probabilistic guarantees: {Theory} and applications of chance-constrained
  optimization, Annual Reviews in Control (2019).
\newblock \href
  {https://doi.org/http://doi.org/10.1016/j.arcontrol.2019.05.005}
  {\path{doi:http://doi.org/10.1016/j.arcontrol.2019.05.005}}.

\bibitem{beck_duality_2009}
A.~Beck, A.~Ben-Tal, Duality in robust optimization: primal worst equals dual
  best, Operations Research Letters 37~(1) (2009) 1--6.

\bibitem{soyster_unifying_2013}
A.~Soyster, F.~Murphy, A unifying framework for duality and modeling in robust
  linear programs, Omega 41~(6) (2013) 984--997, publisher: Elsevier.

\end{thebibliography}
% that's all folks

\end{document}